\newcommand{\dd}[2]{\frac{\mathrm{d} #1}{\mathrm{d} #2}}
\newcommand{\p}{\partial}
\newcommand{\const}{\mathop{\rm const}\nolimits}
\newcommand{\todo}[1][\null]{\ensuremath{\clubsuit}}
\newcommand{\noprint}[1]{}
\newcommand{\checked}[1][\null]{\ensuremath{\boldsymbol{\surd}}}
\newtheorem{theorem}{Theorem}
\newtheorem{lemma}{Lemma}
\newtheorem{corollary}{Corollary}
\newtheorem{proposition}{Proposition}
{\theoremstyle{definition}
\newtheorem{definition}{Definition}

\newtheorem{remark}{Remark}
\newtheorem*{remark*}{Remark}
}
\newcommand{\DD}{\mathrm{D}}
\newcommand{\vv}{\mathbf{v}}
\newcommand{\ww}{\mathbf{w}}
\newcommand{\ve}{\varepsilon}
\newcommand{\ad}{\mathrm{ad}}
\newcommand{\FF}{\mathcal{F}}
\newcommand{\GG}{\mathcal{G}}
\newcommand{\PP}{\mathcal{P}}
\newcommand{\DDD}{\mathcal{D}}
\begin{document}

\par\noindent {\LARGE\bf
Complete group classification of a class\\ of nonlinear wave equations
\par}

{\vspace{4mm}\par\noindent {\bf Alexander Bihlo$^\dag$, Elsa Dos Santos Cardoso--Bihlo$^\dag$ and Roman O. Popovych$^\dag\, ^\ddag$
} \par\vspace{2mm}\par}

{\vspace{2mm}\par\noindent {\it
$^{\dag}$~Faculty of Mathematics, University of Vienna, Nordbergstra{\ss}e 15, A-1090 Vienna, Austria\\
}}
{\noindent \vspace{2mm}{\it
$\phantom{^\dag}$~\textup{E-mail}: alexander.bihlo@univie.ac.at, elsa.cardoso@univie.ac.at
}\par}

{\vspace{2mm}\par\noindent {\it
$^\ddag$~Institute of Mathematics of NAS of Ukraine, 3 Tereshchenkivska Str., 01601 Kyiv, Ukraine\\
}}
{\noindent \vspace{2mm}{\it
$\phantom{^\dag}$~\textup{E-mail}: rop@imath.kiev.ua
}\par}

\vspace{2mm}\par\noindent\hspace*{8mm}\parbox{140mm}{\small
Preliminary group classification became prominent as an approach to symmetry analysis of differential equations due to the paper by Ibragimov, Torrisi and Valenti [J.\ Math.\ Phys.\ {\bf 32}(11), 2988--2995] 
in which partial preliminary group classification of a class of nonlinear wave equations was carried out 
via the classification of one-dimensional Lie symmetry extensions related to a fixed finite-dimensional subalgebra of the infinite-dimensional equivalence algebra of the class under consideration.
In the present paper we implement, up to both usual and general point equivalence, the complete group classification of the same class using the algebraic method of group classification. 
This includes the complete preliminary group classification of the class and finding Lie symmetry extensions 
which are not associated with subalgebras of the equivalence algebra.
The complete preliminary group classification is based on listing all inequivalent subalgebras of the whole infinite-dimensional equivalence algebra whose projections are qualified as maximal extensions of the kernel algebra.
The set of admissible point transformations of the class is exhaustively described in terms of the partition of the class into normalized subclasses. 
A version of the algebraic method for finding the complete equivalence groups of a general class of differential equations is proposed. 
}\par\vspace{2mm}

\section{Introduction}

The method of preliminary group classification was first introduced in Ref.~\cite{akha91Ay} and became well-known due to Ref.r~\cite{ibra91Ay}. 
In the latter paper, partial preliminary group classification was carried out for the class of equations of the form
\begin{equation}\label{eq:IbragimovClass}
u_{tt}=f(x,u_x)u_{xx}+g(x,u_x),
\end{equation}
where $f\ne0$.
The essence of the approach applied in~\cite{ibra91Ay} is given by the classification of one-dimensional extensions of the kernel algebra with respect to a fixed finite-dimensional subalgebra of the infinite-dimensional equivalence algebra of the class studied. 

The symmetry analysis of the same class was continued in a number of papers. 
The interest to such studies is stimulated because equations of the form~\eqref{eq:IbragimovClass} are used as mathematical models of different continuous media.
They arise, e.g., in the theory of elasticity, in particular, in the course of modeling of hyperelastic homogeneous rods~\cite{jeff82Ay}. 

Thus, in~\cite{hari93Ay} the partial preliminary group classification of the class~\eqref{eq:IbragimovClass} with respect to one-dimensional subalgebras of an infinite-dimensional subalgebra of the equivalence algebra was considered. Second order differential invariants of the equivalence algebra were computed in~\cite{ibra04Ay}. Another direction of investigation for the class~\eqref{eq:IbragimovClass} was initiated in~\cite{ibra00Ay}. Instead of equations of the form~\eqref{eq:IbragimovClass}, related systems of two equations, where the first derivatives of $u$ play the role of the dependent variables, were considered and mapped to the form $v_t = a(x,v)w_x$ and $w_t = b(x,v)v_x$. For the class of systems of this form, certain properties were investigated within the framework of symmetry analysis, including the computation of the equivalence and kernel algebras and the compatibility analysis of the determining equations for Lie symmetries. Upper bounds for the dimension of Lie symmetry extensions were established for the two cases which arose. This study was completed in~\cite{khab09Ay} via exhaustive group classification of such systems by the algebraic method.

A comprehensive review of the literature on group analysis of different classes of $(1+1)$-dimensional wave equations was presented in~\cite{lahn06Ay}. 
Some of these classes are contained in the class~\eqref{eq:IbragimovClass} or nontrivially intersect it. 
In particular, the simple subclasses of~\eqref{eq:IbragimovClass} singled out by the constraints $f_x=g=0$ and $f_x=g_x=0$ 
were considered in~\cite{oron86Ay} and~\cite{gand04Ay}, respectively.   
The class~\eqref{eq:IbragimovClass} has also a subclass common with 
the class of nonlinear wave equations of the general form $u_{tt}=u_{xx}+F(t,x,u,u_x)$, 
whose Lie symmetries were exhaustively investigated by the algebraic method in~\cite{lahn06Ay}.
The intersection obviously consists of equations of the form~\eqref{eq:IbragimovClass} with $f=1$.
Any equation of the form~\eqref{eq:IbragimovClass} is a potential equation for the wave equation of another form, 
$v_{tt}=(f(x,v)v_x+g(x,v))_x$, also called the nonlinear telegraph equation~\cite{blum10Ay,huan07Ay}.

Following the paper~\cite{ibra91Ay}, several classes of differential equations were investigated within the framework of preliminary group classification. Given a class of differential equations, this approach in its essence rests on computing optimal lists of inequivalent subalgebras of the associated equivalence algebra and studying the Lie symmetry extensions induced by these subalgebras. While in the majority of papers on this subject, including Ref.~\cite{ibra91Ay}, only symmetry extensions by means of inequivalent subalgebras of a fixed finite-dimensional subalgebra of a possibly infinite-dimensional equivalence algebra are considered, we have shown in~\cite{card11Ay} that this restriction is in fact not necessary. Stated in another way, there is no obstacle in studying extensions induced by subalgebras of the whole (infinite-dimensional) equivalence algebra. This is, what we have called the complete preliminary group classification in opposite to the various partial preliminary group classifications, which were carried out e.g.\ in~\cite{ibra91Ay,song09Ay}.
As an example, in~\cite{card11Ay} we have solved the complete preliminary group classification problem for the class of nonlinear diffusion equations of the general form $ u_t = f(x,u)u_x^2+g(x,u)u_{xx}$. 

Moreover, in case when the class is normalized (at least in the weak sense~\cite{popo10Cy,popo10Ay}) the same approach gives at once the complete group classification, cf.\ Section~\ref{sec:AlgMethodOfGroupClassification}. This fact was implicitly used in various instances. The most classical examples for this finding are Lie's classifications of second order ordinary differential equations~\cite{lie91Ay} and of second order two-dimensional linear partial differential equations~\cite{lie81Ay}. For numerous modern examples see, e.g.~\cite{basa01Ay,lagn02Ay,lahn06Ay,lahn07Ay,popo04By,wint92Ay,zhda99Ay} and references therein. The technique of group classification explicitly based on the notion of normalized classes of differential equations was developed in~\cite{popo06Ay,popo10Cy,popo10Ay} and then applied to different classes of Schr\"odinger equations, generalized vorticity equations, generalized Korteweg--de Vries equations, etc. 
All the above techniques can be interpreted as particular versions of the algebraic method. 

The purpose of the present paper is to systematically carry out the preliminary group classification of the class of differential equations~\eqref{eq:IbragimovClass} in a similar fashion as in~\cite{card11Ay} and thereby to exhaustively solve the complete group classification problem for this class of nonlinear wave equations using the partition into normalized subclasses. 
The version of the algebraic method applied in the present paper differs from the Lahno--Zhdanov approach~\cite{basa01Ay,lagn02Ay,lahn06Ay,lahn07Ay,zhda99Ay} 
as it does not involve the classification of low-dimensional Lie algebras but is rather based on classifications of all appropriate subalgebras of the corresponding equivalence algebra. 
Note that the consideration is local throughout the paper.

In order to guarantee nonlinearity of equations of the form~\eqref{eq:IbragimovClass}, we explicitly include the nonvanishing condition 
\[(f_{u_x},g_{u_xu_x})\ne(0,0)\] 
into the definition of the class to be studied. 
The reason why we are only concerned with the nonlinear case here is that nonlinear and linear equations of the form~\eqref{eq:IbragimovClass} 
are not mixed by point transformations (cf.\ Remark~\ref{rem:OnInequivOfLinAndNonlinCasesOfIbragimovClass}) 
and have quite different Lie symmetry properties. 
Moreover, linear wave equations of the form~\eqref{eq:IbragimovClass} were already well investigated within the framework of classical symmetry analysis in~\cite{blum89Ay,ovsj62Ay}.

The further organization of this paper is the following: 
Theoretical background of point transformations in classes of differential equations is reviewed in Section~\ref{sec:PointTransInClassesOfDiffEqs}.
This includes the definitions and properties of a class of differential equations, its subclasses, the set of admissible transformations, the usual equivalence group and algebra, 
different notions of normalized classes of differential equations, etc. 
Section~\ref{sec:AlgMethodOfGroupClassification} contains a concise description of the group classification problem together with a discussion on the theory of preliminary group classification 
and complete group classification with the algebraic method. 
Group analysis of the class~\eqref{eq:IbragimovClass} is started in Section~\ref{sec:EquivalenceAlgebraIbragimovClass} by studying the structure of the equivalence algebra of~\eqref{eq:IbragimovClass}. 
The computation of the determining equations for admissible transformations and the equivalence group of the class~\eqref{eq:IbragimovClass} 
by the direct method is given in Section~\ref{sec:PreliminaryStudyOfAdmTrans} and Section~\ref{sec:PointTransInClassesOfDiffEqs}, respectively. 
The algebraic method for the calculation of equivalence groups is first presented and then applied 
to the class~\eqref{eq:IbragimovClass} in Section~\ref{sec:CalculationOfEquivAlgebraByAlgebraicMethod}.
Analysis of the determining equations for Lie symmetries of equations from the class~\eqref{eq:IbragimovClass} is presented in Section~\ref{sec:DetEqsForLieSymsIbragrimovClass}. 
It gives the kernel algebra of this class and allows us to prove that the major subclass of the class~\eqref{eq:IbragimovClass} is weakly normalized with respect to the equivalence algebra of~\eqref{eq:IbragimovClass}. 
The group classification of the complement of the subclass is also carried out. 
Completing the study of admissible transformations by the direct method, in Section~\ref{sec:SetOfAdmTrans} we partition the class~\eqref{eq:IbragimovClass} into 
two subclasses, which are respectively normalized and semi-normalized with respect to the equivalence group of the entire class~\eqref{eq:IbragimovClass}.
In this way we prove that the class~\eqref{eq:IbragimovClass} is semi-normalized. 
Both the equivalence algebra and the equivalence group are used in Section~\ref{sec:ClassificationSubalgebrasIbragimovClass} to classify subalgebras of the equivalence algebra that may be used for preliminary group classification. The adjoint action of the equivalence group on the associated algebra is computed using push-forwards of vector fields, as it was recently proposed in~\cite{card11Ay}. 
The final calculations related to the complete group classification in the class~\eqref{eq:IbragimovClass} and the corresponding list of inequivalent Lie symmetry extensions can be found in Section~\ref{sec:GroupClassificationIbragimovClass}. 
In Section~\ref{sec:ConclusionIbragimovClass} we briefly sum up the results of the present paper and 
make comparative analysis of partial preliminary group classification, complete preliminary group classification and complete group classification
within the framework of the general algebraic method.

\section{Point transformations in classes of differential equations}\label{sec:PointTransInClassesOfDiffEqs}

To make this paper self-contained, in this and in the next sections we restate some important notions from the theory of group classification. More information on this subject can be found, e.g.\ in~\cite{lisl92Ay,ovsi82Ay,popo10Cy,popo10Ay}.

The central notion underlying the theory of group classification is an appropriate definition of a class of (systems of) differential equations. In practice, the structure and properties of a class of differential equations determines which methods of group classification (e.g.\ complete vs.\ preliminary, direct vs.\ algebraic) are the most adapted for it. In short, the definition of a class of differential equations comprises two ingredients. 

The first ingredient is a system of differential equations $\mathcal L_\theta$: $L(x,u_{(p)},\theta_{(q)}(x,u_{(p)}))=0$, parameterized by the tuple of arbitrary elements 
$\theta(x,u_{(p)}) = (\theta^1(x,u_{(p)}),\dots,\theta^k(x,u_{(p)}))$, 
where $x=(x^1,\dots, x^n)$ is the tuple of independent variables 
and~$u_{(p)}$ is the set of all dependent variables~$u=(u^1,\dots,u^m)$ together with all derivatives of~$u$ with respect to~$x$ up to the order~$p$. 
The symbol~$\theta_{(q)}$ stands for the set of partial derivatives of $\theta$ of order not greater than $q$ with respect to the variables $x$ and $u_{(p)}$. 

\looseness=-1
The second ingredient 
concerns possible values of the tuple of arbitrary elements~$\theta$. This tuple is required to run through the solution set~$\mathcal S$ of a joint system (also denoted by~$\mathcal S$) of auxiliary differential equations~$S(x,u_{(p)},\theta_{(q')}(x,u_{(p)}))=0$ and inequalities $\Sigma(x,u_{(p)},\theta_{(q')}(x,u_{(p)}))\ne0$, in which both~$x$ and~$u_{(p)}$ play the role of independent variables and $S$ and~$\Sigma$ are tuples of smooth functions depending on~$x$, $u_{(p)}$ and $\theta_{(q')}$. 
The nonvanishing conditions~$\Sigma\ne0$ might be essential to guarantee that each element of the class has some common properties with all other elements of the same class, such as the same order $p$ or the same linearity or nonlinearity properties.  Thereby, these inequalities can be the crucial factor in order to solve the given group classification problem up to a certain stage. In spite of this, they are often omitted without any reason.

\begin{definition}
The set $\{\mathcal L_\theta\mid\theta\in\mathcal S\}$ denoted by~$\mathcal L|_{\mathcal S}$ is called a \emph{class of differential equations}
defined by the parameterized form of systems~$\mathcal L_\theta$ and the set~$\mathcal S$ of the arbitrary elements~$\theta$.
\end{definition}

An additional problem in defining a class of differential equations 
is that the correspondence $\theta\to\mathcal L_\theta$ between arbitrary elements and systems 
(treated not as formal algebraic expressions but as real systems of differential equations or manifolds in the jet space~$J^{(p)}$ 
which is the space of the variables $(x,u_{(p)}))$ may not be injective.
The values $\theta$ and $\tilde\theta$ of arbitrary elements are called \emph{gauge-equivalent}
if $\mathcal L_\theta$ and $\mathcal L_{\tilde\theta}$ are the same system of differential equations, i.e., 
their sets of solutions coincide.
We formally consider $\mathcal L_\theta$ and $\mathcal L_{\tilde\theta}$ as different representations
of the same system from $\mathcal L|_{\mathcal S}$.
For the correspondence $\theta\to\mathcal L_\theta$ to be one-to-one in the presence of a nontrivial gauge equivalence,
the set $\mathcal S$ of arbitrary elements should be factorized with respect to the gauge equivalence relation 
via changing the representation for the class~$\mathcal L|_{\mathcal S}$.
If this is not convenient, the gauge equivalence should be carefully taken into account 
when carrying out symmetry analysis of the class~$\mathcal L|_{\mathcal S}$~\cite{ivan10Ay}. 

In the course of group classification of a complicated class of differential equations, it is often helpful to consider subclasses of this class.
A \emph{subclass} is singled out from the class~$\mathcal L|_{\mathcal S}$ by attaching additional equations or nonvanishing conditions to the auxiliary system~$\mathcal S$. 

Thus, for the class of equations of the general form~\eqref{eq:IbragimovClass} we have the single dependent variable~$u$ of two independent variables~$t$ and~$x$. The associated tuple of arbitrary elements consists of two functions~$f$ and~$g$ whose domains are contained in the related second-order jet space, i.e., in the space of~$t$, $x$ and~$u$ together with all derivatives of~$u$ up to the second order. The indicated dependence of~$f$ and~$g$ only on~$x$ and~$u_x$ means that the arbitrary elements of this class are solutions of the auxiliary system of differential equations
\begin{gather*}
f_t=f_u=f_{u_t}=f_{u_{tt}}=f_{u_{tx}}=f_{u_{xx}}=0,\\
g_t=g_u=g_{u_t}=g_{u_{tt}}=g_{u_{tx}}=g_{u_{xx}}=0.
\end{gather*}
As we talk about wave equations, we should also impose the inequality~$f\ne0$.
In the present paper we study the subclass of equations of the form~\eqref{eq:IbragimovClass} that consists of only really nonlinear equations and, therefore, is singled out from the entire class of equations of the general form~\eqref{eq:IbragimovClass} by the additional nonvanishing condition $(f_{u_x},g_{u_xu_x})\ne(0,0)$. It is the set of equations which is called the class~\eqref{eq:IbragimovClass} throughout the paper.

Several properties hold for subclasses of a class~$\mathcal L|_{\mathcal S}$. 
The intersection of a finite number of subclasses of~$\mathcal L|_{\mathcal S}$ is also a subclass in~$\mathcal L|_{\mathcal S}$, 
which is defined by the union of the additional auxiliary systems associated with the intersecting sets. 
At the same time, the complement $\overline{\mathcal L|_{\mathcal S'\!}}=\mathcal L|_{\overline{\mathcal S'\!}\,}$
of the subclass $\mathcal L|_{\mathcal S'\!}$ in the class $\mathcal L|_{\mathcal S}$ 
is also a subclass of $\mathcal L|_{\mathcal S}$ only in special cases, e.g., if
the additional system of equations or the additional system of nonvanishing conditions is empty (cf.\ Remark~\ref{rem:SubclassAndItsComplementOfIbragimovClass}).
Namely, if the subset $\mathcal S'$ of arbitrary elements is singled out from $\mathcal S$ by the system $S'_1=0$, \dots, $S'_{s'}=0$
then the additional auxiliary condition for $\overline{\mathcal S'}$ is $|S'_1|^2+\dots+|S'_{s'}|^2\ne0$.
If $\mathcal S'$ is defined by the inequalities $\Sigma'_1\ne0$, \dots, $\Sigma'_{\sigma'\!}\ne0$
then the additional auxiliary condition for $\overline{\mathcal S'}$ is $\Sigma'_1\cdots\Sigma'_{\sigma'\!}=0$.

A point transformation in a space is an invertible smooth mapping of an open domain in this space into the same domain. 
Given a class~$\mathcal L|_{\mathcal S}$ of differential equations, point transformations related to~$\mathcal L|_{\mathcal S}$ form different structures. 

Let~$\mathcal L_\theta$ and~$\mathcal L_{\tilde \theta}$ be elements of the class~$\mathcal L|_{\mathcal S}$. 
By~$\mathrm T(\theta,\tilde\theta)$ we denote the set of point transformations in the space of the variables~$(x,u)$ mapping the system~$\mathcal L_\theta$ to the system~$\mathcal L_{\tilde\theta}$. 
In this notation, the maximal point symmetry (pseudo)group $G_\theta$ of the system $\mathcal L_\theta$ coincides with~$T(\theta,\theta)$. 
If $\mathrm{T}(\theta,\tilde\theta)\ne\varnothing$, i.e.\ 
the systems~$\mathcal L_\theta$ and $\mathcal L_{\tilde\theta}$ are similar with respect to point transformations, 
then $\mathrm{T}(\theta,\tilde\theta)=\varphi^0\circ G_\theta=G_{\tilde\theta}\circ\varphi^0$,
where $\varphi^0$ is a fixed transformation from~$\mathrm{T}(\theta,\tilde\theta)$.

\begin{definition}\label{def:SetOfAdmTrans}
The \textit{set of admissible transformations} of the class~$\mathcal L|_{\mathcal S}$ is given by 
$\mathrm T=\mathrm T(\mathcal L|_{\mathcal S}) = \{(\theta,\tilde \theta,\varphi)|\ \theta,\tilde \theta\in\mathcal S, \varphi \in \mathrm T(\theta, \tilde \theta)\}$.
\end{definition}

If the number~$m$ of dependent variables is equal to one, instead of point transformations one can consider more general contact transformations~\cite{ovsi82Ay} in the same way. 

The notion of admissible transformations~\cite{popo06Ay,popo10Ay} is a formalization of the notion of  
form-preserving \cite{king98Ay,king01Ay} or allowed~\cite{wint92Ay} transformations.
First descriptions of the sets of admissible transformations for nontrivial classes of differential equations 
were given by Kingston and Sophocleous~\cite{king91Ay} for a class of generalized Burgers equations 
and by Winternitz and Gazeau~\cite{wint92Ay} for a class of variable-coefficient Korteweg--de Vries equations.
An infinitesimal analogue of the notion of admissible transformations was proposed and studied by Borovskikh in~\cite{boro04Ay}.
In terms of equivalence groups and normalization properties of subclasses 
(see Definitions~\ref{def:NormalizationIbragimov} and~\ref{def:SemiNormalizationIbragimov}), 
the sets of admissible transformations were exhaustively described 
for a number of different classes of differential equations which are important for application, such as
nonlinear Schr\"odinger equations \cite{popo04By,popo10Ay},
variable-coefficient diffusion--reaction equations \cite{vane07Ay,vane09Ay},
generalized Korteweg--de Vries equations including variable-coefficient Korteweg--de Vries 
and modified Korteweg--de Vries equations~\cite{popo10By}, 
systems of $(1+1)$-dimensional second-order evolution equations~\cite{popo08Ay}, 
generalized vorticity equations~\cite{popo10Cy}, 
etc.

\begin{definition}\label{def:UsualEquivGroup}
The \emph{(usual) equivalence group} $G^{\sim}=G^{\sim}(\mathcal L|_{\mathcal S})$ of the class~$\mathcal L|_{\mathcal S}$ 
is the (pseudo)group of point transformations in the space of $(x,u_{(p)},\theta)$ which are projectable to the space of $(x,u_{(p')})$ for any $0\le p'\le p$,
are consistent with the contact structure on the space of $(x,u_{(p)})$ and preserve the set~$\mathcal S$ of arbitrary elements.
\end{definition}

Recall that a point transformation~$\varphi$: $\tilde z=\varphi(z)$ in the space of the variables $z=(z_1,\ldots,z_k)$
is called projectable on the space of the variables $z'=(z_{i_1},\ldots,z_{i_{k'}})$,
where $1\le i_1<\cdots<i_{k'}\le k$,
if the expressions for~$\tilde z'$ depend only on~$z'$.
We denote the restriction of~$\varphi$ to the $z'$-space as $\varphi|_{z'}$: $\tilde z'=\varphi|_{z'}(z')$. 
A point transformation~$\Phi$ in the space of $(x,u_{(p)},\theta)$, which is projectable to the space of $(x,u_{(p')})$ for any $0\le p'\le p$, 
is consistent with the contact structure on the space of $(x,u_{(p)})$ if $\Phi|_{(x,u_{(p)})}$ is the $p$-th order prolongation of $\Phi|_{(x,u)}$.

Each transformation~$\Phi$ from the equivalence group~$G^{\sim}$ (i.e., an equivalence transformation of the class~$\mathcal L|_{\mathcal S}$)
induces the family of admissible transformations of the form $(\theta,\Phi\theta,\Phi|_{(x,u)})$ parameterized by the arbitrary elements~$\theta$ 
running through the entire set~$\mathcal S$. 
Roughly speaking, $G^{\sim}$ is the set of admissible transformations which can be applied to any~$\theta\in\mathcal S$. 

There exist several generalizations of the notion of equivalence group in the literature on symmetry analysis of differential equations, 
in which restrictions for equivalence transformations (projectability or locality with respect to arbitrary elements) are weakened
\cite{ivan10Ay,mele94Ay,popo10Ay,vane07Ay,vane09Ay}.

The common part $G^\cap=G^\cap(\mathcal L|_{\mathcal S})=\bigcap_{\theta\in{\mathcal S}}G_\theta$ of
all $G_\theta$, $\theta\in\mathcal S$, is called
the \emph{kernel of the maximal point symmetry groups} of systems from the class $\mathcal L|_{\mathcal S}$ \cite{ovsi82Ay}. 
The following folklore assertion is true (see, e.g., \cite{card11Ay,popo10Ay}). 

\begin{proposition}\label{pro:OnKernelGroupAsANormalSubgroup}
The kernel group $G^\cap$ of the class~$\mathcal L|_{\mathcal S}$
is naturally embedded into the (usual) equivalence group~$G^{\sim}$ of this class via trivial (identical) prolongation of
the kernel transformations to the arbitrary elements.
The associated subgroup $\hat G^\cap$ of $G^{\sim}$ is normal.
\end{proposition}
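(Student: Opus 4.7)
The plan is first to construct the embedding explicitly and verify that its image lies in $G^\sim$, then to check normality by an elementary conjugation argument. Fix $g \in G^\cap$, a point transformation in $(x,u)$-space. I would define $\hat g$ on the space of $(x,u_{(p)},\theta)$ so that it acts as the $p$-th order prolongation of $g$ on the $(x,u_{(p)})$ component and as the identity on the $\theta$ component. By construction, $\hat g$ is projectable to the space of $(x,u_{(p')})$ for every $0 \le p' \le p$, and it is consistent with the contact structure on the space of $(x,u_{(p)})$ precisely because the $(x,u_{(p)})$-part is the prolongation of its $(x,u)$-projection. Preservation of $\mathcal S$ is immediate since $\hat g$ fixes $\theta$. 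Finally, $\hat g$ qualifies as an equivalence transformation because, by the assumption $g \in G^\cap \subseteq G_\theta$, the map $g$ sends $\mathcal L_\theta$ to itself for \emph{every} $\theta \in \mathcal S$, so the induced map on arbitrary elements is $\theta\mapsto\theta$, which trivially leaves $\mathcal S$ invariant.

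The assignment $g \mapsto \hat g$ is a group homomorphism because prolongation commutes with composition of point transformations and the identity on $\theta$ is multiplicative; injectivity is obvious since $g = \hat g|_{(x,u)}$. Hence $\hat G^\cap := \{\hat g \mid g \in G^\cap\}$ is a subgroup of $G^\sim$ isomorphic to $G^\cap$, which is the embedding required.

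For normality, pick any $\Phi \in G^\sim$ and any $\hat g \in \hat G^\cap$, and set $\Psi := \Phi \circ \hat g \circ \Phi^{-1}$. Denote $\varphi := \Phi|_{(x,u)}$. Since $\hat g$ fixes $\theta$ while $\Phi$ and $\Phi^{-1}$ act inversely on the $\theta$-part, $\Psi$ fixes $\theta$ as well. Its $(x,u)$-projection is $h := \varphi \circ g \circ \varphi^{-1}$. To prove $h \in G^\cap$, I would check that $h \in G_\theta$ for an arbitrary $\theta\in\mathcal S$: applying $\Phi^{-1}$ sends $\mathcal L_\theta$ to $\mathcal L_{\theta'}$ with $\theta' := \Phi^{-1}\theta \in \mathcal S$ (this is the defining property of $G^\sim$); then $g \in G^\cap$ is a symmetry of $\mathcal L_{\theta'}$; then $\Phi$ returns us to $\mathcal L_\theta$. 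Thus $h \in G_\theta$ for every $\theta$, so $h \in G^\cap$. Since $\Psi$ is itself projectable and consistent with the contact structure, its $(x,u_{(p)})$-part must coincide with the $p$-th prolongation of $h$, and therefore $\Psi = \hat h \in \hat G^\cap$, establishing normality.

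The argument is essentially bookkeeping; the only step that requires genuine care is verifying that the trivially prolonged map actually satisfies all three conditions in Definition~\ref{def:UsualEquivGroup} (projectability at every intermediate order, contact-compatibility, invariance of $\mathcal S$), and that the conjugation $\Phi \hat g \Phi^{-1}$ preserves the ``trivial on $\theta$'' property. Both reduce immediately to the defining properties of $G^\sim$ and of prolongation, so I do not anticipate any substantive obstacle beyond unwinding the definitions.
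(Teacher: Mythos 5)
The paper does not actually prove this proposition; it is quoted as a folklore assertion with pointers to \cite{card11Ay,popo10Ay} and to Ovsiannikov and Lisle, so there is no in-paper argument to compare against line by line. Your skeleton --- explicit trivial prolongation, verification of the conditions of Definition~\ref{def:UsualEquivGroup}, and normality via the three-step conjugation ($\Phi^{-1}$ carries $\mathcal L_\theta$ to $\mathcal L_{\Phi^{-1}\theta}$, $g$ preserves it, $\Phi$ carries it back) --- is exactly the standard argument from the cited sources, and the conjugation argument showing that the $(x,u)$-projection of $\Phi\circ\hat g\circ\Phi^{-1}$ lies in $G^\cap$ is correct.

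The part you dismiss as bookkeeping is, however, where the only genuine content sits, and your argument does not close there. First, a point transformation that is the identity on the $\theta$-\emph{coordinates} does not induce the identity on the arbitrary elements regarded as \emph{functions} of $(x,u_{(p)})$: it induces $\theta\mapsto\theta\circ g_{(p)}^{-1}$, where $g_{(p)}$ is the prolongation of $g$. So ``preservation of $\mathcal S$ is immediate since $\hat g$ fixes $\theta$'' is not immediate; one must show that $\theta\circ g_{(p)}^{-1}$ again lies in $\mathcal S$ and that $g$ maps $\mathcal L_\theta$ onto $\mathcal L_{\theta\circ g_{(p)}^{-1}}$, and your justification (``$g$ sends $\mathcal L_\theta$ to itself, hence the induced map on arbitrary elements is $\theta\mapsto\theta$'') only yields that the transformed arbitrary element is \emph{gauge-equivalent} to $\theta$, not equal to it. Second, in the normality step, the claim that $\Psi=\Phi\circ\hat g\circ\Phi^{-1}$ fixes $\theta$ does not follow from ``$\Phi$ and $\Phi^{-1}$ act inversely on the $\theta$-part'': writing $\Phi\colon(z,\theta)\mapsto(Z(z),\Theta(z,\theta))$ with $z=(x,u_{(p)})$, the $\theta$-component of $\Psi$ is $\Theta\bigl(g_{(p)}(Z^{-1}(z)),\Theta^{-1}(z,\theta)\bigr)$, which collapses to $\theta$ only if $\Theta$ is invariant under $g_{(p)}$ in its first argument --- not a formal consequence of the definitions. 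For the class~\eqref{eq:IbragimovClass} both points happen to be harmless, because the prolongations of the kernel transformations $(t,x,u)\mapsto(t+\ve_1,x,u+\ve_2+\ve_3t)$ fix the arguments $(x,u_x)$ of $f$ and $g$, so that $\theta\circ g_{(p)}^{-1}=\theta$ and the conjugation does cancel; but a proof of the general statement must make explicit that this is where the hypothesis $g\in G^\cap$ is really used, rather than treating it as an unwinding of definitions.
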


Properties of~$G^\cap$ described in Proposition~\ref{pro:OnKernelGroupAsANormalSubgroup} were first noted in different works by Ovsiannikov (see, e.g., \cite{ovsi75Ay} and \cite[Section~II.6.5]{ovsi82Ay}).
Another formulation of this proposition was given in \cite[p.~52]{lisl92Ay}, Proposition~3.3.9. 

As the study of point transformations of differential equations usually involves cumbersome and sophisticated calculations, 
instead of finite point transformations one may consider their infinitesimal counterparts.
This leads to a certain linearization of the related problem which essentially simplifies the whole consideration.
In the framework of the infinitesimal approach, a (pseudo)group~$G$ of point transformations is replaced 
by the Lie algebra~$\mathfrak g$ of vector fields on the same space, which are generators of one-parametric local subgroups of~$G$. 

In particular, the vector fields in the space of $(x,u)$ generating one-parametric subgroups 
of the maximal point symmetry (pseudo)group $G_\theta$ of the system $\mathcal L_\theta$ form a Lie algebra~$\mathfrak g_\theta$
called the \emph{maximal Lie invariance algebra} of the system $\mathcal L_\theta$. 
Analogously to symmetry groups, 
the common part $\mathfrak g^\cap=\mathfrak g^\cap(\mathcal L|_{\mathcal S})=\bigcap_{\theta\in{\mathcal S}}\mathfrak g_\theta$ of
all $\mathfrak g_\theta$, $\theta\in\mathcal S$, is called the \emph{kernel of the maximal Lie invariance algebras} of systems from the class $\mathcal L|_{\mathcal S}$. 
It is the Lie algebra associated with the kernel group~$G^\cap$.

The equivalence algebra~$\mathfrak g^\sim$ is the Lie algebra formed by generators of one-parametric groups of equivalence transformations for the class~$\mathcal L|_{\mathcal S}$. 
These generators are vector fields in the space of $(x,u_{(p)},\theta)$ which are projectable to the space of $(x,u_{(p')})$ for any $0\le p'\le p$ 
and whose projections to the space of $(x,u_{(p)})$ are the $p$-th order prolongations of the corresponding projections to the space of $(x,u)$.

An infinitesimal analogue of Proposition~\ref{pro:OnKernelGroupAsANormalSubgroup} is the following assertion.

\begin{corollary}\label{cor:OnKernelAlgebraAsIdeal1}
The trivial prolongation~$\hat{\mathfrak g}^\cap$ of the kernel algebra~$\mathfrak g^\cap$ to the arbitrary elements is an ideal in the equivalence algebra~$\mathfrak g^\sim$.
\end{corollary}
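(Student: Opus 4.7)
The plan is to derive the corollary as the infinitesimal shadow of Proposition~\ref{pro:OnKernelGroupAsANormalSubgroup}. That proposition already asserts that $\hat G^\cap$ is a normal Lie subgroup of $G^{\sim}$. The standard correspondence between Lie subgroups and Lie subalgebras says that the tangent algebra of a normal subgroup is an ideal, and since $\hat{\mathfrak g}^\cap$ consists, by definition, of the infinitesimal generators of one-parameter subgroups of $\hat G^\cap$, this yields the statement at once. A one-line proof of this form is in principle enough, but I would complement it with a direct infinitesimal argument so as not to rely on the finite-group version.

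For the direct argument, let $\hat Q \in \hat{\mathfrak g}^\cap$ and $Y \in \mathfrak g^\sim$, and show $[\hat Q, Y] \in \hat{\mathfrak g}^\cap$. By the definitions, $\hat Q$ has vanishing components in the $\theta$-directions, its $(x,u)$-projection $Q := \hat Q|_{(x,u)}$ lies in $\mathfrak g^\cap$, and its $u_{(p)}$-components are obtained from $Q$ by prolongation; the field $Y$ is projectable to $(x,u_{(p')})$ for every $0\le p'\le p$ and its $(x,u_{(p)})$-part is the $p$-th prolongation of $Y|_{(x,u)}$. I would split the verification into two parts. First, a bookkeeping step: check that $[\hat Q,Y]$ still has zero $\theta$-components (the $\theta$-components of the bracket reduce to $\hat Q(Y^\theta)$, and $\hat Q$ differentiates only in $(x,u_{(p)})$, but since $\hat Q$ acts as a prolonged vector field while $Y$'s $\theta$-components must be compatible with equivalence, a direct manipulation using projectability shows the $\theta$-part vanishes) and that its $(x,u_{(p)})$-projection is the prolongation of $R := [Q, Y|_{(x,u)}]$, which follows because prolongation commutes with the Lie bracket.

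The main content is then showing that $R \in \mathfrak g^\cap$. I would fix an arbitrary $\theta \in \mathcal S$ and consider the one-parameter flow $\Phi_s = \exp(sY)$ of equivalence transformations. Since $Y$ is an equivalence generator, $\Phi_s$ preserves $\mathcal S$, so $\theta_s := \Phi_s\cdot\theta \in \mathcal S$. By the definition of the kernel, $Q \in \mathfrak g^\cap \subseteq \mathfrak g_{\theta_s}$ for every $s$. Since $\Phi_s|_{(x,u)}$ maps $\mathcal L_\theta$ to $\mathcal L_{\theta_s}$, its pushforward sends $\mathfrak g_\theta$ onto $\mathfrak g_{\theta_s}$, and therefore $(\Phi_s|_{(x,u)})^{-1}_*\, Q \in \mathfrak g_\theta$ for every $s$. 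Differentiating at $s=0$ gives $[Y|_{(x,u)},Q] \in \mathfrak g_\theta$, and since $\theta$ was arbitrary, $R \in \mathfrak g^\cap$, which finishes the proof.

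The main obstacle is precisely the last step, where one must convert the abstract action of an equivalence transformation on arbitrary elements into the action of its $(x,u)$-projection on the maximal Lie invariance algebras of individual equations; this is where the infinitesimal analogue of normality is genuinely used. The projectability and prolongation manipulations in the first step are routine by comparison.
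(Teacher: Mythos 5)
Your first paragraph is exactly the level of justification the paper itself supplies: Corollary~\ref{cor:OnKernelAlgebraAsIdeal1} is stated there without proof as the infinitesimal analogue of Proposition~\ref{pro:OnKernelGroupAsANormalSubgroup} (which is in turn quoted as a folklore assertion with references), so deducing the ideal property from the normality of $\hat G^\cap$ via the Lie subgroup--subalgebra correspondence is precisely the intended argument. Your supplementary direct argument goes beyond the paper and its core step is right: conjugating $Q\in\mathfrak g^\cap$ by the $(x,u)$-projection of the flow of $Y$ keeps it in $\mathfrak g_{\theta}$ for each fixed $\theta$, and differentiation at $s=0$ places the bracket with $Y|_{(x,u)}$ in $\mathfrak g_\theta$ (the limit of the difference quotients stays in $\mathfrak g_\theta$ because the latter is the solution space of a linear system of determining equations). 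The one step you treat too lightly is the claim that the $\theta$-components of $[\hat Q,Y]$ vanish ``by projectability'': since that component equals $\hat Q(Y^{\theta})$ and $Y^{\theta}$ may well depend on variables in which the prolonged field $\hat Q$ differentiates, projectability alone gives nothing. What actually closes this step is that $[\hat Q,Y]\in\mathfrak g^\sim$ has projection in $\mathfrak g^\cap$, so its difference from the trivial prolongation of that projection is a pure gauge generator with zero $(x,u_{(p)})$-part, which is excluded when the correspondence $\theta\mapsto\mathcal L_\theta$ is injective --- the same tacit hypothesis behind Proposition~\ref{pro:OnKernelGroupAsANormalSubgroup}. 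This is a refinement rather than a fatal flaw, and for the class~\eqref{eq:IbragimovClass} the vanishing is immediate from the explicit generators~\eqref{eq:EquivalenceAlgebraGenWaveEqs}, but if you keep the direct proof you should replace the ``bookkeeping'' sentence with this argument.
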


By definition, any element of the algebra~$\hat{\mathfrak g}^\cap$ formally has the same form as the associated element from~$\mathfrak g^\cap$, 
but in fact is a vector field in the different space augmented with the arbitrary elements.

It is convenient to characterize and estimate transformational properties of classes of differential equations in terms of normalization. 

\begin{definition}\label{def:NormalizationIbragimov}
A class of differential equations~$\mathcal L|_{\mathcal S}$ is \emph{normalized} if its set of admissible transformations is induced by transformations of its equivalence group~$G^\sim$, 
meaning that for any triple $(\theta,\tilde\theta,\varphi)$ from $\mathrm T(\mathcal L|_{\mathcal S})$ there exists a transformation $\Phi$ from~$G^\sim$ 
such that $\tilde\theta =\Phi\theta$ and $\varphi=\Phi|_{(x,u)}$.
\end{definition}

\begin{definition}\label{def:SemiNormalizationIbragimov}
A class of differential equations~$\mathcal L|_{\mathcal S}$ is called \textit{semi-normalized} 
if its set of admissible transformations is induced by transformations from its equivalence group~$G^\sim$ and the maximal point symmetry groups of its equations, 
meaning that for any triple $(\theta,\tilde\theta,\varphi)$ from $\mathrm T(\mathcal L|_{\mathcal S})$ there exist a transformation $\Phi$ from~$G^\sim$ 
and a transformation~$\tilde\varphi$ from the maximal point symmetry group~$G_\theta$ of the system~$\mathcal L_\theta$, 
such that $\tilde\theta =\Phi\theta$ and $\varphi = \Phi|_{(x,u)}\circ \tilde\varphi$.
\end{definition}

In other words, a class of differential equations is semi-normalized if arbitrary similar systems from the class are related via transformations from the equivalence group of this class. 

Normalized and semi-normalized classes of differential equations have a number of interesting properties which essentially simplify the study of such classes. 
In particular, if the class $\mathcal L|_{\mathcal S}$ is normalized in the usual sense, 
its kernel algebra~$\mathfrak g^\cap$ is an ideal of the maximal Lie invariance algebra~$\mathfrak g_\theta$ for each $\theta\in\mathcal S$. 
In general, this claim is not true even if the class is only semi-normalized. See Example~1 in~\cite{card11Ay}.

The above notion of normalization (resp.\ semi-normalization) relies on the finite admissible transformations. 
A weaker version of normalization is defined in infinitesimal terms~\cite{popo10Cy}.

\begin{definition}\label{def:WeakNormalizationIbragimov}
A class of differential equation~$\mathcal L|_{\mathcal S}$ is \textit{weakly normalized} if the union and, therefore, the span of maximal Lie invariance algebras $\mathfrak g_\theta$ of all systems~$\mathcal L_\theta$ from the class is contained in the projection of the equivalence algebra $\mathfrak g^\sim$ of the class to vector fields in the space of independent and dependent variables, i.e.\
\[
 \bigcup_{\theta\in\mathcal S}\mathfrak g_\theta\subset \mathrm P\mathfrak g^\sim\quad  
 (\mbox{or}\quad \langle\mathfrak g_\theta\mid\theta\in\mathcal S\rangle \subset \mathrm P\mathfrak g^\sim).
\]
Here by~$\mathrm P$ we denote the projection operator that acts on vector fields of the general form $Q=\xi^i(x,u)\p_{x_i}+\eta^a(x,u)\p_{u^a}+\varphi^s(x,u,\theta)\p_{\theta^s}$ in the space of variables $x$, $u$ and $\theta$ yielding the vector fields of the form $\mathrm P Q = \xi^i \p_{x_i}+\eta^a\p_{u^a}$, which are defined on the space of variables $x$ and~$u$.
\end{definition}

It is obvious that any normalized class of differential equations is both semi-normalized and weakly normalized.

In general, the normalization of a class of differential equations can be checked by computing the set of admissible transformations of the class and its equivalence group 
(e.g.\ using the direct method) and testing whether the condition from Definition~\ref{def:NormalizationIbragimov} is satisfied. 
It is often convenient to begin with a normalized superclass and construct a hierarchy of normalized subclasses of the superclass or a simple chain of such nested subclasses, 
which contain the class under consideration \cite{popo10Ay,popo08Ay,popo10By}.
The weak normalization property in turn can be verified 
by finding the equivalence algebra of the class and an inspection of the determining equations for Lie symmetries of systems from the class 
(see the next section). 
As the computations related to checking weak normalization involve solving of only linear partial differential equations 
(in contrast to the computations using the direct method of finding equivalence and admissible transformations), 
they can be realized in an algorithmic way even for quite cumbersome classes of multidimensional partial differential equations. 
At the same time, the established presence of the usual normalization property is more useful and allows one to obtain deeper results than involving its weak infinitesimal analogue.

\section{Algebraic method of group classification}\label{sec:AlgMethodOfGroupClassification}

Now that we have introduced necessary notions related to point transformations within classes of differential equations, 
we can go on with the general discussion of the framework of group classification in some more detail. 

The solution of the group classification problem by Lie--Ovsiannikov 
for a class~$\mathcal L|_{\mathcal S}$ of differential equations 
should include the construction of the following elements: 
\begin{itemize}\itemsep=0ex
\item 
the equivalence group~$G^\sim$ of the class~$\mathcal L|_{\mathcal S}$, 
\item 
the kernel algebra~$\mathfrak g^\cap=\mathfrak g^\cap(\mathcal L|_{\mathcal S})=\bigcap_{\theta\in{\mathcal S}}\mathfrak g_\theta$ of the class~$\mathcal L|_{\mathcal S}$, 
i.e., the intersection of the maximal Lie invariance algebras of systems from this class,
\item 
an exhaustive list of $G^\sim$-equivalent extensions of the kernel algebra~$\mathfrak g^\cap$ in the class~$\mathcal L|_{\mathcal S}$, i.e., 
an exhaustive list of  $G^\sim$-equivalent values of~$\theta$ with the corresponding maximal Lie invariance algebras~$\mathfrak g_\theta$ 
for which $\mathfrak g_\theta\ne\mathfrak g^\cap$. 
\end{itemize}
More precisely, the classification list consists of pairs $(\mathcal S_\gamma,\{\mathfrak g_\theta,\theta\in\mathcal S_\gamma\})$, $\gamma\in\Gamma$.
For each $\gamma\in\Gamma$ $\mathcal L|_{\mathcal S_\gamma}$ is a subclass of~$\mathcal L|_{\mathcal S}$, 
$\mathfrak g_\theta\ne \mathfrak g^\cap$ for any $\theta\in\mathcal S_\gamma$ and 
the structures of the algebras $\mathfrak g_\theta$ are similar for all $\theta\in\mathcal S_\gamma$.
In particular, the algebras $\mathfrak g_\theta$, $\theta\in\mathcal S_\gamma$, have the same dimension or
display the same arbitrariness of algebra parameters in the infinite-dimensional case.
Moreover, for any $\theta\in\mathcal S$ with $\mathfrak g_\theta\ne \mathfrak g^\cap$ 
there exists $\gamma\in\Gamma$ such that $\theta\in\mathcal S_\gamma\bmod G^{\sim}$.
All elements from $\bigcup_{\gamma\in\Gamma}\mathcal S_\gamma$ are $G^\sim$-inequivalent.
Note that in all examples of group classification presented in the literature the set~$\Gamma$ was finite. 

The procedure of group classification can be supplemented by deriving auxiliary systems of differential equations
for the arbitrary elements, providing extensions of Lie symmetry, cf.\ Remark~\ref{rem:SubclassAndItsComplementOfIbragimovClass}.
In other words, for each $\gamma\in\Gamma$ one should explicitly describe the subset $\bar{\mathcal S}_\gamma$ of~$\mathcal S$ 
which is the union of $G^{\sim}$-orbits of elements from~${\mathcal S}_\gamma$.
Although this step is usually neglected, it may lead to nontrivial results (see, e.g., \cite{boro06Ay}).

If the class~$\mathcal L|_{\mathcal S}$ is not semi-normalized,
the classification list may include equations similar with respect to
point transformations which do not belong to $G^\sim$.
The knowledge of such \emph{additional} equivalences allows one to substantially simplify the
further symmetry analysis of the class~$\mathcal L|_S$.
Their construction can be considered as one further step of the algorithm of group classification~\cite{ivan10Ay,popo04Ay,vane09Ay}.
Often it can be implemented using empiric tools, e.g., the fact that similar equations have similar maximal invariance algebras.
A more systematical way is to describe the complete set of admissible transformations.

In practice, the \emph{procedure of group classification} within the Lie--Ovsiannikov approach can be realized by implementing a few consecutive steps. 

Given a class~$\mathcal L|_{\mathcal S}$, it is convenient to start the procedure by the computation of the \emph{equivalence algebra}. This can be done either using the infinitesimal method~\cite{akha91Ay,ovsi82Ay} or simply by deriving the set of generators for the one-parametric groups of the equivalence group, provided that the latter is known. Computing the equivalence algebra independently from the equivalence group is important, as it gives a test and a tool for the calculation of the equivalence group. In particular, often only the connected component of unity in the equivalence group is found using the knowledge of the equivalence algebra. The equivalence algebra also plays a distinct role in the course of applying the algebraic method of group classification. 

The most powerful tool for the construction of the \emph{equivalence group}, which is the next step of the procedure, is the direct method involving finite point transformations. Such a construction can be understood as the final stage in the preliminary investigation of the set of admissible transformations of the class~$\mathcal L|_{\mathcal S}$ and allows finding both continuous and discrete equivalence transformations. Due to involving finite point transforms the related calculations are cumbersome and lead to a nonlinear system of partial differential equations. An alternative approach in order to at least restrict the form of point equivalence transformations is based on the condition that any point equivalence transformation induces an automorphism of the equivalence algebra, cf.\ Section~\ref{sec:CalculationOfEquivAlgebraByAlgebraicMethod}.

The \emph{system of determining equations} on the coefficients of Lie symmetry operators of a system~$\mathcal L_\theta$ from the class~$\mathcal L|_{\mathcal S}$ follows from the infinitesimal invariance criterion~\cite{blum89Ay,olve86Ay,ovsi82Ay}, stating that
\[
Q_{(p)} L(x,u_{(p)},\theta_{(q)}(x,u_{(p)}))\big|_{\mathcal L^p_\theta}=0 
\]
holds for any operator $Q=\xi^i(x,u)\p_{x_i}+\eta^a(x,u)\p_{u^a}$ from $\mathfrak g_\theta$, where the arbitrary elements~$\theta$ play the role of parameters.
In what follows we assume the summation for repeated indices.
The indices~$i$ and~$a$ run from~1 to~$n$ and from~1 to~$m$, respectively.
$Q_{(p)}$ denotes the standard $p$-th prolongation of the operator~$Q$, 
\[
Q_{(p)}:=Q+\sum_{0<|\alpha|{}\leqslant  p} 
\Bigl(D_1^{\alpha_1}\ldots D_n^{\alpha_n}\bigl(\eta^a(x,u)-\xi^i(x,u)u^a_i\bigr)+\xi^iu^a_{\alpha,i}\Bigr)\p_{u^a_\alpha}.
\]
$ D_i=\p_i+u^a_{\alpha,i}\p_{u^a_\alpha}$ is the operator of total differentiation with respect to the variable~$x_i$. 
The tuple $\alpha=(\alpha_1,\ldots,\alpha_n)$ is a multi-index,  
$\alpha_i\in\mathbb{N}\cup\{0\}$, $|\alpha|\mbox{:}=\alpha_1+\cdots+\alpha_n$.
The variable $u^a_\alpha$ of the jet space $J^{(p)}$ is identified with the derivative 
$\p^{|\alpha|}u^a/\p x_1^{\alpha_1}\ldots\p x_n^{\alpha_n}$, 
and $u^a_{\alpha,i}:=\p u^a_\alpha/\p x_i$. 
Some determining equations do not involve the arbitrary elements and thus can be integrated immediately. The remaining determining equations explicitly depending on the arbitrary elements are referred to as the classifying equations.

Varying the arbitrary elements~$\theta$, we can split the determining equations with respect to different derivatives of~$\theta$. The additional splitting results in equations for those symmetries that are admitted for any value of the arbitrary elements and form the \textit{kernel of maximal Lie invariance algebras}.

The further analysis of the determining equations is usually much more intricate. The classifying equations are inspected for specific values of the arbitrary elements~$\theta$, which give extensions of the solution sets of the determining equations, associated with symmetry extensions of the kernel algebra. The sets of values found for~$\theta$ should be factorized with respect to the equivalence relation requested. Still, it is the complexity of this analysis that led to the development of a great variety of specialized techniques of group classification, which are conventionally partitioned into two approaches. 

The first method is the direct compatibility analysis and integration of the determining equations up to the equivalence relation, that depends on the values of the arbitrary elements. It is mostly suitable for classes with arbitrary elements that are constants or functions of single arguments. Algorithms of group classification that are realized in present day's computer algebra packages for the calculation of Lie symmetries are based on this method~\cite{ande96Ay,carm00Ay,head93Ay,vu07Ay,witt04Ay}.

The other method is of algebraic nature. It is based on the following two properties: For each fixed value of the arbitrary elements the solution space of the determining equations is associated with a Lie algebra of vector fields. Additionally, if systems of differential equations are similar with respect to a point transformation then its push-forward relates the corresponding maximal Lie invariance algebras. This is why any version of the algebraic method of group classification existing in the literature involves, in some way, the classification of algebras of vector fields up to certain equivalence induced by point transformations. The key question is what set of vector fields should be classified and what kind of equivalence should be used.

It is obvious that for each equation $\mathcal L_\theta$ from the class $\mathcal L|_{\mathcal S}$ its maximal Lie invariance algebra $\mathfrak g_\theta$ is contained in the union $\mathfrak g^\cup =\bigcup_{\theta\in\mathcal S} \mathfrak g_\theta$. The definition of $\mathfrak g^\cup$ implies that this set consists of vector fields for which the system of determining equations is consistent with respect to the arbitrary elements with the auxiliary system of the class~$\mathcal L|_{\mathcal S}$. Therefore, the set~$\mathfrak g^\cup$ can be obtained at the onset of group classification, independently from deriving the maximal Lie invariance algebras of equations from the class~$\mathcal L|_{\mathcal S}$. As it is not convenient to select linear subspaces in the set~$\mathfrak g^\cup$ in the general case, we can extend~$\mathfrak g^\cup$ to its linear span $\mathfrak g^{\langle\rangle}=\langle\mathfrak g_\theta|\theta\in\mathcal S\rangle$, but fortunately we often have $\mathfrak g^\cup=\mathfrak g^{\langle\rangle}$. Via push-forwarding of vector fields, equivalence (resp.\ admissible) point transformations for the class~$\mathcal L|_{\mathcal S}$ induce an equivalence relation on algebras contained in~$\mathfrak g^\cup$. Such an algebra is called appropriate if it is the maximal Lie invariance algebra of an equation from the class~$\mathcal L|_{\mathcal S}$. We should classify, up to the above equivalence relation, only appropriate algebras. They satisfy additional constraints. The simplest restriction for appropriate subalgebras is that each of them contains the kernel algebra~$\mathfrak g^\cap$. The condition that the algebras are really maximal Lie invariance algebras for equations from the class~$\mathcal L|_{\mathcal S}$ is more nontrivial to verify. 

Substituting the basis elements of each appropriate algebra obtained in the course of the algebra classification into the determining equations gives a compatible system for values of the arbitrary elements associated with Lie symmetry extensions within the class~$\mathcal L|_{\mathcal S}$. Solving the last system completes the group classification within the most general framework of the algebraic method. This whole construction is based on the following assertion: 

\begin{proposition}\label{pro:OnBasisOfAlgMethod}
Let $\mathcal S_i$ be the subset of~$\mathcal S$ that consists of all arbitrary elements 
for which the corresponding equations from $\mathcal L|_{\mathcal S}$ are invariant with respect to the same algebra of vector fields, $i=1,2$.
Then the algebras~$\mathfrak g^\cap(\mathcal L|_{\mathcal S_1})$ and $\mathfrak g^\cap(\mathcal L|_{\mathcal S_2})$ are similar with respect to push-forwards of vector fields by transformations from~$G^\sim$ (resp.\ point transformations) if and only if the subsets~$\mathcal S_1$ and~$\mathcal S_2$ are mapped to each other by transformations from~$G^\sim$ (resp.\ point transformations).
\end{proposition}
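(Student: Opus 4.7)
The plan is to prove the two implications of the biconditional separately, relying on the fundamental fact that if a point transformation $\varphi$ belongs to $\mathrm T(\theta,\tilde\theta)$, then the push-forward $\varphi_*$ is a Lie algebra isomorphism from $\mathfrak g_\theta$ onto $\mathfrak g_{\tilde\theta}$. In particular, if $\Phi\in G^\sim$ and $\theta\in\mathcal S$, then $\varphi:=\Phi|_{(x,u)}$ lies in $\mathrm T(\theta,\Phi\theta)$, hence $\varphi_*\mathfrak g_\theta=\mathfrak g_{\Phi\theta}$. Throughout, denote $\mathfrak k_i:=\mathfrak g^\cap(\mathcal L|_{\mathcal S_i})=\bigcap_{\theta\in\mathcal S_i}\mathfrak g_\theta$ and let $\mathfrak a_i$ be the common invariance algebra whose equations fill out $\mathcal S_i$ by definition; note $\mathfrak a_i\subseteq\mathfrak k_i$.

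For the sufficiency direction, I would fix $\Phi\in G^\sim$ with $\Phi(\mathcal S_1)=\mathcal S_2$ and set $\varphi:=\Phi|_{(x,u)}$. Since $\varphi_*$ acts linearly on vector fields in $(x,u)$-space, it commutes with intersections of subspaces. Combined with $\varphi_*\mathfrak g_\theta=\mathfrak g_{\Phi\theta}$ and the bijection $\theta\leftrightarrow\Phi\theta$ between $\mathcal S_1$ and $\mathcal S_2$, this yields
\[
\varphi_*\mathfrak k_1=\varphi_*\bigcap_{\theta\in\mathcal S_1}\mathfrak g_\theta=\bigcap_{\theta\in\mathcal S_1}\varphi_*\mathfrak g_\theta=\bigcap_{\tilde\theta\in\mathcal S_2}\mathfrak g_{\tilde\theta}=\mathfrak k_2,
\]
giving the required similarity.

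For necessity, I assume the existence of $\Phi\in G^\sim$ whose restriction $\varphi:=\Phi|_{(x,u)}$ satisfies $\varphi_*\mathfrak k_1=\mathfrak k_2$, and I claim the same $\Phi$ sends $\mathcal S_1$ onto $\mathcal S_2$. Fix $\theta\in\mathcal S_1$; then $\Phi\theta\in\mathcal S$, and
\[
\mathfrak g_{\Phi\theta}=\varphi_*\mathfrak g_\theta\supseteq\varphi_*\mathfrak k_1=\mathfrak k_2\supseteq\mathfrak a_2,
\]
so $\mathcal L_{\Phi\theta}$ is invariant under the common algebra defining $\mathcal S_2$. Here the key point is the \emph{maximality} built into the definition of $\mathcal S_i$: an arbitrary element belongs to $\mathcal S_2$ as soon as its equation admits $\mathfrak a_2$ as a symmetry algebra. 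This forces $\Phi\theta\in\mathcal S_2$, and the reverse inclusion is obtained by the same argument applied to $\Phi^{-1}\in G^\sim$, whose restriction push-forward is $\varphi_*^{-1}$ and sends $\mathfrak k_2$ back to $\mathfrak k_1$.

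The point-transformation version is parallel: replace "$\Phi\in G^\sim$" by a single point transformation $\varphi$ whose action on individual equations $\mathcal L_\theta$, $\theta\in\mathcal S_1$, yields systems $\mathcal L_{\tilde\theta}$ with $\tilde\theta\in\mathcal S$, and perform the same intersection and push-forward manipulations. I expect the only real obstacle to be a conceptual one rather than technical, namely cleanly invoking the maximality property of $\mathcal S_i$ in the necessity step; once this is made explicit, all the remaining steps reduce to the linearity of $\varphi_*$ and its compatibility with symmetry algebras under admissible transformations.
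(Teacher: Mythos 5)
The paper states Proposition~\ref{pro:OnBasisOfAlgMethod} without proof --- it is offered as the folklore basis of the algebraic method --- so there is no in-text argument to compare against; on its own terms your proof is correct. The two ingredients you isolate are exactly the right ones: for the ``if'' part, that the push-forward $\varphi_*$ of $\varphi=\Phi|_{(x,u)}$ is a bijection on vector fields carrying each $\mathfrak g_\theta$ onto $\mathfrak g_{\Phi\theta}$ and hence commuting with the intersection defining $\mathfrak g^\cap(\mathcal L|_{\mathcal S_i})$ (your $\mathfrak k_i$); for the ``only if'' part, the saturation property $\mathcal S_i=\{\theta\in\mathcal S\mid\mathfrak a_i\subseteq\mathfrak g_\theta\}=\{\theta\in\mathcal S\mid\mathfrak k_i\subseteq\mathfrak g_\theta\}$, which turns the inclusion $\mathfrak g_{\Phi\theta}=\varphi_*\mathfrak g_\theta\supseteq\varphi_*\mathfrak k_1=\mathfrak k_2\supseteq\mathfrak a_2$ into the membership $\Phi\theta\in\mathcal S_2$, with $\Phi^{-1}$ supplying surjectivity. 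The one place to be more careful is the point-transformation variant of the ``only if'' direction: an arbitrary point transformation $\varphi$ satisfying $\varphi_*\mathfrak k_1=\mathfrak k_2$ need not send any equation of $\mathcal L|_{\mathcal S_1}$ back into the class, so the assertion is meaningful (and your argument applies) only under the reading that the similarity of the algebras is realized by a transformation $\varphi$ for which $\varphi\in\mathrm T(\theta,\tilde\theta)$ with some $\tilde\theta\in\mathcal S$ for every $\theta\in\mathcal S_1$, and likewise for $\varphi^{-1}$ on $\mathcal S_2$. You build this hypothesis into the ``if'' direction but it is equally needed for the converse; once it is stated explicitly there, the proof is complete.
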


If the class~$\mathcal L|_{\mathcal S}$ is weakly normalized, the union~$\mathfrak g^\cup$ (resp.\ the span~$\mathfrak g^{\langle\rangle}$) is well agreed with $G^\sim$-equivalence. 
As a result, the algebraic method is appropriate for \emph{complete group classification} of the class~$\mathcal L|_{\mathcal S}$.
This is not the case when the main part of~$\mathfrak g^\cup$ does not lie in the projection~$\mathrm P\mathfrak g^\sim$. 
Then the approach of \emph{preliminary group classification} \cite{akha91Ay,ibra91Ay} is relevant to give a partial solution of the group classification problem for the class~$\mathcal L|_{\mathcal S}$ by the algebraic method. 
Preliminary group classification essentially rests on the following two propositions (they were first formulated in~\cite{ibra91Ay} in the particular case of the class~\eqref{eq:IbragimovClass}; see~\cite{card11Ay} for their general formulation and proofs):

\begin{proposition}\label{pro:OnPreliminaryGroupClassification1Ibragimov}
Let $\mathfrak a$ be a subalgebra of the equivalence algebra~$\mathfrak g^\sim$ of the class $\mathcal L|_{\mathcal S}$, $\mathfrak a\subset\mathfrak g^\sim$, and let $\theta^0(x,u_{(r)})\in\mathcal S$ be a value of the tuple of arbitrary elements~$\theta$ for which the algebraic equation $\theta=\theta^0(x,u_{(r)})$ is invariant with respect to~$\mathfrak a$. Then the differential equation $\mathcal L_{\theta^0}$ is invariant with respect to the projection of $\mathfrak a$ to the space of variables $(x,u)$.
\end{proposition}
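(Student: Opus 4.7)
My plan is to pass from the infinitesimal invariance of the algebraic equation $\theta=\theta^0$ under $\mathfrak a$ to the finite-level statement that the one-parameter flows generated by elements of $\mathfrak a$ fix $\theta^0$, and then exploit the definition of the equivalence group in order to conclude that these flows induce point symmetries of $\mathcal L_{\theta^0}$ whose infinitesimal generators are precisely the projections to the $(x,u)$-space.

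Concretely, for an arbitrary $Q\in\mathfrak a\subset\mathfrak g^\sim$ I would consider the local one-parameter subgroup $\{\Phi_t\}$ of $G^\sim$ generated by~$Q$. Since $Q$ lies in the equivalence algebra, each $\Phi_t$ is an equivalence transformation of~$\mathcal L|_{\mathcal S}$: by Definition~\ref{def:UsualEquivGroup} it is projectable to the space of $(x,u_{(p')})$ for every $0\le p'\le p$, is consistent with the contact structure on $(x,u_{(p)})$, and maps any $\mathcal L_\theta$ to $\mathcal L_{\Phi_t\theta}$. Next, by the standard passage between the infinitesimal and finite invariance criteria, the $\mathfrak a$-invariance of the equation $\theta=\theta^0(x,u_{(r)})$ is equivalent to the fact that $\{\Phi_t\}$ preserves the surface $\{\theta=\theta^0(x,u_{(r)})\}$ inside the augmented jet space. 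Evaluated at the distinguished element $\theta^0$, this gives $\Phi_t\theta^0=\theta^0$, and hence $\Phi_t$ maps $\mathcal L_{\theta^0}$ to itself. By projectability, $\Phi_t|_{(x,u)}$ is a well-defined point transformation whose $p$-th order prolongation equals $\Phi_t|_{(x,u_{(p)})}$; since $\Phi_t$ sends solutions of $\mathcal L_{\theta^0}$ to solutions of $\mathcal L_{\theta^0}$, the restriction $\Phi_t|_{(x,u)}$ belongs to the maximal point symmetry group $G_{\theta^0}$. The generator of $\{\Phi_t|_{(x,u)}\}$ is, by definition of the projection operator~$\mathrm P$, nothing but $\mathrm P Q$, so $\mathrm P Q\in\mathfrak g_{\theta^0}$. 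Since $\mathrm P$ is linear and $Q\in\mathfrak a$ was arbitrary, we conclude $\mathrm P\mathfrak a\subset\mathfrak g_{\theta^0}$, which is exactly the desired invariance of $\mathcal L_{\theta^0}$ under $\mathrm P\mathfrak a$.

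The one step that deserves care, and which I expect to be the main subtlety rather than a serious obstacle, is the matching between the infinitesimal invariance of the algebraic equation $\theta=\theta^0(x,u_{(r)})$ (applied via the appropriate prolongation of $Q$ to the jet variables~$u^a_\alpha$ up to order~$r$, acting on the components $\theta^s-\theta^{0,s}(x,u_{(r)})$) and finite invariance under the flow~$\Phi_t$. This is the only place where the precise prolongation convention for equivalence generators to the jet bundle of the arbitrary elements matters; once it is settled, the remainder of the argument reduces to elementary group-theoretic manipulations that hinge solely on projectability. A fully infinitesimal variant is also available: one substitutes $\theta=\theta^0$ directly into the invariance identity $Q_{(p)}L=0$ satisfied on the class, uses the invariance of $\theta=\theta^0$ to rewrite the $\theta$-components of $Q_{(p)}$ in terms of the action of $\mathrm P Q$ on~$\theta^0$, and recovers the Lie symmetry criterion $(\mathrm P Q)_{(p)}L(x,u_{(p)},\theta^0_{(q)})\big|_{\mathcal L^p_{\theta^0}}=0$, but I expect the finite-flow argument above to be conceptually cleaner and more directly aligned with Definition~\ref{def:UsualEquivGroup}.
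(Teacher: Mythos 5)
Your argument is correct. Note that the paper under review does not actually prove this proposition itself: it states that Propositions~\ref{pro:OnPreliminaryGroupClassification1Ibragimov} and~\ref{pro:OnPreliminaryGroupClassification2Ibragimov} were first formulated in the setting of the class~\eqref{eq:IbragimovClass} and refers to the earlier paper \cite{card11Ay} for the general formulation and proofs, so there is no in-text proof to compare against. Your finite-flow route is sound: tangency of each prolonged $Q\in\mathfrak a$ to the graph $\{\theta=\theta^0(x,u_{(r)})\}$ gives local flow invariance, projectability of equivalence transformations identifies the generator of $\{\Phi_t|_{(x,u)}\}$ with $\mathrm PQ$, and the fact that each $\Phi\in G^\sim$ induces the admissible transformation $(\theta,\Phi\theta,\Phi|_{(x,u)})$ places $\Phi_t|_{(x,u)}$ in $G_{\theta^0}$ once $\Phi_t\theta^0=\theta^0$; since $\mathrm P$ restricted to $\mathfrak g^\sim$ is a Lie algebra homomorphism, $\mathrm P\mathfrak a\subset\mathfrak g_{\theta^0}$ follows. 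The purely infinitesimal variant you sketch at the end is the form in which the cited reference argues, but the two are equivalent and your version is, if anything, more transparently tied to Definition~\ref{def:UsualEquivGroup}.
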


\begin{proposition}\label{pro:OnPreliminaryGroupClassification2Ibragimov}\looseness=-1
Let $\mathcal S_i$ be the subset of~$\mathcal S$ that consists of tuples of arbitrary elements 
for which the corresponding algebraic equations are invariant with respect to the same subalgebra of the equivalence algebra~$\mathfrak g^\sim$ and
let $\mathfrak a_i$ be the maximal subalgebra of~$\mathfrak g^\sim$ for which $\mathcal S_i$ satisfies this property, $i=1,2$.
Then the subalgebras~$\mathfrak a_1$ and~$\mathfrak a_2$ are equivalent with respect to the adjoint action of~$G^\sim$ if and only if the subsets~$\mathcal S_1$ and~$\mathcal S_2$ are mapped to each other by transformations from~$G^\sim$.
\end{proposition}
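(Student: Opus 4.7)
The plan is to exploit the natural covariance between the $G^\sim$-action on the arbitrary element set $\mathcal S$ and the adjoint action $\mathrm{Ad}(G^\sim)$ on $\mathfrak g^\sim$. For any subalgebra $\mathfrak a\subset\mathfrak g^\sim$, introduce the notation $\mathcal S_{\mathfrak a}:=\{\theta^0\in\mathcal S\mid \text{the algebraic equation } \theta=\theta^0 \text{ is } \mathfrak a\text{-invariant}\}$. The whole argument is driven by the single identity
\[
\Phi(\mathcal S_{\mathfrak a})=\mathcal S_{\mathrm{Ad}(\Phi)\mathfrak a}\qquad \text{for every }\Phi\in G^\sim,
\]
where $\Phi$ acts on $\theta$ through its natural prolongation. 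This follows because push-forwarding a vector field by a diffeomorphism is compatible with tangency to a submanifold: $Q$ is tangent to $\{\theta=\theta^0\}$ if and only if $\mathrm{Ad}(\Phi)Q=\Phi_*Q$ is tangent to $\Phi(\{\theta=\theta^0\})=\{\theta=\Phi\theta^0\}$.

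Two preliminary observations unlock the proposition. First, under the hypotheses one has $\mathcal S_i=\mathcal S_{\mathfrak a_i}$: the inclusion $\mathcal S_i\subset\mathcal S_{\mathfrak a_i}$ is built into the definition of $\mathfrak a_i$, while the reverse inclusion follows from the maximality clause, since any $\theta^0\in\mathcal S_{\mathfrak a_i}\setminus\mathcal S_i$ would enlarge $\mathcal S_i$ without spoiling the $\mathfrak a_i$-invariance property. Second, $\mathfrak a_i$ is uniquely determined by $\mathcal S_i$: if two subalgebras $\mathfrak b,\mathfrak b'\subset\mathfrak g^\sim$ each preserve every element of $\mathcal S_i$, then so does every linear combination of their generators and every bracket (because flows preserving a submanifold are closed under composition and hence their infinitesimal generators under commutation), so their Lie span also has the property. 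Consequently the family of subalgebras of $\mathfrak g^\sim$ preserving $\mathcal S_i$ has a unique maximum, namely $\mathfrak a_i$.

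With these in hand, both implications drop out mechanically. For the ``only if'' direction, suppose $\Phi\in G^\sim$ satisfies $\mathrm{Ad}(\Phi)\mathfrak a_1=\mathfrak a_2$; then
\[
\Phi(\mathcal S_1)=\Phi(\mathcal S_{\mathfrak a_1})=\mathcal S_{\mathrm{Ad}(\Phi)\mathfrak a_1}=\mathcal S_{\mathfrak a_2}=\mathcal S_2.
\]
For the ``if'' direction, suppose $\Phi(\mathcal S_1)=\mathcal S_2$. The covariance identity transports the maximality property: $\mathrm{Ad}(\Phi)\mathfrak a_1$ preserves $\Phi(\mathcal S_1)=\mathcal S_2$, and any subalgebra of $\mathfrak g^\sim$ preserving $\mathcal S_2$ pulls back under $\mathrm{Ad}(\Phi^{-1})$ to a subalgebra preserving $\mathcal S_1$, hence is contained in $\mathrm{Ad}(\Phi)\mathfrak a_1$. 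Thus $\mathrm{Ad}(\Phi)\mathfrak a_1$ is the unique maximum for $\mathcal S_2$, which by the uniqueness established above forces $\mathrm{Ad}(\Phi)\mathfrak a_1=\mathfrak a_2$.

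The only genuinely technical point—mild but worth stating carefully—is verifying the covariance identity, for which one must remember that $G^\sim$-elements are projectable transformations acting on arbitrary elements via a prolongation compatible with the jet coordinates on which $\theta$ may depend (Definition~\ref{def:UsualEquivGroup}). The conjugacy between the prolonged action on $\mathcal S$ and the push-forward action on $\mathfrak g^\sim$ is then checked by differentiating one-parameter subgroups of $G^\sim$ at the identity; once this is in place, no further analytic work is needed.
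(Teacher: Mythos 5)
Your proof is correct. Note that the paper itself does not prove Proposition~\ref{pro:OnPreliminaryGroupClassification2Ibragimov}, but defers to Ref.~\cite{card11Ay}; the argument there is the same one you give, namely the covariance identity $\Phi(\mathcal S_{\mathfrak a})=\mathcal S_{\Phi_*\mathfrak a}$ (push-forward preserves tangency to the submanifold $\{\theta=\theta^0\}$) combined with the observation that the vector fields in $\mathfrak g^\sim$ preserving a given subset of $\mathcal S$ form a subalgebra, so the maximal such subalgebra is unique and transforms equivariantly. Your justification of $\mathcal S_{\mathfrak a_i}\subset\mathcal S_i$ is worded a bit loosely; the cleanest phrasing is that $\mathcal S_i$ is by hypothesis the \emph{full} set of arbitrary elements invariant under its defining subalgebra $\mathfrak b_i\subset\mathfrak a_i$, and anything $\mathfrak a_i$-invariant is a fortiori $\mathfrak b_i$-invariant.
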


Roughly speaking, in the course of preliminary group classification of the class~$\mathcal L|_{\mathcal S}$ 
we classify subalgebras of~$\mathfrak g^\sim$ instead of algebras of vector fields contained in~$\mathfrak g^\cup$. 
Then the objects to be classified (subalgebras of~$\mathfrak g^\sim$) are well agreed with the equivalence relation used ($G^\sim$-equivalence).
If a proper subalgebra~$\mathfrak s$ of~$\mathfrak g^\sim$ is fixed 
and then only subalgebras of~$\mathfrak s$ instead of the entire algebra~$\mathfrak g^\sim$ are classified up to the internal equivalence relation of subalgebras in~$\mathfrak s$ 
and used within the framework of the algebraic method, 
we call this approach \emph{partial preliminary group classification}. 

In view of Definition~\ref{def:WeakNormalizationIbragimov} and Proposition~\ref{pro:OnPreliminaryGroupClassification1Ibragimov} the following assertion is obvious.

\begin{corollary}
For a class of differential equations that is weakly normalized, complete preliminary group classification and complete group classification coincide.
\end{corollary}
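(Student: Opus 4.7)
The plan is to establish two mutual inclusions between the classification lists produced by complete preliminary group classification and complete group classification of $\mathcal L|_{\mathcal S}$. The inclusion ``preliminary $\subseteq$ complete'' follows directly from Proposition~\ref{pro:OnPreliminaryGroupClassification1Ibragimov} and, in fact, does not require any normalization hypothesis: whenever $\mathfrak a$ is a subalgebra of $\mathfrak g^\sim$ and $\theta^0\in\mathcal S$ makes the algebraic equation $\theta=\theta^0(x,u_{(r)})$ invariant under~$\mathfrak a$, the projection $\mathrm P\mathfrak a$ is a subalgebra of the maximal Lie invariance algebra $\mathfrak g_{\theta^0}$. Hence every Lie symmetry extension produced by complete preliminary group classification already appears, possibly enlarged to the full $\mathfrak g_{\theta^0}$, in the output of complete group classification, and the factorization by $G^\sim$-equivalence is shared by both procedures in view of Proposition~\ref{pro:OnPreliminaryGroupClassification2Ibragimov}.

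For the reverse inclusion I would fix $\theta_0\in\mathcal S$ with $\mathfrak g_{\theta_0}\ne \mathfrak g^\cap$ and attach to it the maximal subalgebra $\mathfrak a_{\theta_0}\subset\mathfrak g^\sim$ for which the algebraic equation $\theta=\theta_0(x,u_{(r)})$ is invariant; this is exactly the kind of subalgebra singled out by Proposition~\ref{pro:OnPreliminaryGroupClassification2Ibragimov}. By Proposition~\ref{pro:OnPreliminaryGroupClassification1Ibragimov} one has $\mathrm P\mathfrak a_{\theta_0}\subset\mathfrak g_{\theta_0}$, so the whole task reduces to proving the opposite inclusion. Given any $Q\in\mathfrak g_{\theta_0}$, the weak normalization assumption (Definition~\ref{def:WeakNormalizationIbragimov}) supplies some lift $\hat Q\in\mathfrak g^\sim$ with $\mathrm P\hat Q=Q$. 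The idea is then to replace, on the submanifold $\theta=\theta_0$, the $\theta$-components $\varphi^s$ of $\hat Q$ by the prolongation of $Q$ applied to $\theta_0^s$; the defining property of a Lie symmetry of $\mathcal L_{\theta_0}$ precisely guarantees that this prescription is compatible with the auxiliary system defining $\mathcal S$ and therefore yields a genuine element of $\mathfrak a_{\theta_0}$ projecting to~$Q$. Carrying this out for a basis of $\mathfrak g_{\theta_0}$ and closing under the Lie bracket gives $\mathrm P\mathfrak a_{\theta_0}=\mathfrak g_{\theta_0}$, which completes the equivalence of the two classifications.

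The main obstacle I expect is the technical verification that the adjusted lift remains inside the equivalence algebra $\mathfrak g^\sim$. Concretely, one has to check that, for a fixed $(x,u)$-projection $Q\in\mathrm P\mathfrak g^\sim$, the admissible $\theta$-components form a nonempty affine subspace modelled on the vertical part of $\mathfrak g^\sim$ (the generators of transformations that move only the arbitrary elements while preserving~$\mathcal S$), and that within this affine subspace the linear constraint $\varphi^s|_{\theta=\theta_0}=Q_{(r)}\theta_0^s$ can be met. Once this compatibility is established — it is essentially the infinitesimal form of the fact used informally in the introduction, namely that weakly normalized classes allow every Lie symmetry to be realized as the projection of an equivalence-algebra element adapted to the specific value of~$\theta$ — the two inclusions together imply that the sets of inequivalent symmetry extensions produced by the two procedures coincide, and hence so do the resulting classification lists.
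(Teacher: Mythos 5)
Your two-inclusion structure is sound, and the first half (preliminary $\subseteq$ complete, via Proposition~\ref{pro:OnPreliminaryGroupClassification1Ibragimov} together with Proposition~\ref{pro:OnPreliminaryGroupClassification2Ibragimov} for the matching of equivalence relations) is fine. The problem is in the reverse inclusion, where you insist on realizing $\mathfrak g_{\theta_0}$ as the projection of the \emph{maximal subalgebra $\mathfrak a_{\theta_0}$ leaving the algebraic equation $\theta=\theta_0$ invariant}. This forces you to modify the $\theta$-components of a lift $\hat Q$ of $Q\in\mathfrak g_{\theta_0}$ so that $\varphi^s|_{\theta=\theta_0}=Q_{(r)}\theta_0^s$, and you must then check that the modified vector field still belongs to $\mathfrak g^\sim$. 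You flag this yourself as ``the main obstacle'' and only assert, without proof, that the admissible $\theta$-components form an affine subspace in which the constraint can be met. That assertion is exactly the nontrivial content you would need, and weak normalization by itself does not deliver it: weak normalization only says $\mathfrak g_{\theta_0}\subset\mathrm P\mathfrak g^\sim$, i.e., that \emph{some} lift exists, not that a lift adapted to the particular value $\theta_0$ exists. So as written the second inclusion is not proved.

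The gap disappears if you use the notion of appropriate subalgebra actually adopted in Definition~\ref{def:AppropriateSubalgebras}: a subalgebra $\mathfrak a\subset\mathfrak g^\sim$ is appropriate as soon as $\mathrm P\mathfrak a$ is a maximal Lie invariance algebra of some equation of the class; no invariance of the algebraic equation $\theta=\theta_0$ is required. Since every element of $\mathfrak g^\sim$ is projectable, the restriction of $\mathrm P$ to $\mathfrak g^\sim$ is a Lie algebra homomorphism onto $\mathrm P\mathfrak g^\sim$, so for any $\theta_0$ with $\mathfrak g_{\theta_0}\subset\mathrm P\mathfrak g^\sim$ (which is precisely what weak normalization guarantees) the full preimage $\mathfrak a=(\mathrm P|_{\mathfrak g^\sim})^{-1}(\mathfrak g_{\theta_0})$ is a subalgebra of $\mathfrak g^\sim$ with $\mathrm P\mathfrak a=\mathfrak g_{\theta_0}$ exactly, hence appropriate. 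Every case of complete group classification is therefore already produced by classifying appropriate subalgebras of $\mathfrak g^\sim$, which together with your first inclusion gives the corollary. This is the content of the paper's one-line justification (``in view of Definition~\ref{def:WeakNormalizationIbragimov} and Proposition~\ref{pro:OnPreliminaryGroupClassification1Ibragimov} the assertion is obvious''); your stronger claim about $\mathfrak a_{\theta_0}$ is neither needed nor established by the hypotheses.
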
 

In fact, only certain subalgebras of the equivalence algebra~$\mathfrak g^\sim$ whose projections are contained in $\mathfrak g^\cup\cap\mathrm P\mathfrak g^\sim$ should be classified. 

\begin{definition}\label{def:AppropriateSubalgebras}
Within the framework of preliminary group classification, we will call a subalgebra~$\mathfrak a$ of the equivalence algebra~$\mathfrak g^\sim$ \emph{appropriate} if its projection~$\mathrm P\mathfrak a$ to the space of equation variables is a maximal Lie invariance algebra of an equation from the class~$\mathcal L|_{\mathcal S}$.
\end{definition}

Appropriate subalgebras of~$\mathfrak g^\sim$ satisfy restrictions similar to those for appropriate algebras contained in~$\mathfrak g^\cup$.
As the kernel is included in the maximal Lie invariance algebra of any equation from the class, in view of Corollary~\ref{cor:OnKernelAlgebraAsIdeal1} any appropriate subalgebra~$\mathfrak a$ of~$\mathfrak g^\sim$ should contain, as an ideal, the trivial prolongation~$\hat{\mathfrak g}^\cap$ of the kernel algebra~$\mathfrak g^\cap$ to the arbitrary elements.
The condition that the projection~$\mathrm P\mathfrak a$ of~$\mathfrak a$ is a Lie invariance algebra of a system from~$\mathcal L|_{\mathcal S}$ can be checked by two obviously equivalent ways: 
It is sufficient to prove that there exists a value $\theta^0(x,u_{(r)})\in\mathcal S$ of the tuple of arbitrary elements~$\theta$ for which the algebraic equation $\theta=\theta^0(x,u_{(r)})$ is invariant with respect to~$\mathfrak a$. The other way is to study the consistence of the system $\mathrm{DE}_{\mathfrak a}$ with the auxiliary system of the class~$\mathcal L|_{\mathcal S}$ with respect to the arbitrary elements. 
By $\mathrm{DE}_{\mathfrak a}$ we denote the system obtained by the substitution of the coefficients of each basis element of~$\mathrm P\mathfrak a$ into the determining equations of the class~$\mathcal L|_{\mathcal S}$.
Simultaneously we should verify the condition if the projection~$\mathrm P\mathfrak a$ is really the maximal Lie invariance algebra for some systems from~$\mathcal L|_{\mathcal S}$. 

\begin{remark}
Often the equivalence algebra can be represented as a semi-direct sum of the ideal associated with the kernel algebra and a certain subalgebra. To obtain preliminary group classification in this case, we in fact need to classify only inequivalent subalgebras of the complement of the kernel ideal. Projections of these subalgebras to the space of equation variables will give all possible inequivalent extensions of the kernel. This was the case for the class of generalized diffusion equations investigated in~\cite{card11Ay}. In the present paper, the situation will be different, see Remark~\ref{rem:OnStructureOfequivAlgebra}.
\end{remark}

The importance of semi-normalization of a class of differential equations for the optimal solution of the group classification problem for this class 
is connected with the following property of semi-normalized classes. 

\begin{proposition}\label{pro:OnGroupClassificationsInSemiNormClass}
For a class of differential equations that is semi-normalized, the group classification up to equivalence generated by the corresponding equivalence group coincides 
with the group classification up to general point equivalence.
\end{proposition}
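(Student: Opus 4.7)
The plan is to reduce the statement to a direct comparison of two equivalence relations on the set~$\mathcal S$ of arbitrary elements, and then to show that semi-normalization forces these two relations to coincide. Recall that the group classification outputs a list of pairs $(\theta,\mathfrak g_\theta)$ taken modulo some equivalence on~$\mathcal S$; any two equivalences that partition~$\mathcal S$ in the same way produce identical classification lists (after obvious relabelling of representatives). Hence it suffices to check that on~$\mathcal S$ the relation $\theta\sim_{G^\sim}\tilde\theta$ (i.e.\ $\tilde\theta=\Phi\theta$ for some $\Phi\in G^\sim$) coincides with the relation $\theta\sim_{\mathrm{pt}}\tilde\theta$ defined by $\mathrm T(\theta,\tilde\theta)\ne\varnothing$, i.e.\ existence of an arbitrary point transformation relating $\mathcal L_\theta$ to $\mathcal L_{\tilde\theta}$.

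The inclusion $\sim_{G^\sim}\subset\sim_{\mathrm{pt}}$ holds in every class and requires no hypothesis: if $\tilde\theta=\Phi\theta$ with $\Phi\in G^\sim$, then $\Phi|_{(x,u)}$ is a point transformation mapping $\mathcal L_\theta$ onto $\mathcal L_{\tilde\theta}$, so that $\Phi|_{(x,u)}\in\mathrm T(\theta,\tilde\theta)$ and the latter set is nonempty. This step is essentially by construction of the projection from the equivalence group to point transformations of $(x,u)$.

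For the reverse inclusion I would invoke semi-normalization (Definition~\ref{def:SemiNormalizationIbragimov}) directly. Suppose $\theta\sim_{\mathrm{pt}}\tilde\theta$ and pick any $\varphi\in\mathrm T(\theta,\tilde\theta)$; the triple $(\theta,\tilde\theta,\varphi)$ then belongs to the set of admissible transformations $\mathrm T(\mathcal L|_{\mathcal S})$. By semi-normalization, there exist $\Phi\in G^\sim$ and $\tilde\varphi\in G_\theta$ with $\tilde\theta=\Phi\theta$ and $\varphi=\Phi|_{(x,u)}\circ\tilde\varphi$. The first identity is already $\theta\sim_{G^\sim}\tilde\theta$; the auxiliary factor $\tilde\varphi\in G_\theta$ simply absorbs the part of $\varphi$ that need not lie in the projection of $G^\sim$, and is irrelevant for the comparison of the two equivalence relations on~$\mathcal S$.

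Combining the two inclusions yields $\sim_{G^\sim}\,=\,\sim_{\mathrm{pt}}$ on~$\mathcal S$, so the two classifications produce identical orbit structures and hence identical lists of inequivalent cases. There is no genuine obstacle in this argument; it is almost an unwinding of Definition~\ref{def:SemiNormalizationIbragimov}. The only point that deserves care is the observation that the additional factor $\tilde\varphi\in G_\theta$ appearing in the decomposition of $\varphi$ does not affect the $\mathcal S$-level relation (it acts trivially on the arbitrary elements because it preserves $\mathcal L_\theta$), which explains why the implication fails in general without the semi-normalization hypothesis and also why semi-normalization, rather than the stronger normalization, is already sufficient.
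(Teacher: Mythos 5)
Your proof is correct and follows the same route the paper implicitly takes: the paper states this proposition without a formal proof, treating it as a direct unwinding of Definition~\ref{def:SemiNormalizationIbragimov}, which is precisely what you do by showing the two equivalence relations on~$\mathcal S$ coincide (the nontrivial inclusion being that any admissible transformation $(\theta,\tilde\theta,\varphi)$ yields $\tilde\theta=\Phi\theta$ for some $\Phi\in G^\sim$, with the symmetry factor $\tilde\varphi\in G_\theta$ irrelevant at the level of arbitrary elements). Your closing remark correctly identifies why semi-normalization, and not full normalization, is exactly the needed hypothesis.
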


In other words, in a semi-normalized class of differential equations there are no additional equivalence transformations 
between cases of Lie symmetry extensions which are inequivalent with respect to the corresponding equivalence group. 
This results in a clear representation of the final classification list. 
As normalized classes of differential equations are both semi-normalized and weakly normalized, 
it is especially convenient to carry out group classification in such classes by the algebraic method.  
This is why the normalization property can be used as a criterion for selecting classes of differential equations to be classified 
or for splitting of such classes into subclasses which are appropriate for group classification.

\section{Equivalence algebra}\label{sec:EquivalenceAlgebraIbragimovClass}

The equivalence algebra of the entire class of equation of the general form~\eqref{eq:IbragimovClass} was already computed in~\cite{ibra91Ay}. 
It coincides with the equivalence algebra of the class considered in the present paper, which consists of purely nonlinear equations of the above form. 
This is why we only represent generating elements of this algebra in a convenient form and refer the reader to~\cite{ibra91Ay} for more details.
The equivalence algebra~$\mathfrak g^\sim$ of class~\eqref{eq:IbragimovClass} is generated by the vector fields
\begin{gather}\label{eq:EquivalenceAlgebraGenWaveEqs}
\begin{split}&
\DDD^u=u\p_u+u_x\p_{u_x}+g\p_g,\quad
\DDD^t=t\p_t-2f\p_f-2g\p_g,\quad
\PP^t=\p_t,\\&
\DDD(\varphi)=\varphi\p_x-\varphi_xu_x\p_{u_x}+2\varphi_xf\p_f+\varphi_{xx}u_xf\p_g,\\&
\GG(\psi)=\psi\p_u+\psi_x\p_{u_x}-\psi_{xx}f\p_g,\quad
\FF^1=t\p_u, \quad
\FF^2=t^2\p_u+2\p_g,
\end{split}
\end{gather}
where~$\varphi=\varphi(x)$ and~$\psi=\psi(x)$ run through the set of smooth functions of~$x$.
The nonvanishing commutation relations between the these vector fields are exhausted by
\begin{align*}
 &[\GG(\psi),\DDD^u]=\GG(\psi), \quad [\FF^1,\DDD^u]=\FF^1, \quad [\FF^2,\DDD^u]=\FF^2,\\
 &[\DDD^t,\FF^1]=\FF^1, \quad [\DDD^t,\FF^2]=2\FF^2,\\
 &[\PP^t,\DDD^t]=\PP^t, \quad [\PP^t,\FF^1]=\GG(1), \quad [\PP^t,\FF^2]=2\FF^1, \\
 &[\DDD(\varphi^1),\DDD(\varphi^2)]=\DDD(\varphi^1\varphi^2_x-\varphi^1_x\varphi^2), \quad [\DDD(\varphi),\GG(\psi)]=\GG(\varphi\psi_x).
\end{align*}

{\sloppy
In fact, in~\eqref{eq:EquivalenceAlgebraGenWaveEqs} we present only the projections of generating elements of~$\mathfrak g^\sim$ to the space of $(t,x,u,u_x,f,g)$ 
instead of the whole elements, which are vector fields in the space of $(t,x,u_{(2)},f,g)$.  
The terms of generating vector fields, which are associated with derivatives of~$u$,
can be computed via prolongation from the coefficients of~$\p_t$, $\p_x$ and~$\p_u$ and, therefore, are not essential. 
However, it is necessary to include the terms with~$\p_{u_x}$ in the representation of these vector fields 
in order to ensure proper commutation relations between them. 
Moreover, the derivative~$u_x$ is a significant argument of the parameter-functions~$f$ and~$g$ and hence the minimal space 
on which equivalence transformations can be correctly restricted is the space of the variables $(t,x,u,u_x,f,g)$. 
This is why at least the projections to the same space should be given for vector fields from~$\mathfrak g^\sim$.

}

The form~(3.16) of the equivalence algebra given in~\cite{ibra91Ay} differs from~\eqref{eq:EquivalenceAlgebraGenWaveEqs}. Namely, the operators $\GG(1)$ and $\GG(x)$ were singled out from the family~$\{\GG(\psi)\}$. Additionally, we combined the operators from~\cite{ibra91Ay} to separate scalings with respect to $u$, which gives simpler commutation relations between generating vector fields.

In order to compute the complete equivalence group of class~\eqref{eq:IbragimovClass} by the algebraic method
(cf. Section~\ref{sec:CalculationOfEquivAlgebraByAlgebraicMethod}) 
we need to know a set of megaideals of the equivalence algebra~$\mathfrak g^\sim$ of this class. 

Given a Lie algebra~$\mathfrak g$, a \emph{megaideal} $\mathfrak i$ is a vector subspace of $\mathfrak g$ that is invariant under any transformation from the automorphism group $\mathrm{Aut}(\mathfrak g)$ of~$\mathfrak g$ 
\cite{bihl11Cy,popo03Ay}.
That is, we have $\mathfrak T \mathfrak i=\mathfrak i$ for a megaideal~$\mathfrak i$ of~$\mathfrak g$, whenever $\mathfrak T$ is a transformation from $\mathrm{Aut}(\mathfrak g)$. Any megaideal of~$\mathfrak g$ is an ideal and a characteristic ideal of~$\mathfrak g$. Both the improper subalgebras of~$\mathfrak g$ (the zero subspace and $\mathfrak g$ itself) are megaideals of~$\mathfrak g$. 
If $\mathfrak i_1$ and $\mathfrak i_2$ are megaideals of~$\mathfrak g$ then so are $\mathfrak i_1+\mathfrak i_2,$ $\mathfrak i_1\cap \mathfrak i_2$ and $[\mathfrak i_1,\mathfrak i_2]$, i.e., sums, intersections and Lie products of megaideals are again megaideals. 
If $\mathfrak i_2$ is a megaideal of $\mathfrak i_1$ and $\mathfrak i_1$ is a megaideal of $\mathfrak g$ then $\mathfrak i_2$ is a megaideal of $\mathfrak g$, i.e., megaideals of megaideals are also megaideals.
The centralizer of a megaideal is a megaideal. 
All elements of the derived, upper and lower central series of a Lie algebra are its megaideals. In particular, the center and the derivative of a Lie algebra are its megaideals.
The radical~$\mathfrak r$ and nil-radical~$\mathfrak n$ (i.e., the maximal solvable and nilpotent ideals, respectively) of~$\mathfrak g$ 
as well as different Lie products, sums and intersections involving~$\mathfrak g$, $\mathfrak r$ and~$\mathfrak n$ 
($[\mathfrak g,\mathfrak r]$, $[\mathfrak r,\mathfrak r]$, $[\mathfrak g,\mathfrak n]$, $[\mathfrak r,\mathfrak n]$,  $[\mathfrak n,\mathfrak n]$, etc.) are megaideals of~$\mathfrak g$.

Here we only prove that, roughly speaking, the centralizer of a megaideal is a megaideal. 

\begin{proposition}
If~$\mathfrak i$ is a megaideal of~$\mathfrak g$ 
then the centralizer $\mathrm C_{\mathfrak g}(\mathfrak i)$ of~$\mathfrak i$ in~$\mathfrak g$ 
is also a megaideal of~$\mathfrak g$. 
\end{proposition}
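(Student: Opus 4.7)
The plan is to verify the defining property of a megaideal directly: show that for every automorphism $\mathfrak{T}\in\mathrm{Aut}(\mathfrak g)$, the image $\mathfrak T\bigl(\mathrm C_{\mathfrak g}(\mathfrak i)\bigr)$ coincides with $\mathrm C_{\mathfrak g}(\mathfrak i)$. The argument rests on one crucial observation, namely that since $\mathfrak i$ is itself a megaideal, it is preserved not only by $\mathfrak T$ but also by $\mathfrak T^{-1}\in\mathrm{Aut}(\mathfrak g)$. Combined with the fact that $\mathfrak T$ is a Lie algebra homomorphism (so it maps $[\,\cdot\,,\,\cdot\,]$ to $[\,\cdot\,,\,\cdot\,]$ and $0$ to $0$), this should immediately yield the claim.

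Concretely, I would first fix $\mathfrak T\in\mathrm{Aut}(\mathfrak g)$ and $x\in\mathrm C_{\mathfrak g}(\mathfrak i)$, and set out to prove $\mathfrak Tx\in\mathrm C_{\mathfrak g}(\mathfrak i)$. Pick an arbitrary $y\in\mathfrak i$. Applying $\mathfrak T^{-1}$ and using the megaideal property of $\mathfrak i$, we have $\mathfrak T^{-1}y\in\mathfrak i$, so the defining property of $\mathrm C_{\mathfrak g}(\mathfrak i)$ gives $[x,\mathfrak T^{-1}y]=0$. Applying the homomorphism $\mathfrak T$ to both sides and using $\mathfrak T[x,\mathfrak T^{-1}y]=[\mathfrak Tx,y]$, we obtain $[\mathfrak Tx,y]=0$. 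Since $y\in\mathfrak i$ was arbitrary, $\mathfrak Tx\in\mathrm C_{\mathfrak g}(\mathfrak i)$, establishing $\mathfrak T\bigl(\mathrm C_{\mathfrak g}(\mathfrak i)\bigr)\subseteq\mathrm C_{\mathfrak g}(\mathfrak i)$.

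Applying the same inclusion to $\mathfrak T^{-1}\in\mathrm{Aut}(\mathfrak g)$ in place of $\mathfrak T$, we get $\mathfrak T^{-1}\bigl(\mathrm C_{\mathfrak g}(\mathfrak i)\bigr)\subseteq\mathrm C_{\mathfrak g}(\mathfrak i)$, hence $\mathrm C_{\mathfrak g}(\mathfrak i)\subseteq\mathfrak T\bigl(\mathrm C_{\mathfrak g}(\mathfrak i)\bigr)$. Combined with the previous inclusion this yields $\mathfrak T\bigl(\mathrm C_{\mathfrak g}(\mathfrak i)\bigr)=\mathrm C_{\mathfrak g}(\mathfrak i)$, so $\mathrm C_{\mathfrak g}(\mathfrak i)$ is a megaideal.

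There is essentially no obstacle: the only thing one needs to see is that $\mathfrak i$ being invariant under $\mathfrak T^{-1}$ (and not merely under $\mathfrak T$) is exactly what lets us pull $y\in\mathfrak i$ back into $\mathfrak i$ before testing the centralizer condition. This is automatic from the megaideal hypothesis, which is why the proof is a short one-line chase; no further structural facts about $\mathfrak g$ or $\mathfrak i$ are needed.
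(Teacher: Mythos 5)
Your proof is correct and is essentially identical to the paper's: both pull an arbitrary element of $\mathfrak i$ back through $\mathfrak T^{-1}$ (using that $\mathfrak i$ is a megaideal), apply the centralizer condition, and push forward with $\mathfrak T$ to conclude $[\mathfrak Tx,y]=0$. Your extra step of establishing the reverse inclusion via $\mathfrak T^{-1}$ to get the set equality $\mathfrak T\bigl(\mathrm C_{\mathfrak g}(\mathfrak i)\bigr)=\mathrm C_{\mathfrak g}(\mathfrak i)$ is a small additional care the paper leaves implicit, but the argument is the same.
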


\begin{proof}
Consider an arbitrary $\mathfrak T\in\mathrm{Aut}(\mathfrak g)$, $v\in\mathfrak i$ and
$w\in\mathrm C_{\mathfrak g}(\mathfrak i)$. Then 
$[\mathfrak Tw,v]=[\mathfrak Tw,\mathfrak T\mathfrak T^{-1}v]=\mathfrak T[w,\mathfrak T^{-1}v]=0$ 
as $\mathfrak T^{-1}v\in\mathfrak i$ and hence $[w,\mathfrak T^{-1}v]=0$. 
This means that $\mathfrak Tw\in\mathrm C_{\mathfrak g}(\mathfrak i)$, i.e., 
$\mathrm C_{\mathfrak g}(\mathfrak i)$ is a megaideal of~$\mathfrak g$.
\end{proof}

Let $\mathfrak g=\mathfrak g^\sim$. It is easy to compute the following megaideals of~$\mathfrak g^\sim$: 
\begin{gather*}
\mathfrak g'=\langle\PP^t,\DDD(\varphi),\GG(\psi),\FF^1,\FF^2\rangle,\quad 
\mathfrak g''=\langle\DDD(\varphi),\GG(\psi),\FF^1\rangle,\quad 
\mathfrak g'''=\langle\DDD(\varphi),\GG(\psi)\rangle,\\
\mathrm C_{\mathfrak g}(\mathfrak g''')=\langle\DDD^t,\PP^t,\GG(1),\FF^1,\FF^2\rangle,\quad
\mathrm C_{\mathfrak g'}(\mathfrak g''')=\langle\PP^t,\GG(1),\FF^1,\FF^2\rangle,\\
\mathrm C_{\mathfrak g'}(\mathfrak g'')=\langle\GG(1),\FF^1,\FF^2\rangle,\quad
\mathrm Z_{\mathfrak g''}=\langle\GG(1),\FF^1\rangle,\quad
\mathrm Z_{\mathfrak g'}=\langle\GG(1)\rangle,\\
\mathrm R_{\mathfrak g}=\langle\DDD^u,\DDD^t,\PP^t,\GG(\psi),\FF^1,\FF^2\rangle,\quad 
\mathrm R_{\mathfrak g'''}=\langle\GG(\psi)\rangle, 
\end{gather*}
where $\mathfrak a'$, $\mathrm R_{\mathfrak a}$, $\mathrm Z_{\mathfrak a}$ and $\mathrm C_{\mathfrak a}(\mathfrak b)$ 
denote the derivative, the radical and the center of a Lie algebra~$\mathfrak a$ 
and the centralizer of a subalgebra~$\mathfrak b$ in~$\mathfrak a$, respectively. 
We present proofs only for the last two equalities. 

The linear span $\mathfrak s_1=\langle\DDD^u,\DDD^t,\PP^t,\GG(\psi),\FF^1,\FF^2\rangle$ 
obviously is a solvable ideal of~$\mathfrak g$. 
Moreover, it is the maximal solvable ideal of~$\mathfrak g$. 
Indeed, suppose that $\mathfrak s_1\subsetneq\mathfrak i$ and $\mathfrak i$ is an ideal of~$\mathfrak g$. 
Then there exists a smooth function~$\zeta$ of~$x$ which does not identically vanish
such that the vector field $\DDD(\zeta)$ belongs to~$\mathfrak i$. 
As $\mathfrak i$ is an ideal of~$\mathfrak g$, 
for an arbitrary smooth function~$\varphi$ of~$x$ the commutator $[\DDD(\zeta),\DDD(\varphi)]=\DDD(\zeta\varphi_x-\zeta_x\varphi)$ belongs to~$\mathfrak i$. 
If $\zeta$ is not a constant function, we define the following series of operators:
\[
R^{0k}=k^{-1}[\DDD(\zeta),\DDD(\zeta^{k+1})],\quad 
R^{jk}=k^{-1}[R^{j-1,1},R^{j-1,k+1}],\quad 
j,k=1,2,\dots.
\] 
It is possible to prove by induction that $R^{j-1,k}=\DDD(\zeta^k(\zeta\zeta_x)^{2^j-1})\ne0$, $j,k=1,2,\dots$. 
Moreover, as $R^{0k}\in\mathfrak i$, we have $R^{jk}\in\mathfrak i^{(j)}$, $j,k=1,2,\dots$, i.e., $\mathfrak i^{(j)}\ne\{0\}$ for any nonnegative integer~$j$. 
This means that the ideal~$\mathfrak i$ is not solvable. 
If $\zeta$ is a constant function, we can set $\zeta\equiv1$. 
We choose any smooth function~$\varphi$ of~$x$ with $\varphi_{xx}\not\equiv0$ and denote~$\varphi_x$ by~$\tilde\zeta$.
As the commutator $[\DDD(1),\DDD(\varphi)]=\DDD(\tilde\zeta)$ belongs to~$\mathfrak i$, 
the consideration for the previous case again implies that the ideal~$\mathfrak i$ is not solvable. 
Therefore, $\mathfrak s_1$ is really the maximal solvable ideal of~$\mathfrak g$, i.e., $\mathrm R_{\mathfrak g}=\mathfrak s_1$.

The linear span $\mathfrak s_2=\langle\GG(\psi)\rangle$ is an Abelian and, therefore, solvable ideal of~$\mathfrak g'''$. 
The maximality of this solvable ideal is proved in the same way as for~$\mathfrak s_1$. 
Hence $\mathrm R_{\mathfrak g'''}=\mathfrak s_2$.

The same megaideals can be obtained in different ways. For example, $\langle\GG(1)\rangle=\mathrm Z_{\mathfrak g'}=\mathrm Z_{\mathfrak g'''}$.

To find more megaideals, we can calculate the automorphism group~$\mathrm{Aut}(\mathfrak m)$ of a finite-dimensional megaideal~$\mathfrak m$ 
and then determine megaideals of~$\mathfrak m$ as subspaces of~$\mathfrak m$ invariant with respect to~$\mathrm{Aut}(\mathfrak m)$. 
In the course of calculating the automorphisms we can use knowledge about simple megaideals of~$\mathfrak m$. 
Consider $\mathfrak m=\mathrm C_{\mathfrak g}(\mathfrak g''')=\langle\GG(1),\FF^1,\FF^2,\PP^t,\DDD^t\rangle$. 
Then 
\[
\mathfrak m'=\langle\GG(1),\FF^1,\FF^2,\PP^t\rangle,\quad 
\mathfrak m''=\langle\GG(1),\FF^1\rangle,\quad
\mathrm Z_{\mathfrak m}=\langle\GG(1)\rangle,\quad
\mathrm C_{\mathfrak m}(\mathfrak m'')=\langle\GG(1),\FF^1,\FF^2\rangle.
\]
The presence of the above set of nested megaideals is equivalent to that for any automorphism~$A$ of~$\mathfrak m$,
its matrix in the basis $\{\GG(1),\FF^1,\FF^2,\PP^t,\DDD^t\}$ is upper triangular with nonzero diagonal elements. 
In particular, 
\begin{gather*}
A\PP^t=a_{14}\GG(1)+a_{24}\FF^1+a_{34}\FF^2+a_{44}\PP^t, \\ 
A\DDD^t=a_{15}\GG(1)+a_{25}\FF^1+a_{35}\FF^2+a_{45}\PP^t+a_{55}\DDD^t,
\end{gather*}
where $a_{44}a_{55}\ne0$. 
As $[\PP^t,\DDD^t]=\PP^t$ and $A$ is an automorphism of~$\mathfrak m$, we should have $[A\PP^t,A\DDD^t]=A\PP^t$. 
Collecting coefficients of basis elements in the last equality, we derive a system of equations with respect to~$a$'s 
which implies, in view of the condition $a_{44}\ne0$, that $a_{55}=1$, $a_{34}=0$, $a_{24}=a_{44}a_{35}$ and 
$a_{14}=a_{44}a_{25}-a_{45}a_{24}$. 
As a result, we find one more megaideal $\langle\GG(1),\FF^1,\PP^t\rangle$ of~$\mathfrak m$ and, therefore, 
$\mathfrak g^\sim$.
We will use this megaideal in the course of the computation of the complete equivalence group of class~\eqref{eq:IbragimovClass} by the algebraic method
in Section~\ref{sec:CalculationOfEquivAlgebraByAlgebraicMethod}.

\section{Preliminary study of admissible transformations}\label{sec:PreliminaryStudyOfAdmTrans}

The infinitesimal invariance criterion allows finding of all continuous equivalence transformations by means of solving a linear system of partial differential equations. In order to determine the complete point equivalence group (including both continuous and discrete equivalence transformations) and the set of admissible transformations, it is necessary to apply the direct method. We will start our consideration with a preliminary investigation of the set of admissible transformations, which will give relevant information also on the equivalence group of the class~\eqref{eq:IbragimovClass}. That is, we directly seek for all point transformations
\begin{equation}\label{eq:GeneralEquivalenceTransformation}
    \tilde t = T(t,x,u), \quad \tilde x = X(t,x,u), \quad \tilde u = U(t,x,u),
\end{equation}
for which the Jacobian $\mathrm J=\p(T,X,U)/\p(t,x,u)$ does not vanish, that map a fixed equation of the form~\eqref{eq:IbragimovClass} to an equation of the same form,
\[
    \tilde u_{\tilde t\tilde t} = \tilde f(\tilde x,\tilde u_{\tilde x})\tilde u_{\tilde x\tilde x} + \tilde g(\tilde x,\tilde u_{\tilde x}).
\]
To carry out this transformation in practice, it is necessary to find the transformation rules for the various derivatives of $\tilde u$ with respect to $\tilde t$ and $\tilde x$. In order to obtain them we apply the total derivative operators $\DD_t$ and $\DD_x$, respectively, to the expression $\tilde u(\tilde t,\tilde x) = U(t,x,u)$, assuming $\tilde t = T(t,x,u)$ and $\tilde x = X(t,x,u)$. This gives
\begin{align*}
 &\tilde u_{\tilde t}\DD_tT + \tilde u_{\tilde x}\DD_tX - \DD_tU = 0, \\
 &\tilde u_{\tilde t}\DD_xT + \tilde u_{\tilde x}\DD_xX - \DD_xU = 0, \\
 &\tilde u_{\tilde t\tilde t}(\DD_tT)^2 + 2\tilde u_{\tilde t\tilde x}(\DD_tX)(\DD_tT) + \tilde u_{\tilde x\tilde x}(\DD_tX)^2+\tilde u_{\tilde t}\DD_t^2T+\tilde u_{\tilde x}\DD_t^2X - \DD_t^2U = 0, \\
 &\tilde u_{\tilde t\tilde t}(\DD_xT)^2 + 2\tilde u_{\tilde t\tilde x}(\DD_xX)(\DD_xT) + \tilde u_{\tilde x\tilde x}(\DD_xX)^2+\tilde u_{\tilde t}\DD_x^2T+\tilde u_{\tilde x}\DD_x^2X - \DD_x^2U = 0.
\end{align*}
Solving the last two equations for $u_{tt}$ and $u_{xx}$, respectively, and substituting the results into~\eqref{eq:IbragimovClass}, we obtain
\begin{align}\label{eq:TransformationOfIbragimovClass}
\begin{split}
&\tilde u_{\tilde t\tilde t}(\DD_tT)^2+ 2\tilde u_{\tilde t\tilde x}(\DD_tT)(\DD_tX)+ \tilde u_{\tilde x\tilde x}(\DD_tX)^2
+\tilde u_{\tilde t}V^tT+\tilde u_{\tilde x}V^tX-V^tU\\
&{}= f\bigl(\tilde u_{\tilde t\tilde t}(\DD_xT)^2+2\tilde u_{\tilde t\tilde x}(\DD_xT)(\DD_xX) +\tilde u_{\tilde x\tilde x}(\DD_xX)^2
+\tilde u_{\tilde t}V^xT+\tilde u_{\tilde x}V^xX-V^xU\bigr)\\
&{}-g(\tilde u_{\tilde t} T_u+\tilde u_{\tilde x}X_u-U_u),
\end{split}
\end{align}
where we use the notation $V^t=\p_{tt}+2u_t\p_{tu}+u_t^{\;2}\p_{uu}$ and $V^x=\p_{xx}+2u_x\p_{xu}+u_x^{\ 2}\p_{uu}$ and additionally have to set $\tilde u_{\tilde t\tilde t}=\tilde f\tilde u_{\tilde x\tilde x} + \tilde g$ wherever it occurs.
As the derivative $\tilde u_{\tilde t\tilde x}$ does not appear in the transformed form of equations from the class~\eqref{eq:IbragimovClass}, 
the associated coefficient in \eqref{eq:TransformationOfIbragimovClass} vanishes,~i.e.
\begin{align}\label{eq:DeterminingEqsEquivalenceTransformations1}
 (T_t+T_uu_t)(X_t+X_uu_t)=f(T_x+T_uu_x)(X_x+X_uu_x).
\end{align}
Eq.~\eqref{eq:DeterminingEqsEquivalenceTransformations1} involves only original (untilded) variables and is a polynomial in~$u_t$. 
Therefore, we can split it with respect to~$u_t$ by collecting the coefficients of different powers of this derivative. 
(Note that we cannot as directly split Eq.~\eqref{eq:DeterminingEqsEquivalenceTransformations1} with respect to the derivative~$u_x$, which is an argument of the function~$f$.) 
As a result, we derive that
\begin{gather}
\label{eq:DeterminingEqsEquivalenceTransformations1ut2}
u_t^2\colon\quad T_uX_u=0,
\\ \label{eq:DeterminingEqsEquivalenceTransformations1ut1}
u_t^1\colon\quad T_uX_t+T_tX_u=0,
\\ \label{eq:DeterminingEqsEquivalenceTransformations1ut0}
u_t^0\colon\quad T_tX_t = f(T_xX_x+(T_uX_x+T_xX_u)u_x).
\end{gather}
Multiplying Eq.~\eqref{eq:DeterminingEqsEquivalenceTransformations1ut1} by~$T_u$ (resp.\ $X_u$), we obtain in view of Eq.~\eqref{eq:DeterminingEqsEquivalenceTransformations1ut2} that $T_uX_t=0$ (resp.\ $T_tX_u=0$). 
We apply the trick with the multiplication by~$T_u$ (resp.\ $X_u$) also to Eq.~\eqref{eq:DeterminingEqsEquivalenceTransformations1ut0} 
and take into account the equations $T_uX_u=0$, $T_uX_t=0$ and $T_tX_u=0$ already derived and the inequality~$f\ne0$. 
This gives equations which involve no arbitrary elements and hence can be further split with respect to~$u_x$. 
Therefore, these equations are equivalent to the equations $T_uX_x=0$ and $X_uT_x=0$, respectively. 
The system $T_uX_t=0$, $T_uX_x=0$, $T_uX_u=0$ (resp.\ $X_uT_t=0$, $X_uT_x=0$, $X_uT_u=0$) implies that $T_u=0$ (resp.\ $X_u=0$)
since otherwise the Jacobian~$\mathrm J$ of the point transformation~\eqref{eq:GeneralEquivalenceTransformation} vanishes. 
The condition 
\[
T_u=X_u=0
\]
means that any admissible point transformation of the class~\eqref{eq:IbragimovClass} is fiber-preserving. 
In view of this condition, Eqs.~\eqref{eq:DeterminingEqsEquivalenceTransformations1ut2} and~\eqref{eq:DeterminingEqsEquivalenceTransformations1ut1} are identically satisfied 
and the remainder of Eq.~\eqref{eq:DeterminingEqsEquivalenceTransformations1ut0} is 
\begin{equation}\label{eq:DeterminingEqsEquivalenceTransformations1Reminder}
T_tX_t = fT_xX_x. 
\end{equation}

After substituting $\tilde u_{\tilde t\tilde t}=\tilde f\tilde u_{\tilde x\tilde x} + \tilde g$, 
we can also split~\eqref{eq:TransformationOfIbragimovClass} with respect to $\tilde u_{\tilde x\tilde x}$, which gives, in view of $T_u=X_u=0$, the equation
\begin{equation}\label{eq:DeterminingEqsEquivalenceTransformations2}
f\tilde f T_t^{\;2}+X_t^{\;2} = f(\tilde fT_x^{\;2}+X_x^{\;2}).
\end{equation}
Unfortunately, the direct splitting  with respect to other derivatives in Eq.~\eqref{eq:TransformationOfIbragimovClass} is not possible.
The remaining part of~\eqref{eq:TransformationOfIbragimovClass} therefore~is
\begin{align}\label{eq:DeterminingEqsEquivalenceTransformations3}
\begin{split}
 &\tilde g T_t^2 + \tilde u_{\tilde t} T_{tt} + \tilde u_{\tilde x}X_{tt} -(U_{tt}+2U_{tu}u_t + U_{uu}u_t^2)\\&{}
= f(\tilde g T_x^2+\tilde u_{\tilde t}T_{xx} + \tilde u_{\tilde x}X_{xx}-(U_{xx}+2U_{xu}u_x+U_{uu}u_x^2))+ gU_u.
\end{split}
\end{align}
The additional condition to keep in mind is the nondegeneracy of transformations~\eqref{eq:GeneralEquivalenceTransformation}, 
which in view of the conditions $T_u=X_u=0$ is reduced to the inequality $U_u(T_tX_x-T_xX_t)\ne0$ and hence $(T_tX_x-T_xX_t)\ne0$ and $U_u\ne0$.

\section{Equivalence group}\label{sec:EquivGroup}

At this point, we continue the consideration by computing the equivalence group as it is needed even for the analysis of the determining equations for coefficients of Lie symmetry operators and the exhaustive description of admissible transformations. In the case of equivalence transformations, the arbitrary elements~$f$ and~$g$ should be varied. We can therefore split 
the equations~\eqref{eq:DeterminingEqsEquivalenceTransformations1Reminder}, \eqref{eq:DeterminingEqsEquivalenceTransformations2} and~\eqref{eq:DeterminingEqsEquivalenceTransformations3} 
also with respect to the arbitrary elements. 
Eq.~\eqref{eq:DeterminingEqsEquivalenceTransformations1Reminder} and the nondegeneracy constraint $T_tX_x-T_xX_t\ne0$ imply 
that either $X_x=T_t=0$, $T_x\ne0$ and $X_t\ne0$ or $X_t=T_x=0$, $X_x\ne0$ and $T_t\ne0$.

For $X_x=T_t=0$, Eq.~\eqref{eq:DeterminingEqsEquivalenceTransformations2} is simplified to $X_t^2=f\tilde fT_x^2$.  
As the expression for the derivative~$u_x$ in the new variables is $u_x=(T_x\tilde u_{\tilde t}-U_x)/U_u$, 
i.e., it does not involve~$\tilde u_{\tilde x}$, the equality $X_t^2=f\tilde fT_x^2$ can be split into the two equations  
$T_x=0$ and $X_t=0$, which contradict the nondegeneracy condition.

Therefore we necessarily have $X_t=T_x=0$ and thus $T=T(t)$, $X=X(x)$, where $X_x\ne0$ and $T_t\ne0$. 
Then Eq.~\eqref{eq:DeterminingEqsEquivalenceTransformations2} is reduced to $\tilde fT_t^2=fX_x^2$ 
and the differentiation of this equation with respect to~$t$ yields
\begin{equation}\label{eq:DeterminingEqsEquivalenceTransformations2a}
 2T_tT_{tt}\tilde f + \tilde f_{\tilde u_{\tilde x}}\frac{U_{tx}+U_{tu}u_x}{X_x} = 0.
\end{equation}
Since Eq.~\eqref{eq:DeterminingEqsEquivalenceTransformations2a} holds for all~$\tilde f$, we can split it and derive $T_{tt}=0$, $U_{xt}=0$ and $U_{tu}=0$. Collecting coefficients of $u_t^2$ in Eq.~\eqref{eq:DeterminingEqsEquivalenceTransformations3} we moreover find that $U_{uu}=0$. Taking all the constraints derived into account, Eq.~\eqref{eq:DeterminingEqsEquivalenceTransformations3} reads
\[
 \tilde g T_t^2 - U_{tt} = f\left(\frac{U_uu_x+U_x}{X_x}X_{xx}-U_{xx}-2U_{xu}u_x\right)+gU_u.
\]
Differentiating this equation with respect to~$u$ and $t$ allows deriving that $U_{xu}=0$ and $U_{ttt}=0$.

Collecting all the restrictions derived up to now, any equivalence transformation must satisfy the following system of differential equations
\begin{align}\label{eq:DeterminingEqsEquivalenceTransformationsFinalSystem}
\begin{split}
&T_u=T_x=T_{tt}=0, \quad X_u=X_t=0, \\
&U_{uu} = U_{tu} = U_{xu} = U_{tx} = U_{ttt} = 0.
\end{split}
\end{align}
Integrating the above system in view of the nondegeneracy condition $\mathrm J\ne0$, we proved the following theorem:

\begin{theorem}\label{thm:EquivalenceGroupIbragrimovClass}
The equivalence group~$G^\sim$ of the class~\eqref{eq:IbragimovClass} consists of the transformations
\begin{align}\label{eq:EquivalenceGroupIbragimovClass}
\begin{split}
 &\tilde t = c_1t+c_0, \quad \tilde x=\varphi(x), \quad \tilde u = c_2u+c_4t^2+c_3t+\psi(x), \quad \tilde u_{\tilde x}=\frac{c_2u_x+\psi_x}{\varphi_x},\\
 &\tilde f = \frac{\varphi_x^2}{c_1^2}f, \quad \tilde g = \frac{1}{c_1^2}\left(c_2g+\frac{c_2u_x+\psi_x}{\varphi_x}\varphi_{xx}f-\psi_{xx}f+2c_4\right),
\end{split}
\end{align}
where $c_0$, \dots, $c_4$ are arbitrary constants satisfying the condition $c_1c_2\ne0$ and $\varphi$ and~$\psi$ run through the set of smooth functions of~$x$, $\varphi_x\ne0$.
\end{theorem}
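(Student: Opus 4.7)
The plan is to integrate the linear system \eqref{eq:DeterminingEqsEquivalenceTransformationsFinalSystem}, apply the nondegeneracy constraint $\mathrm J\ne 0$, and back-substitute the resulting forms of $T$, $X$ and $U$ into the remaining equations \eqref{eq:DeterminingEqsEquivalenceTransformations2} and \eqref{eq:DeterminingEqsEquivalenceTransformations3} to read off the explicit transformation laws for $\tilde f$ and $\tilde g$. All the genuine direct-method work has already been done in Sections \ref{sec:PreliminaryStudyOfAdmTrans} and \ref{sec:EquivGroup}; the remaining argument is essentially bookkeeping.

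First I would integrate the equations for $T$ and $X$. The conditions $T_u=T_x=0$ together with $T_{tt}=0$ give $T=c_1t+c_0$, while $X_u=X_t=0$ yields $X=\varphi(x)$ for an arbitrary smooth function $\varphi$. For $U$, the equations $U_{uu}=U_{tu}=U_{xu}=0$ show that $U_u$ is a universal constant, say $c_2$, whence $U=c_2u+V(t,x)$. The mixed condition $U_{tx}=0$ separates the variables, $V=P(t)+\psi(x)$, and $U_{ttt}=0$ forces $P$ to be at most quadratic in $t$; absorbing its constant term into $\psi$ I record $U=c_2u+c_4t^2+c_3t+\psi(x)$. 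The nondegeneracy condition $\mathrm J=T_tX_xU_u=c_1\varphi_xc_2\ne 0$ then enforces $c_1c_2\ne 0$ and $\varphi_x\ne 0$.

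The transformation rule for $\tilde u_{\tilde x}$ is read off from the first-order prolongation relation $\tilde u_{\tilde t}T_x+\tilde u_{\tilde x}X_x=U_x+U_uu_x$ already derived in Section \ref{sec:PreliminaryStudyOfAdmTrans}; setting $T_x=0$ immediately gives $\tilde u_{\tilde x}=(c_2u_x+\psi_x)/\varphi_x$. Next, equation \eqref{eq:DeterminingEqsEquivalenceTransformations2} reduces, in view of $T_x=X_t=0$, to $\tilde fT_t^2=fX_x^2$, producing $\tilde f=\varphi_x^2f/c_1^2$. Finally, substituting the explicit forms of $T$, $X$, $U$ and $\tilde u_{\tilde x}$ into \eqref{eq:DeterminingEqsEquivalenceTransformations3} and using $U_{tt}=2c_4$, $U_x=\psi_x$, $U_{xx}=\psi_{xx}$, $U_{xu}=0$, $X_{xx}=\varphi_{xx}$ yields
\[
\tilde gc_1^2-2c_4=f\!\left(\frac{c_2u_x+\psi_x}{\varphi_x}\varphi_{xx}-\psi_{xx}\right)+c_2g,
\]
which after rearrangement gives precisely the stated formula for $\tilde g$.

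No serious obstacle is expected: the system \eqref{eq:DeterminingEqsEquivalenceTransformationsFinalSystem} is already integrable by inspection and the back-substitution is mechanical. The only places to be careful are (i) recognizing that the free additive constant in $P(t)$ should be absorbed into $\psi(x)$ so as not to introduce a spurious parameter, and (ii) keeping track of the splitting of roles between tilded and untilded derivatives when inserting $\tilde u_{\tilde x}$ into \eqref{eq:DeterminingEqsEquivalenceTransformations3}. Once these bookkeeping points are handled, the form \eqref{eq:EquivalenceGroupIbragimovClass} of $G^\sim$ follows, and the conditions $c_1c_2\ne 0$, $\varphi_x\ne 0$ secure invertibility.
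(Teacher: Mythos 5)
Your proposal is correct and coincides with the paper's own proof: the theorem is obtained precisely by integrating the system \eqref{eq:DeterminingEqsEquivalenceTransformationsFinalSystem} under $\mathrm J\ne0$ and back-substituting into the reduced forms of \eqref{eq:DeterminingEqsEquivalenceTransformations2} (namely $\tilde fT_t^2=fX_x^2$) and \eqref{eq:DeterminingEqsEquivalenceTransformations3}. Your bookkeeping of $T$, $X$, $U$, $\tilde u_{\tilde x}$, $\tilde f$ and $\tilde g$, including the absorption of the constant term of $P(t)$ into $\psi$, matches the paper's computation exactly.
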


Comparing the equivalence algebra~\eqref{eq:EquivalenceAlgebraGenWaveEqs} and the equivalence group~\eqref{eq:EquivalenceGroupIbragimovClass} the following corollary is evident:
\begin{corollary}
The class of equations~\eqref{eq:IbragimovClass} admits three independent discrete equivalence transformations, which are given by $(t,x,u,f,g)\mapsto(-t,x,u,f,g)$, $(t,x,u,f,g)\mapsto(t,-x,u,f,g)$ and $(t,x,u,f,g)\mapsto(t,x,-u,f,-g)$.
\end{corollary}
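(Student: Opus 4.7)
The plan is to compare the parametrization of $G^\sim$ given in Theorem~\ref{thm:EquivalenceGroupIbragrimovClass} with the equivalence algebra~\eqref{eq:EquivalenceAlgebraGenWaveEqs}, which generates, via exponentiation, the connected component $G^\sim_0$ of the identity in $G^\sim$. Discrete equivalence transformations, as elements of $G^\sim$ not lying in $G^\sim_0$, are represented by the nontrivial cosets of the quotient $G^\sim/G^\sim_0$, so the corollary amounts to identifying this quotient and choosing a convenient set of three generators.

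First I would inspect the parameters appearing in~\eqref{eq:EquivalenceGroupIbragimovClass}. The constants $c_0,c_3,c_4$ range freely over $\mathbb{R}$ and contribute nothing to $G^\sim/G^\sim_0$. The constraints $c_1 \ne 0$, $c_2 \ne 0$ and $\varphi_x \ne 0$ split each of $c_1$, $c_2$ and the local diffeomorphism $\varphi$ into two connected pieces according to sign. Each generator of the equivalence algebra~\eqref{eq:EquivalenceAlgebraGenWaveEqs} produces, upon exponentiation, only transformations with $c_1>0$, $c_2>0$ and $\varphi_x>0$: $\DDD^t$ and $\DDD^u$ integrate to positive scalings in~$t$ and~$u$ respectively, while $\DDD(\varphi)$ integrates to orientation-preserving diffeomorphisms of the $x$-axis. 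Hence $G^\sim_0$ is exactly the subgroup singled out by $c_1>0$, $c_2>0$, $\varphi_x>0$, and $G^\sim/G^\sim_0 \simeq \mathbb{Z}_2\times\mathbb{Z}_2\times\mathbb{Z}_2$.

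It then remains to pick convenient representatives of the three independent sign-flip generators. Setting the free parameters to identity-like values, the choices $(c_1,c_2,\varphi(x),\psi)=(-1,1,x,0)$, $(1,1,-x,0)$ and $(1,-1,x,0)$ in~\eqref{eq:EquivalenceGroupIbragimovClass} yield exactly $(t,x,u,f,g)\mapsto(-t,x,u,f,g)$, $(t,x,u,f,g)\mapsto(t,-x,u,f,g)$ and $(t,x,u,f,g)\mapsto(t,x,-u,f,-g)$. A small check in each case of the transformation rules for~$\tilde f$ and $\tilde g$ completes the argument — in particular, $\varphi_x^2=1$ in the reflection $x\mapsto-x$ leaves $f$ invariant and cancels the $\varphi_{xx}$ contribution to $\tilde g$, whereas $c_2=-1$ flips the sign of~$g$.

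There is no serious obstacle here: once Theorem~\ref{thm:EquivalenceGroupIbragrimovClass} is in hand, the corollary is essentially a counting statement about connected components together with a verification that the stated representatives are nontrivial in $G^\sim/G^\sim_0$. The only point requiring care is tracking how the sign choices propagate through the formulas for $\tilde f$ and $\tilde g$, to confirm that the three sign flips are genuinely independent and not related by any hidden continuous deformation inside $G^\sim$.
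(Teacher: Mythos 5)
Your proposal is correct and follows essentially the same route as the paper, which simply declares the corollary ``evident'' from comparing the equivalence algebra~\eqref{eq:EquivalenceAlgebraGenWaveEqs} (generating the identity component via the signs of $c_1$, $c_2$, $\varphi_x$) with the full group~\eqref{eq:EquivalenceGroupIbragimovClass}. You merely make explicit the component count $G^\sim/G^\sim_0\simeq\mathbb{Z}_2\times\mathbb{Z}_2\times\mathbb{Z}_2$ and the verification of the $\tilde f$, $\tilde g$ components for the three representatives, which is exactly the intended argument.
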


Theorem~\ref{thm:EquivalenceGroupIbragrimovClass} implies that any transformation~$\mathcal T$ from~$G^\sim$ of the class~\eqref{eq:IbragimovClass} can be represented as the composition
\[
  \mathcal T = \mathscr D^t(c_1)\mathscr P^t(c_0)\mathscr D(\varphi)\mathscr D^u(c_2)\mathscr F^1(c_4)\mathscr F^2(c_3)\mathscr G(\psi),
\]
where
\[
\arraycolsep=0ex
\begin{array}{lllllll}
\mathscr P^t(c_0)  \colon\ & \tilde t=t+c_0,\ & \tilde x=x,      \quad& \tilde u=u,       \ & \tilde u_{\tilde x}=u_x,            \ & \tilde f=f,           \ & \tilde g=g,\\
\mathscr D^t(c_1)  \colon\ & \tilde t=c_1t, \ & \tilde x=x,      \quad& \tilde u=u,       \ & \tilde u_{\tilde x}=u_x,            \ & \tilde f=c_1^{-2}f,   \ & \tilde g=c_1^{-2}g,\\
\mathscr D(\varphi)\colon\ & \tilde t=t,    \ & \tilde x=\varphi,\quad& \tilde u=u,       \ & \tilde u_{\tilde x} = u_x/\varphi_x,\ & \tilde f=\varphi^2_xf,\ & \tilde g=g+\varphi_{xx}u_xf/\varphi_x,\\
\mathscr D^u(c_2)  \colon\ & \tilde t=t,    \ & \tilde x=x,      \quad& \tilde u=c_2u,    \ & \tilde u_{\tilde x}=c_2u_x,         \ & \tilde f=f,           \ & \tilde g=c_2g,\\
\mathscr F^1(c_3)  \colon\ & \tilde t=t,    \ & \tilde x=x,      \quad& \tilde u=u+c_3t,  \ & \tilde u_{\tilde x}=u_x,            \ & \tilde f=f,           \ & \tilde g=g,\\
\mathscr F^2(c_4)  \colon\ & \tilde t=t,    \ & \tilde x=x,      \quad& \tilde u=u+c_4t^2,\ & \tilde u_{\tilde x}=u_x,            \ & \tilde f=f,           \ & \tilde g=g+2c_4,\\
\mathscr G(\psi)   \colon\ & \tilde t=t,    \ & \tilde x=x,      \quad& \tilde u=u+\psi,  \ & \tilde u_{\tilde x}=u_x+\psi_x,     \ & \tilde f=f,           \ & \tilde g=g-\psi_{xx}f,
\end{array}
\]
are (families of) one-parameter equivalence transformations and the nondegeneracy requires that $c_1c_2\varphi_x\ne0$. These transformations are shifts and scalings in $t$, arbitrary transformations in $x$, scalings of $u$, gauging transformations of $u$ with square polynomials in $t$ and arbitrary functions of $x$.

\section{Calculation of equivalence group by the algebraic method}
\label{sec:CalculationOfEquivAlgebraByAlgebraicMethod}

It is well known that any point symmetry transformation~$\mathcal T$ of a differential equation 
(resp.\ a system of differential equations) $\mathcal L$ 
generates an automorphism of the maximal Lie invariance algebra of~$\mathcal L$ via push-forwarding of vector fields in the space of equation variables. 
This condition implies constraints for the transformation~$\mathcal T$ which are then taken into account in further calculations using the direct method~\cite{bihl11Cy,hydo00By}. 
The set of transformations found in the way described form the complete point symmetry group of the equation~$\mathcal L$
including both continuous and discrete point transformations. 
The above algebraic method can be easily extended to the framework of equivalence transformations. 
A basis for this is given by the following simple proposition. 

\begin{proposition}
Let $\mathcal L|_{\mathcal S}$ be a class of (systems of) differential equations, 
$G^\sim$ and~$\mathfrak g^\sim$ the equivalence group and the equivalence algebra of this class (of the same type, namely, either usual or generalized ones). 
Any transformation~$\mathcal T$ from~$G^\sim$ generates an automorphism of~$\mathfrak g^\sim$ via push-forwarding of vector fields in the space of equation variables, appropriate derivatives and arbitrary elements.
\end{proposition}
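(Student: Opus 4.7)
The plan is to leverage the group structure of~$G^\sim$ together with the fact that~$\mathfrak g^\sim$ consists of infinitesimal generators of one-parameter local subgroups of~$G^\sim$. The key observation is that conjugation inside a (pseudo)group induces push-forward on the associated Lie algebra of vector fields, so the assertion reduces to checking that conjugation by~$\mathcal T$ maps one-parameter subgroups of~$G^\sim$ into one-parameter subgroups of~$G^\sim$.

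First, I would fix $\mathcal T\in G^\sim$ and an arbitrary $Q\in\mathfrak g^\sim$. By the definition of~$\mathfrak g^\sim$, there exists a local one-parameter subgroup $\{\mathcal T_\ve\}_{\ve\in I}$ of~$G^\sim$ whose infinitesimal generator is~$Q$. Since $G^\sim$ is closed under composition and inversion, the conjugated family $\mathcal T_\ve':=\mathcal T\circ \mathcal T_\ve\circ \mathcal T^{-1}$ is again a local one-parameter subgroup of~$G^\sim$. Differentiating with respect to~$\ve$ at $\ve=0$ and applying the chain rule gives that the infinitesimal generator of $\{\mathcal T_\ve'\}$ is precisely the push-forward $\mathcal T_*Q$. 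Hence $\mathcal T_*Q\in\mathfrak g^\sim$.

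Second, I would check that the assignment $\mathrm{Ad}_{\mathcal T}\colon Q\mapsto \mathcal T_*Q$ is a Lie algebra automorphism of~$\mathfrak g^\sim$. Linearity is immediate from the linearity of differentiation. The identity $\mathcal T_*[Q_1,Q_2]=[\mathcal T_*Q_1,\mathcal T_*Q_2]$ is a standard property of push-forwards of vector fields under a diffeomorphism, so the map preserves Lie brackets. Invertibility holds because $\mathcal T^{-1}\in G^\sim$ induces its own push-forward~$(\mathcal T^{-1})_*$, which serves as inverse.

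The main (and only genuinely non-trivial) obstacle is to verify that the push-forward $\mathcal T_*Q$ actually satisfies the structural conditions built into the definition of~$\mathfrak g^\sim$: projectability of the underlying transformations onto each space of~$(x,u_{(p')})$ for $0\le p'\le p$, consistency with the contact structure on the space of $(x,u_{(p)})$, and tangency to the set~$\mathcal S$ of arbitrary elements. All three properties, however, are inherited automatically: each transformation in the conjugated family $\{\mathcal T_\ve'\}$ belongs to~$G^\sim$ by construction and therefore satisfies the corresponding finite-level conditions, and passing to the infinitesimal generator preserves these conditions since they are linear constraints which descend from one-parameter subgroups to their tangent vectors at the identity. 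This remark applies uniformly to the usual and generalized variants mentioned in the statement, so no case distinction is needed.
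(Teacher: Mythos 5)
Your proof is correct and follows essentially the same route as the paper: conjugating the one-parameter subgroup $\{\exp(\ve Q)\}\subset G^\sim$ by $\mathcal T$ yields a one-parameter subgroup of $G^\sim$ with generator $\mathcal T_*Q$, so $\mathcal T_*Q\in\mathfrak g^\sim$, and the push-forward preserves Lie brackets. Your closing paragraph merely spells out a point the paper leaves implicit, namely that membership of the conjugated subgroup in $G^\sim$ already encodes the projectability, contact and tangency conditions required of elements of $\mathfrak g^\sim$.
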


\begin{proof}
Consider an arbitrary vector field $Q\in\mathfrak g^\sim$. 
The local one-parameter transformation group $G=\{\exp(\ve Q)\}$ associated with $Q$ 
is contained in~$G^\sim$. 
Then the one-parameter transformation group $\tilde G=\{\mathcal T\exp(\ve Q)\mathcal T^{-1}\}$, 
which is similar to~$G$ with respect to~$\mathcal T$ and whose generator is $\mathcal T_*Q$, 
is also contained in~$G^\sim$. 
This means that the vector field $\mathcal T_*Q$ belongs to~$\mathfrak g^\sim$. 
An arbitrary push-forward saves the Lie bracket of vector fields, 
$[\mathcal T_*Q,\mathcal T_*Q']=\mathcal T_*[Q,Q']$ for any $Q,Q'\in\mathfrak g^\sim$. 
Therefore, $\mathcal T_*$ is an automorphisms of~$\mathfrak g^\sim$.
\end{proof}

Here we compute the usual equivalence group~$G^\sim$ of class~\eqref{eq:IbragimovClass} by the algebraic method. 
The purpose of this computation is dual: to test the results of the previous section and
to present an example of applying the algebraic method. 
The group~$G^\sim$ consists of nondegenerate point transformations 
in the joint space of variables~$t$, $x$ and $u$, 
the first derivatives~$u_t$ and~$u_x$ and the arbitrary elements~$f$ and~$g$, 
which are projectable to the variable space and 
whose components for first derivatives are defined via 
the first prolongation of their projections to the variable space.
Thus, the general form of a transformation~$\mathcal T$ from~$G^\sim$ is 
\begin{gather*}
\tilde t=T(t,x,u),\quad \tilde x=X(t,x,u),\quad \tilde u=U(t,x,u),\\ 
\tilde u_{\tilde t}=U^t(t,x,u,u_t,u_x),\quad \tilde u_{\tilde x}=U^x(t,x,u,u_t,u_x),\\
\tilde f=F(t,x,u,u_t,u_x,f,g),\quad \tilde g=G(t,x,u,u_t,u_x,f,g), 
\end{gather*}
where~$U^t$ and~$U^x$ are determined via~$T$, $X$ and~$U$ and 
the nondegeneracy condition should be additionally satisfied. 
To obtain the constrained form of $\mathcal T$, 
we will act by the push-forward~$\mathcal T_*$ induced by~$\mathcal T$ 
on the vector fields~\eqref{eq:EquivalenceAlgebraGenWaveEqs} additionally including the terms with~$\p_{u_t}$ 
and use megaideals of the equivalence algebra~$\mathfrak g^\sim$ of class~\eqref{eq:IbragimovClass}
and restrictions on automorphisms of~$\mathfrak g^\sim$ found in Section~\ref{sec:EquivalenceAlgebraIbragimovClass}. 
Note that the majority of these objects and properties of~$\mathfrak g^\sim$ are related to the finite-dimensional megaideal 
$\mathfrak m=\mathrm C_{\mathfrak g}(\mathfrak g''')=\langle\GG(1),\FF^1,\FF^2,\PP^t,\DDD^t\rangle$. 
Megaideals which are sums of other megaideals are not essential for the computation 
since they give weaker constraints than their summands. 
For example, the megaideal~$\mathfrak g''$ is the sum of~$\mathfrak g'''$ and~$\mathrm Z_{\mathfrak g'}$ 
and hence we do not use it in the further consideration.
It is sufficient to use the following equalities:
\begin{gather}
\label{eq:PushForwardOfT1}
\mathcal T_*\GG(1)=a_{11}\tilde\GG(1), 
\\\label{eq:PushForwardOfT2} 
\mathcal T_*\FF^1=a_{12}\tilde\GG(1)+a_{22}\tilde\FF^1, 
\\\label{eq:PushForwardOfT3} 
\mathcal T_*\FF^2=a_{13}\tilde\GG(1)+a_{23}\tilde\FF^1+a_{33}\tilde\FF^2, 
\\\label{eq:PushForwardOfT4} 
\mathcal T_*\PP^t=a_{14}\tilde\GG(1)+a_{24}\tilde\FF^1+a_{44}\tilde\PP^t, 
\\\label{eq:PushForwardOfT5} 
\mathcal T_*\DDD^t=a_{15}\tilde\GG(1)+a_{25}\tilde\FF^1+a_{35}\tilde\FF^2+a_{45}\tilde\PP^t+\tilde\DDD^t,
\\\label{eq:PushForwardOfT6}
\mathcal T_*\GG(\hat\psi)=\tilde\GG(\tilde\psi^{\hat\psi}), 
\\\label{eq:PushForwardOfT7} 
\mathcal T_*\DDD(\hat\varphi)=\tilde\GG(\tilde\psi^{\hat\varphi})+\tilde\DDD(\tilde\varphi^{\hat\varphi}), 
\end{gather}
where $a$'s are constants, $a_{11}a_{22}a_{33}a_{44}\ne0$ and 
$\hat\psi$ and $\hat\varphi$ are arbitrary smooth functions of~$x$. 
The constants~$a$'s completed with $a_{55}=1$ and $a_{ij}=0$, $1\leqslant i<j\leqslant5$, 
form a matrix of an automorphism of the megaideal~$\mathfrak m$.
Tildes over vector fields on the right hand sides of the above equations mean 
that these vector fields are written in terms of the transformed variables. 
The parameter-functions $\tilde\psi^{\hat\psi}$, $\tilde\psi^{\hat\varphi}$ and $\tilde\varphi^{\hat\varphi}$ 
are smooth functions of~$\tilde x$ whose values depend on the values of $\hat\psi$ or $\hat\varphi$ 
which is indicated by the corresponding superscripts.
We will derive constraints for~$\mathcal T$, consequently equating coefficients of vector fields 
in conditions \eqref{eq:PushForwardOfT1}--\eqref{eq:PushForwardOfT7} 
and taking into account constraints obtained in previous steps. 
As the coefficients of vector fields and components of the transformation~$\mathcal T$ 
associated with the derivatives $u_t$ and~$u_x$ are defined via first-order prolongation 
involving the similar values related to the variables~$t$, $x$ and~$u$, 
the coefficients of~$\p_{u_t}$ and~$\p_{u_x}$ give no essentially new equations 
in comparison with the coefficients of~$\p_t$, $\p_x$ and~$\p_u$. 
This is why we will not equate the coefficients of~$\p_{u_t}$ and~$\p_{u_x}$.
To have the same representation of the final result as in Theorem~\ref{thm:EquivalenceGroupIbragrimovClass}, 
we will re-denote certain values in an appropriate way. 

Thus, Eq.~\eqref{eq:PushForwardOfT1} implies that $T_u=X_u=0$, $U_u=c_2$ and $F_u=G_u=0$, 
where the nonvanishing constant~$a_{11}$ is re-denoted by~$c_2$. 
Then we derive from Eq.~\eqref{eq:PushForwardOfT2} that 
$tU_u=a_{22}T+a_{12}$, i.e.\ $T=c_1t+c_0$ where $c_1=c_2/a_{22}\ne0$ and $c_0=-a_{12}/a_{22}$, and $F_{u_t}=G_{u_t}=0$.
The consequence $t^2U_u=a_{33}T^2+a_{23}T+a_{13}$ of Eq.~\eqref{eq:PushForwardOfT3} gives only relations between $a$'s. 
In particular, $a_{33}=c_2/c_1^2$. 
The other consequences of Eq.~\eqref{eq:PushForwardOfT3} then are $F_g=0$ and~$G_g=c_2/c_1^2$. 
The essential consequences of Eq.~\eqref{eq:PushForwardOfT4} are exhausted by $X_t=0$, $U_t=a_{24}T+a_{14}$ and $F_t=G_t=0$.
Therefore, $X=\varphi(x)$ and $U=c_2+c_4t^2+c_3t+\psi(x)$, where $\varphi_x\ne0$, $c_4=a_{24}c_1/2$ and $c_3=a_{14}+a_{24}c_0$.

As we have already derived the precise expressions for the components of~$\mathcal T$ corresponding to the variables 
(cf.\ Eq.~\eqref{eq:EquivalenceGroupIbragimovClass}), 
at this point we can interrupt the computation of equivalence transformations by the algebraic method 
and calculate the expressions for~$F$ and~$G$ by the direct method. 
At the same time, all the determining equations for transformations from the equivalence group~$G^\sim$ 
of class~\eqref{eq:IbragimovClass} follow from restrictions for automorphisms of the equivalence algebra~$\mathfrak g^\sim$. 
This is not a common situation when the algebraic method is applied. 
Usually it gives only a part of the determining equations simplifying the subsequent application of the direct method. 
See, e.g., the computations of the complete point symmetry groups of 
the barotropic vorticity equation and the quasi-geostrophic two-layer model 
in \cite[Section~3]{bihl11Cy} and \cite[Section~4]{bihl11By}, respectively. 
This is why we complete the consideration of the equivalence group~$G^\sim$ within the framework of the algebraic method. 

From Eq.~\eqref{eq:PushForwardOfT5} we obtain in particular that $tU_t=a_{35}T^2+a_{25}T+a_{15}$, $fF_f=F$ and $fG_f+gG_g=G-a_{35}$, 
where $a_{35}=2c_4/c_1^2$ in view of the first of these consequences.

Eq.~\eqref{eq:PushForwardOfT6} implies the equations 
\begin{equation}\label{eq:ConditionsForArbitraryElementsByAlgabraicMethod1}
\hat\psi U_u=\tilde\psi^{\hat\psi},\quad 
\hat\psi_xF_{u_x}=0,\quad 
\hat\psi_xG_{u_x}-\hat\psi_{xx}fG_g=\tilde\psi^{\hat\psi}_{\tilde x\tilde x}F.
\end{equation}
The first and second equations of~\eqref{eq:ConditionsForArbitraryElementsByAlgabraicMethod1} are equivalent to 
$\tilde\psi^{\hat\psi}=c_2\hat\psi$ and $F_{u_x}=0$. 
Then we can express the derivative $\tilde\psi^{\hat\psi}_{\tilde x\tilde x}$ via derivatives of~$\hat\psi$,
$\tilde\psi^{\hat\psi}_{\tilde x\tilde x}=c_2\varphi_x^{-3}(\varphi_x\hat\psi_{xx}-\varphi_{xx}\hat\psi_x)$,
substitute the expression into the third equation of~\eqref{eq:ConditionsForArbitraryElementsByAlgabraicMethod1} 
and split with respect to the derivatives $\hat\psi_x$ and~$\hat\psi_{xx}$, 
as the function~$\hat\psi$ is arbitrary. 
As a result, we obtain $F=c_1^{-2}\varphi_x{}^2f$ and 
$G_{u_x}=c_2\varphi_x^{-3}\varphi_{xx}F$, i.e., $G_{u_x}=c_2c_1^{-2}\varphi_x^{-1}\varphi_{xx}f$. 
The expression for~$F$ coincides with the transformation component for~$f$ presented in Theorem~\ref{thm:EquivalenceGroupIbragrimovClass}.

The last essential equation for~$G$ is given by Eq.~\eqref{eq:PushForwardOfT7}. 
Collecting coefficients of $\p_x$, $\p_u$ and $\p_g$ in Eq.~\eqref{eq:PushForwardOfT7}, we have that 
$\tilde\varphi^{\hat\varphi}(\tilde x)=\varphi_x\hat\varphi$, $\tilde\psi^{\hat\varphi}(\tilde x)=\psi_x\hat\varphi$ and 
\begin{equation}\label{eq:ConditionsForArbitraryElementsByAlgabraicMethod2}
\hat\varphi G_x-\hat\varphi_xu_xG_{u_x}+2\hat\varphi_xfG_f+\hat\varphi_{xx}u_xfG_g=
\tilde\varphi^{\hat\varphi}_{\tilde x\tilde x}\tilde u_{\tilde x}F-\tilde\psi^{\hat\varphi}_{\tilde x\tilde x}F,
\end{equation} 
respectively.
We proceed in a way analogous to the previous step.
Namely, we express the derivatives $\tilde\varphi^{\hat\varphi}_{\tilde x\tilde x}$ 
and~$\tilde\psi^{\hat\varphi}_{\tilde x\tilde x}$ via derivatives of~$\hat\varphi$, 
substitute the expressions into Eq.~\eqref{eq:ConditionsForArbitraryElementsByAlgabraicMethod2} and  
split with respect to derivatives of $\hat\varphi$ because the function~$\hat\psi$ is arbitrary.
Equating the coefficients of $\hat\varphi_x$ leads to the equation
$fG_f=c_1^{-2}\varphi_x^{-1}(c_2u_x+\psi_x\varphi_{xx}-\psi_{xx}\varphi_x)$.

The simultaneous integration of all the equations obtained for~$G$ precisely results in 
the transformation component for~$g$ from Theorem~\ref{thm:EquivalenceGroupIbragrimovClass}.

\section{Determining equations for Lie symmetries}\label{sec:DetEqsForLieSymsIbragrimovClass}

Suppose that the vector field $Q = \tau(t,x,u)\p_t + \xi(t,x,u)\p_x+ \eta(t,x,u)\p_u$ belongs to the maximal Lie invariance algebra~$\mathfrak g^{\max}$ of an equation $\mathcal L$: $L=0$ from the class~\eqref{eq:IbragimovClass}, i.e.\ it is the generator of a one-parameter Lie symmetry group of the equation~$\mathcal L$. The criterion for infinitesimal invariance of $\mathcal L$ with respect to $Q$ is implemented using the second prolongation of $Q$, which reads
\[
 Q_{(2)} = Q + \eta^t\p_{u_t} + \eta^x\p_{u_x} + \eta^{tt}\p_{u_x} + \eta^{tx}\p_{u_{tx}}+\eta^{xx}\p_{u_{xx}}.
\]
The coefficients $\eta^t$, $\eta^x$, $\eta^{tt}$, $\eta^{xx}$ in $Q_{(2)}$ can be determined from the general prolongation formula for vector fields, see, e.g.~\cite{blum89Ay,olve86Ay,ovsi82Ay}. Using the second prolongation of $Q$, the infinitesimal invariance criterion reads $Q_{(2)}L|_{L=0}=0$, where the notation $|_{L=0}$ means that the condition $Q_{(2)}L$ is required to hold only on equations of class~\eqref{eq:IbragimovClass}. Applying the infinitesimal invariance condition to the class~\eqref{eq:IbragimovClass} then yields
\begin{equation}\label{eq:InfinitesimalInvarianceCriterionIbragimovClass}
 \eta^{tt}-(\xi f_x+\eta^xf_{u_x})u_{xx}-f\eta^{xx}-\xi g_x - \eta^x g_{u_x} =0 \qquad \textup{for} \qquad u_{tt}=fu_{xx}+g,
\end{equation}
where
\begin{gather*}
\eta^x = \DD_x(\eta-\tau u_t-\xi u_x) + \tau u_{tx}+\xi u_{xx}, \\
\eta^{xx} = \DD_x^2(\eta-\tau u_t-\xi u_x) + \tau u_{txx}+\xi u_{xxx}, \\
\eta^{tt} = \DD_t^2(\eta-\tau u_t-\xi u_x) + \tau u_{ttt}+\xi u_{ttx},
\end{gather*}
$\DD_t$ and $\DD_x$ denote the operators of total differentiation with respect to $t$ and $x$, respectively, which in the present case of one dependent variable are given by
\begin{gather*}
 \DD_t = \p_t + u_t\p_{u}+u_{tt}\p_{u_{t}} + u_{tx}\p_{u_{x}}+\cdots, \\ 
 \DD_x = \p_x + u_x\p_{u}+u_{tx}\p_{u_{t}} + u_{xx}\p_{u_{x}}+\cdots.
\end{gather*}
Expanding the infinitesimal invariance condition~\eqref{eq:InfinitesimalInvarianceCriterionIbragimovClass} we obtain
\begin{gather}\label{eq:InfinitesimalInvarianceCriterionIbragimovClassExpanded}
\begin{split}
 &\DD_t^2\eta - u_t\DD_t^2\tau - u_x\DD_t^2\xi - 2u_{tt}\DD_t\tau - 2u_{tx}\DD_t\xi\\ 
 &=f(\DD_x^2\eta - u_t\DD_x^2\tau - u_x\DD_x^2\xi - 2u_{tx}\DD_x\tau - 2u_{xx}\DD_x\xi)\\ 
 & + (\xi f_x + (\DD_x\eta -u_t\DD_x\tau - u_x\DD_x\xi)f_{u_x})u_{xx} + \xi g_x + (\DD_x\eta-u_t\DD_x\tau-u_x\DD_x\xi)g_{u_x}  
\end{split}
\end{gather}
where we have to substitute $u_{tt}=fu_{xx}+g$. The above equation can be split with respect to the derivatives of~$u_{tx}$, $u_{xx}$ and $u_t$. Collecting the coefficients of $u_{tx}u_t$, $u_{tx}$ and $u_{xx}u_t$, we produce
\[
    \xi_u=0, \quad \xi_t = f(\tau_t+\tau_u u_x), \quad 2f\tau_u = (\tau_x+\tau_u u_x)f_{u_x}.
\]
Supposing that $\xi_t=0$, it immediately follows from the second equation that $\tau_u=0$. Otherwise, for $\xi_t\ne0$ we can solve the second equation for~$f$ and substitute the obtained expression into the third equation. After simplification we have that $\xi_t\tau_u=0$, i.e.\ $\tau_u=0$. Therefore, we always have
\[
 \xi_u=0, \quad \tau_u=0, \quad \xi_t = f\tau_x, \quad \tau_xf_{u_x}=0.
\]
Further splitting of Eq.~\eqref{eq:InfinitesimalInvarianceCriterionIbragimovClassExpanded} and taking into account the above restrictions gives
\begin{equation}\label{eq:DeterminingEquationsIbragimovClass}
\arraycolsep=0ex
\begin{array}{ll}
 u_t^2\colon&\quad \eta_{uu}=0 \\[1ex]
 u_{xx}\colon&\quad 2(\tau_t-\xi_x)f+\xi f_x+(\eta_x+(\eta_u -\xi_x) u_x)f_{u_x}=0 \\[1ex]
 u_t\colon&\quad 2\eta_{tu}-\tau_{tt}+\tau_{xx}f+\tau_xg_{u_x} = 0 \\[1ex]
 \textup{Rest}\colon\quad &\quad \eta_{tt}-\xi_{tt}u_x - (\eta_{xx}+(2\eta_{ux}-\xi_{xx})u_x)f +(\eta_u-2\tau_t)g\\[1ex]
 &\qquad {}-\xi g_x - (\eta_x+(\eta_u-\xi_x)u_x)g_{u_x} = 0.
\end{array}
\end{equation}
The equations $\xi_u=0$, $\tau_u=0$ and $\eta_{uu}=0$ neither involve the arbitrary elements~$f$ or~$g$ nor their derivatives. This is why they can be immediately integrated and give restricting conditions on Lie symmetries valid for all equations of the form~\eqref{eq:IbragimovClass}. In particular, we have $\eta=\eta^1(t,x)u+\eta^0(t,x)$. 

In order to derive the kernel of Lie symmetries of class~\eqref{eq:IbragimovClass}, we can further split the classifying part of the determining equations~\eqref{eq:DeterminingEquationsIbragimovClass} with respect to the arbitrary elements and their derivatives. This immediately gives that the kernel algebra is 
\begin{equation}\label{eq:KernelIbragimovClass}
  \mathfrak g^\cap =\langle\p_t, \p_u, t\p_u\rangle,
\end{equation}
which is a realization of the three-dimensional (nilpotent) Heisenberg algebra. Consequently, the Lie symmetries admitted by each equation from the class~\eqref{eq:IbragimovClass} are exhausted by transformations of the form $(t,x,u)\mapsto(t+\ve_1,x,u+\ve_2+\ve_3t)$, where $\ve_1$, $\ve_2$ and $\ve_3$ are arbitrary constants.

Up to this point the nonlinearity of the equations under consideration was of no importance. Only the general form~\eqref{eq:IbragimovClass} was essential.
Now we should start to exploit the nonlinearity condition~$(f_{u_x},g_{u_xu_x})\ne(0,0)$, which is included in the definition of class~\eqref{eq:IbragimovClass}.

First assume that~$f_{u_x}=0$ and therefore~$g_{u_xu_x}\ne0$. 
Differentiating the third equation of system~\eqref{eq:DeterminingEquationsIbragimovClass} with respect to $u_x$, we then immediately find that $\tau_x=0$. In view of the equation $\xi_t=f\tau_x$ we also have $\xi_t=0$. 
Upon differentiating the second equation of~\eqref{eq:DeterminingEquationsIbragimovClass} with respect to~$t$ we obtain~$\tau_{tt}=0$. 
The third equation of~\eqref{eq:DeterminingEquationsIbragimovClass} then implies $\eta_{tu}=0$. 
Finally, we differentiate the last equation in~\eqref{eq:DeterminingEquationsIbragimovClass} with respect to $u$ and $u_x$ (resp.\ $t$ and $u_x$, resp.\ $t$). 
This gives $\eta_{xu}=0$ (resp.\ $\eta_{xt}=0$, resp.\  $\eta_{ttt}=0$).

Now we assume that~$f_{u_x}\ne0$. In this case, the equation~$\xi_t=f\tau_x$ can be split to yield $\xi_t=\tau_x=0$. The third equation of system~\eqref{eq:DeterminingEquationsIbragimovClass} then implies $2\eta_{tu}=\tau_{tt}$. 
Differentiating the second equation of~\eqref{eq:DeterminingEquationsIbragimovClass} with respect to~$u$ we obtain~$\eta_{xu}=0$. 
The differentiation of the last equation in~\eqref{eq:DeterminingEquationsIbragimovClass} with respect to~$u$ then yields~$\eta_{ttu}=0$. 
In view of the equation $2\eta_{tu}=\tau_{tt}$ we obviously have~$\tau_{ttt}=0$.
Differentiating the second equation of~\eqref{eq:DeterminingEquationsIbragimovClass} twice with respect to~$t$ leads to~$\eta_{ttx}=0$.

Collecting the results from the above two cases, for the class~\eqref{eq:IbragimovClass}, whose definition includes the condition~$(f_{u_x},g_{u_xu_x})\ne(0,0)$, we always have
\begin{equation}\label{eq:DeterminingEquationsIbragimovClassSimplified}
  \tau_u=\tau_x=\xi_u=\xi_t=\eta_{uu}=\eta_{xu}=\eta_{ttx}=\tau_{ttt}=0, \quad 2\eta_{tu}=\tau_{tt}.
\end{equation}
Hence only the second and fourth equations of~\eqref{eq:DeterminingEquationsIbragimovClass} 
essentially involve arbitrary elements and are really classifying determining equations for the class~\eqref{eq:IbragimovClass}.
They must be solved up to the equivalence relation induced by transformations from~$G^\sim$. 
Also note that for $f_{u_x}\ne0$ and~$\tau_{tt}=0$ we find by differentiating both these equations with respect to~$t$ that $\eta_{tx}=0$ and $\eta_{ttt}=0$.

This completes the proof of the following proposition:

\begin{proposition}
For each equation from class~\eqref{eq:IbragimovClass}, any symmetry operator~$Q$ with $\tau_{tt}=0$ lies 
in the projection of the equivalence algebra~$\mathfrak g^\sim$ to the space of equation variables, i.e.\ $Q\in\mathrm P\mathfrak g^\sim$.
\end{proposition}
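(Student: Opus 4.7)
The plan is to combine the simplified determining equations~\eqref{eq:DeterminingEquationsIbragimovClassSimplified} with the additional hypothesis $\tau_{tt}=0$, integrate the resulting linear system in closed form, and then recognize the general solution as a generic element of the projection $\mathrm P\mathfrak g^\sim$ of the equivalence algebra given in~\eqref{eq:EquivalenceAlgebraGenWaveEqs}. The two subcases $f_{u_x}=0$ and $f_{u_x}\ne0$ singled out in the preceding discussion both feed into the same conclusion, so no further case split is needed once $\tau_{tt}=0$ is imposed.

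First I would observe that, under $\tau_{tt}=0$, the relation $2\eta_{tu}=\tau_{tt}$ from~\eqref{eq:DeterminingEquationsIbragimovClassSimplified} upgrades to $\eta_{tu}=0$. In the case $f_{u_x}=0$, the previous section already produced $\tau_{tt}=0$, $\eta_{tu}=0$, $\eta_{xt}=0$ and $\eta_{ttt}=0$ directly. In the case $f_{u_x}\ne 0$, the last sentence before the proposition statement records that, once $\tau_{tt}=0$ is additionally assumed, differentiating the second and fourth equations of~\eqref{eq:DeterminingEquationsIbragimovClass} with respect to~$t$ yields $\eta_{tx}=0$ and $\eta_{ttt}=0$. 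Hence, irrespective of the subcase, we may work under the unified system
\[
 \tau_u=\tau_x=\tau_{tt}=0,\quad \xi_u=\xi_t=0,\quad \eta_{uu}=\eta_{xu}=\eta_{tu}=\eta_{tx}=\eta_{ttt}=0.
\]

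Next I would integrate this system. The equations for $\tau$ force $\tau=c_0+c_1 t$ with constants $c_0,c_1$; the equations for $\xi$ force $\xi=\varphi(x)$ with $\varphi$ a smooth function of~$x$; and the equations for $\eta$ force $\eta=c_2 u+c_4 t^2+c_3 t+\psi(x)$ with constants $c_2,c_3,c_4$ and $\psi$ a smooth function of~$x$. No further splitting with respect to the arbitrary elements or their derivatives is needed, since all remaining determining equations are already satisfied modulo the structural constraints on $f$ and $g$ — which is exactly the point of the statement.

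Finally, I would match the resulting vector field $Q=\tau\p_t+\xi\p_x+\eta\p_u$ against the projections of the generators~\eqref{eq:EquivalenceAlgebraGenWaveEqs} of $\mathfrak g^\sim$ to the $(t,x,u)$-space, namely $\mathrm P\PP^t=\p_t$, $\mathrm P\DDD^t=t\p_t$, $\mathrm P\DDD(\varphi)=\varphi\p_x$, $\mathrm P\DDD^u=u\p_u$, $\mathrm P\FF^1=t\p_u$, $\mathrm P\FF^2=t^2\p_u$ and $\mathrm P\GG(\psi)=\psi\p_u$. The decomposition
\[
 Q=c_0\mathrm P\PP^t+c_1\mathrm P\DDD^t+\mathrm P\DDD(\varphi)+c_2\mathrm P\DDD^u+c_3\mathrm P\FF^1+c_4\mathrm P\FF^2+\mathrm P\GG(\psi)
\]
exhibits $Q$ as an element of $\mathrm P\mathfrak g^\sim$, completing the proof. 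There is no substantive obstacle: the entire argument reduces to a linear integration once $\tau_{tt}=0$ is invoked, and the only point to watch is that the consequences $\eta_{tx}=0$ and $\eta_{ttt}=0$ are genuinely available in both subcases under this hypothesis, which has been verified in the text immediately preceding the statement.
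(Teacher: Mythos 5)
Your proposal is correct and follows essentially the same route as the paper: the paper's proof is precisely the derivation preceding the proposition, which establishes the system $\tau_u=\tau_x=\xi_u=\xi_t=\eta_{uu}=\eta_{xu}=0$ together with $\eta_{tu}=\eta_{tx}=\eta_{ttt}=\tau_{tt}=0$ in both subcases once $\tau_{tt}=0$ is imposed, and the identification of the integrated general solution $\tau=c_0+c_1t$, $\xi=\varphi(x)$, $\eta=c_2u+c_4t^2+c_3t+\psi(x)$ with $\mathrm P\mathfrak g^\sim$ is exactly the intended (implicit) final step. Your writeup merely makes the integration and the matching against the projected generators explicit, which is a faithful elaboration rather than a different argument.
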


It thus remains to investigate the case in which $f_{u_x}\ne0$ and the corresponding maximal Lie invariance algebra~$\mathfrak g^{\max}=\mathfrak g^{\max}(f,g)$ contains a vector field~$\breve Q$ with $\tau_{tt}\ne0$. 
In view of system~\eqref{eq:DeterminingEquationsIbragimovClassSimplified} the general form of vector fields from~$\mathfrak g^{\max}$ is 
\[Q=(a_2t^2+a_1t+a_0)\p_t+\xi(x)\p_x+((a_2t+b_1)u+\eta^0(t,x))\p_u,\] 
where the constants $a_0$, $a_1$,  $a_2$ and~$b_1$ and the functions $\xi=\xi(x)$ and $\eta^0=\eta^0(t,x)$, where $\eta^0_{ttx}=0$,
are additionally constrained in such a way that 
the coefficients $\tau=a_2t^2+a_1t+a_0$, $\xi=\xi(x)$ and $\eta=(t+b_1)u+\eta^0(t,x)$ also satisfy the second and last equations of system~\eqref{eq:DeterminingEquationsIbragimovClass}. 
For convenience, we will mark the values of coefficients and parameters corresponding to the vector field~$\breve Q$ by breve. 
As $\breve a_2\ne0$, by scaling of~$\breve Q$ we can set $\breve a_2=1$. 
As the vector field~$\p_t$ belongs to~$\mathfrak g^\cap$, it necessarily is in~$\mathfrak g^{\max}$ for any~$f$ and~$g$. 
Therefore, the algebra~$\mathfrak g^{\max}$ contains also the commutator $[\p_t,\breve Q]$. 
Linearly combining~$\breve Q$ with~$\p_t$ and $\tilde Q=[\p_t,\breve Q]$ we can also set $\breve a_0=\breve a_1=0$. 
Hence $\tilde Q=2t\p_t+(u+\breve\eta^0_t)\p_u$.

Substituting the coefficients of~$\tilde Q$ into the second equation of system~\eqref{eq:DeterminingEquationsIbragimovClass}, 
we obtain the equation $4f+(u_x+\beta)f_{u_x}=0$, where $\beta=\breve\eta^0_{tx}$ should be a smooth function depending at most on~$x$ since $\eta^0_{ttx}=0$ for any operator~$Q$ from~$\mathfrak g^{\max}$. 
The general solution of this equation is $f=\alpha(x)(u_x+\beta(x))^{-4}$, where~$\alpha$ is an arbitrary function of~$x$. 
Using transformations from the equivalence group~$G^\sim$ we can simplify $f$ and set $\alpha=\pm1$ and $\beta=0$. 
If we plug the form $f=\pm u_x^{-4}$ into the second equation of system~\eqref{eq:DeterminingEquationsIbragimovClass}, 
we obtain that $\tau_t-\xi_x = 2(\eta_u-\xi_x)+2\eta_xu_x^{-1}$ for an arbitrary operator from~$\mathfrak g^{\max}$. 
From this condition, we can immediately conclude that $\eta_x=0$ and $\xi_x=2b_1$, i.e.\ $\xi=2b_1x+b_0$ for some constant~$b_0$. 

As $\breve\eta^0_x=0$, the substitution of the coefficients of~$\tilde Q$ into the last equation of system~\eqref{eq:DeterminingEquationsIbragimovClass} 
gives the equation $u_xg_{u_x}+3g=\breve\eta^0_{ttt}$ with separated variables. 
Both the sides of this equation are equal to a constant which can be set to zero by a transformation $\mathscr F^2(c_4)$ from~$G^\sim$.
The equation $u_xg_{u_x}+3g=0$ is equivalent to the representation $g=\mu(x)u_x^{-3}$, where $\mu$ is an arbitrary function of~$x$. 
For this expression of~$g$ the last equation of system~\eqref{eq:DeterminingEquationsIbragimovClass} takes the form
\[
 \eta^0_{tt}-2b_1\mu u_x^{-3} - (2b_1x+b_0)\mu_xu_x^{-3}=0
\]
and the subsequent splitting with respect to~$u_x$ implies that $\eta^0_{tt}=0$ and 
$
(\mu(2b_1x+b_0))_x=0.
$
We now distinguish the following cases for values of~$b_0$ and~$b_1$ depending on a value of~$\mu$: 

0. $\mu$ is arbitrary. In this case $b_1=b_0=0$.
 
1. $\mu$ is a nonzero constant. Then $b_0$ is arbitrary and $b_1=0$. Using an equivalence transformation, we can scale $\mu$ to one. 

2. $\mu = \nu x^{-1}\bmod G^\sim$, where $\nu$ is a nonzero constant. (A constant summand of~$x$ can be set equal to 0 by a shift of~$x$.) 
For this value of~$\mu$ we have $b_0=0$ and $b_1$ is arbitrary. 

3. $\mu=0$. This implies that $b_1$ and $b_2$ are arbitrary.

We denote by~$\mathcal K$ the subclass of equations from the class~\eqref{eq:IbragimovClass}, 
which are $G^\sim$-equivalent to equations with $f=\pm u_x^{-4}$ and $g=\mu(x)u_x^{-3}$ and by~$\bar{\mathcal K}$ the complement of this subclass in the class~\eqref{eq:IbragimovClass}.
The above consideration shows that only equations from the subclass~$\mathcal K$ admit Lie symmetry operators that are not contained in $\mathrm P \mathfrak g^\sim$. 
In other words, the following theorem is true:

\begin{theorem}\label{thm:WeakNormalizationIbragimovClass}
The subclass $\bar{\mathcal K}$ of class~\eqref{eq:IbragimovClass} that is singled out by the condition
\[
    (f,g)\ne(\pm u_x^{-4},\mu(x)u_x^{-3})\bmod G^\sim,
\]
where~$\mu(x)$ is an arbitrary function of $x$ is weakly normalized.
\end{theorem}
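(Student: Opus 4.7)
The plan is to verify Definition~\ref{def:WeakNormalizationIbragimov} directly for the subclass $\bar{\mathcal K}$. Writing $\bar{\mathcal S}$ for the corresponding set of arbitrary elements, I must show $\bigcup_{\theta\in\bar{\mathcal S}}\mathfrak g_\theta\subset\mathrm P\mathfrak g^\sim$, where $\mathfrak g^\sim$ is the equivalence algebra of the ambient class~\eqref{eq:IbragimovClass} given in~\eqref{eq:EquivalenceAlgebraGenWaveEqs}. The proposition stated immediately before this theorem supplies the dichotomy: for any $Q\in\mathfrak g_\theta$, if $\tau_{tt}=0$ then already $Q\in\mathrm P\mathfrak g^\sim$, and nothing more is needed. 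Hence the theorem reduces to the contrapositive statement that the existence of some $\breve Q\in\mathfrak g_\theta$ with $\breve\tau_{tt}\ne 0$ forces $\theta$ to be $G^\sim$-equivalent to a pair of the form $(\pm u_x^{-4},\mu(x)u_x^{-3})$, and therefore $\theta\notin\bar{\mathcal S}$.

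For this contrapositive I would package exactly the case analysis carried out in Section~\ref{sec:DetEqsForLieSymsIbragrimovClass} immediately before the theorem statement. First, the case $f_{u_x}=0$ is excluded, because the third equation of~\eqref{eq:DeterminingEquationsIbragimovClass} together with the consequences $\tau_x=\xi_t=0$ derived in that case forces $\tau_{tt}=0$. Thus $f_{u_x}\ne 0$. Since $\p_t\in\mathfrak g^\cap\subset\mathfrak g_\theta$, I can rescale $\breve Q$ so that its $t^2\p_t$-coefficient is one and then form the linear combination $\tilde Q=2t\p_t+(u+\breve\eta^0_t)\p_u$ by subtracting appropriate multiples of $\p_t$ and $[\p_t,\breve Q]$, both of which again lie in $\mathfrak g_\theta$. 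Substituting the coefficients of $\tilde Q$ into the second equation of~\eqref{eq:DeterminingEquationsIbragimovClass} yields the ODE $4f+(u_x+\beta)f_{u_x}=0$ with $\beta=\breve\eta^0_{tx}$ depending only on $x$, whose solution is $f=\alpha(x)(u_x+\beta(x))^{-4}$. Substitution into the classifying part of~\eqref{eq:DeterminingEquationsIbragimovClass} then gives the separated-variables equation $u_xg_{u_x}+3g=\breve\eta^0_{ttt}$, whose solution is $g=\mu(x)u_x^{-3}$ plus a constant.

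The step I expect to require the most care is the equivalence bookkeeping at the end. I need to check that the transformations $\mathscr D(\varphi)$, $\mathscr G(\psi)$, $\mathscr D^u(c_2)$ and $\mathscr F^2(c_4)$ listed in Theorem~\ref{thm:EquivalenceGroupIbragrimovClass} actually suffice to normalize $\alpha$ to $\pm 1$, $\beta$ to $0$, and the additive constant in $g$ to zero, so that the family produced by the case analysis reduces precisely to the excluded representative $(f,g)=(\pm u_x^{-4},\mu(x)u_x^{-3})$ appearing in the definition of $\bar{\mathcal K}$. This is a direct match of the transformation rules for $f$ and $g$ in~\eqref{eq:EquivalenceGroupIbragimovClass} against the four free functions and parameters $\alpha$, $\beta$, $\mu$, and the additive constant. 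Apart from this, the proof is essentially a consolidation of the case analysis of Section~\ref{sec:DetEqsForLieSymsIbragrimovClass} into the single statement of weak normalization; no new compatibility analysis or integration of PDEs is required beyond what the preceding pages already contain.
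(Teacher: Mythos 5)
Your proposal is correct and follows essentially the same route as the paper: Theorem~\ref{thm:WeakNormalizationIbragimovClass} is precisely the consolidation of the case analysis of Section~\ref{sec:DetEqsForLieSymsIbragrimovClass}, combining the proposition that $\tau_{tt}=0$ implies $Q\in\mathrm P\mathfrak g^\sim$ with the derivation that a symmetry with $\tau_{tt}\ne0$ forces $f=\alpha(x)(u_x+\beta(x))^{-4}$ and $g=\mu(x)u_x^{-3}+\const$, normalized to the excluded representative by $\mathscr D(\varphi)$, $\mathscr G(\psi)$, $\mathscr D^u(c_2)$ and $\mathscr F^2(c_4)$. One minor citation slip: in the branch $f_{u_x}=0$ it is the second equation of system~\eqref{eq:DeterminingEquationsIbragimovClass} (the $u_{xx}$-coefficient), differentiated with respect to~$t$ after $\tau_x=\xi_t=0$ are known, that yields $\tau_{tt}=0$, not the third.
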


\begin{remark}\label{rem:SubclassAndItsComplementOfIbragimovClass}
The sets~$\mathcal K$ and~$\bar{\mathcal K}$ of equations are really subclasses of the class~\eqref{eq:IbragimovClass} since the condition $(f,g)=(\pm u_x^{-4},\mu(x)u_x^{-3})\bmod G^\sim$ and its negation are equivalent to systems of equations and/or inequalities with respect to the arbitrary elements~$f$ and~$g$. Indeed, by acting on the arbitrary elements $f=\pm u_x^{-4}$ and $g=\mu(x)u_x^{-3}$ with transformations from $G^\sim$ and eliminating the involved group parameters and the parameter-function~$\mu$, we arrive at a system of differential equations in~$f$ and~$g$ characterizing the subclass~$\mathcal K$. Namely, the subclass~$\mathcal K$ is singled out from the class~\eqref{eq:IbragimovClass} by the system
\[
V_{u_x}=1,\quad W_{xu_x}(V^3)_{u_x}-W_{u_x}(V^3)_{xu_x}=0,\quad W_{u_xu_x}(V^3)_{u_x}-W_{u_x}(V^3)_{u_xu_x}=0,
\]
where $V=-4f/f_{u_x}$ and $W=V^3(g+fV_x+f_xV/2)$. This implies that the subclass~$\bar{\mathcal K}$ as the complement of~$\mathcal K$ is defined by the inequality
\[
\big(V_{u_x}-1\big)^2+\big(W_{xu_x}(V^3)_{u_x}-W_{u_x}(V^3)_{xu_x}\big)^2+\big(W_{u_xu_x}(V^3)_{u_x}-W_{u_x}(V^3)_{u_xu_x}\big)^2\ne0.
\]
\end{remark}

The above Cases 0--3 represent the complete group classification of equations from the subclass~$\mathcal K$ up to $G^\sim$-equivalence.
Recall that by the definition of the subclass~$\mathcal K$ any equation from this subclass is $G^\sim$-equivalent to an equation with $f=\pm u_x^{-4}$ and $g=\mu(x)u_x^{-3}$.

\begin{lemma}\label{lem:IbragimovClassSpecialLieSymExts1}
A complete list of $G^\sim$-inequivalent Lie symmetry extensions for equations of the general form
\begin{equation}\label{eq:IbragimovClassSpecialLieSymExts1Subclass}
u_{tt} = \pm u_x^{-4}u_{xx} + \mu(x)u_x^{-3},
\end{equation}
where $\mu$ runs through the set of smooth functions depending on~$x$, is exhausted by the following cases:
\begin{gather}\label{eq:IbragimovClassSpecialLieSymExts1SubclassList}\hspace*{-\arraycolsep}%
\begin{array}{lll}
0.& \text{arbitrary}\ \mu\colon & \mathfrak g^\cap_1 = \mathfrak g^\cap + \langle t^2\p_t+tu\p_u,\,2t\p_t+u\p_u\rangle,\\
1.& \mu=1\colon & \mathfrak g^\textup{max}=\mathfrak g^\cap_1 +\langle\p_x\rangle,\\
2.& \mu=\nu x^{-1},\ \nu\ne0\colon & \mathfrak g^\textup{max}=\mathfrak g^\cap_1 +\langle2x\p_x+u\p_u\rangle,\\
3.& \mu=0\colon & \mathfrak g^\textup{max}=\mathfrak g^\cap_1 +\langle\p_x,\,2x\p_x+u\p_u\rangle.
\end{array}
\end{gather}
\end{lemma}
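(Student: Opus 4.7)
The plan is to combine the preceding analysis with a short case-by-case completion and a $G^\sim$-inequivalence check. The work above has already shown that, for any equation in $\mathcal K$ represented by $f=\pm u_x^{-4}$ and $g=\mu(x)u_x^{-3}$, any symmetry operator $Q=\tau\p_t+\xi\p_x+\eta\p_u$ has the restricted form $\tau=a_2t^2+a_1t+a_0$, $\xi=\xi(x)$ affine in~$x$, $\eta=\eta^1(t)u+\eta^0(t)$ with $\eta^1$ and $\eta^0$ linear in $t$, together with the relation $\tau_t+\xi_x=2\eta^1$ coming from the $u_{xx}$-coefficient in~\eqref{eq:DeterminingEquationsIbragimovClass} after substituting $f=\pm u_x^{-4}$. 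My first step is to extract the classifying equation from the last (``Rest'') equation of~\eqref{eq:DeterminingEquationsIbragimovClass}: substituting $f=\pm u_x^{-4}$ and $g=\mu u_x^{-3}$, using $\eta_x=0$, and using $\tau_t+\xi_x=2\eta^1$ to cancel the $\tau_t$ and $\eta^1$ contributions, the equation collapses to $\eta^0_{tt}-(\mu\xi)_x u_x^{-3}=0$, and splitting with respect to $u_x$ leaves $\eta^0_{tt}=0$ (already ensured) together with the key classifying equation $(\mu\xi)_x=0$.

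The second step is to peel off the $\mu$-independent part of $\mathfrak g^{\max}$ and then classify the extensions according to the solvability of $(\mu\xi)_x=0$ for nonzero~$\xi$. Taking $\xi=0$ satisfies the classifying equation for every $\mu$, and substituting the five free parameters $a_0,a_1,a_2$ and the two constants in $\eta^0$ into~\eqref{eq:DeterminingEquationsIbragimovClass} yields precisely the algebra
\[\mathfrak g^\cap_1=\langle\p_t,\,\p_u,\,t\p_u,\,2t\p_t+u\p_u,\,t^2\p_t+tu\p_u\rangle,\]
establishing Case~0. For extensions with $\xi=cx+b_0\ne0$ the classifying equation says $\mu(cx+b_0)$ is independent of~$x$: a nonzero constant $\mu$, normalised to $\mu=1$ by the $G^\sim$-scalings $\mathscr D^t$ and $\mathscr D^u$, forces $c=0$ and leaves $b_0$ free, contributing $\p_x$ (Case~1); a nonconstant admissible $\mu$ must equal $k/(cx+b_0)$ with $c\ne0$, and after a shift of $x$ and a scaling (both in $G^\sim$) reduces to $\mu=\nu x^{-1}$ with $b_0=0$ and $c$ free, contributing $2x\p_x+u\p_u$ (Case~2); and $\mu=0$ imposes no restriction on $\xi$, so both extra generators appear, giving Case~3. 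In each of Cases~1--3 the updated $\eta^1=(\tau_t+\xi_x)/2$ is substituted back into~\eqref{eq:DeterminingEquationsIbragimovClass} to confirm that no further constraints arise.

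Finally, the four cases must be shown to be pairwise $G^\sim$-inequivalent. The maximal invariance algebras have dimensions $5$, $6$, $6$, $7$ respectively, so only Cases~1 and~2 need separation: the extra generator $\p_x$ in Case~1 commutes with every element of $\mathfrak g^\cap_1$, whereas in Case~2 one has $[2x\p_x+u\p_u,\p_u]=-\p_u\ne 0$, so the two algebras are non-isomorphic as Lie algebras, and since $G^\sim$ acts by Lie algebra isomorphisms on maximal invariance algebras, $G^\sim$-inequivalence follows. The mildly delicate step in the whole plan is the bookkeeping of the $G^\sim$-normalisation of $\mu$ to the canonical representatives $0$, $1$, $\nu x^{-1}$ using only the subgroup of $G^\sim$ preserving the form $f=\pm u_x^{-4}$, $g=\mu(x)u_x^{-3}$; this subgroup acts on $x$ by affine transformations and on $(u,t)$ by scalings, which provides exactly the freedom needed for the three canonical forms of~$\mu$.
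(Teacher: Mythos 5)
Your proposal is correct and follows essentially the same route as the paper: the classifying condition you isolate, $(\mu\xi)_x=0$ with $\xi$ affine in~$x$, is exactly the paper's condition $(\mu(2b_1x+b_0))_x=0$, and your case split on its solvability for nonzero~$\xi$, combined with the affine-in-$x$ and scaling normalizations of~$\mu$, reproduces Cases~0--3 verbatim. The explicit pairwise-inequivalence check you add at the end is a harmless supplement; just note that to separate Cases~1 and~2 rigorously one should compare an isomorphism invariant such as the dimension of the center of the two six-dimensional algebras (one-dimensional versus trivial), rather than only observing that $\p_x$ is central in one while $2x\p_x+u\p_u$ is not central in the other -- the facts you state are precisely what is needed for that comparison.
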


\begin{remark}
We can use equations of the general form
\begin{equation}\label{eq:IbragimovClassSpecialLieSymExts1EquivSubclass}
u_{tt}=\theta(x)u_x^{-4}u_{xx} 
\end{equation}
as canonical representatives of elements from the class~$\mathcal K$ instead of~\eqref{eq:IbragimovClassSpecialLieSymExts1Subclass}. 
Indeed, each equation from the subclass~\eqref{eq:IbragimovClassSpecialLieSymExts1Subclass} 
is mapped to an equation from the subclass~\eqref{eq:IbragimovClassSpecialLieSymExts1EquivSubclass} by the transformation~$\mathscr D(\varphi)$, 
where $\varphi_{xx}\pm\mu\varphi_x=0$ and $\theta(\tilde x)=\pm(\varphi_x(x))^{-2}$. 
(Here and in what follows all $\pm$ and $\mp$ are consistent with those from Lemma~\ref{lem:IbragimovClassSpecialLieSymExts1}.)
In other words, we construct a point-transformation mapping~\cite{vane09Ay} between 
the subclasses~\eqref{eq:IbragimovClassSpecialLieSymExts1Subclass} and~\eqref{eq:IbragimovClassSpecialLieSymExts1EquivSubclass}
which is generated by a family of equivalence transformations parameterized by the arbitrary element~$\mu$. 
Hence, mapping and rearrangement of the classification list~\eqref{eq:IbragimovClassSpecialLieSymExts1SubclassList} lead to the equivalent list based on 
the canonical representative form~\eqref{eq:IbragimovClassSpecialLieSymExts1EquivSubclass}:
\begin{gather}\label{eq:IbragimovClassSpecialLieSymExts1EquivSubclassList}\hspace*{-\arraycolsep}%
\begin{array}{lll}
0.& \text{arbitrary}\ \theta\colon & \mathfrak g^\cap_1 = \mathfrak g^\cap + \langle t^2\p_t+tu\p_u,\,2t\p_t+u\p_u\rangle,\\
1.& \theta=\pm e^{2x}\colon & \mathfrak g^\textup{max}=\mathfrak g^\cap_1 +\langle2\p_x+u\p_u\rangle,\\
2.& \theta=\pm|x|^{2p},\ p\ne0\colon & \mathfrak g^\textup{max}=\mathfrak g^\cap_1 +\langle2x\p_x+(p+1)u\p_u\rangle,\\
3.& \theta=\pm1\colon & \mathfrak g^\textup{max}=\mathfrak g^\cap_1 +\langle\p_x,\,2x\p_x+u\p_u\rangle.
\end{array}
\end{gather}
Cases~0, 1, $2|_{\nu=\pm1}$, $2|_{\nu\ne\pm1}$ and 3 of the list~\eqref{eq:IbragimovClassSpecialLieSymExts1SubclassList} are mapped to 
Cases~0, $2|_{p=-1}$, 1, $2|_{p=\nu/(\nu\mp1)}$ and 3 of the list~\eqref{eq:IbragimovClassSpecialLieSymExts1EquivSubclassList}, respectively. 
Each of the classification lists has certain advantages. 
Thus, the form~\eqref{eq:IbragimovClassSpecialLieSymExts1EquivSubclass} is more compact than~\eqref{eq:IbragimovClassSpecialLieSymExts1Subclass}. 
At the same time, basis elements of the algebras presented in the list~\eqref{eq:IbragimovClassSpecialLieSymExts1SubclassList} do not depend, 
in contrast to Case~2 of~\eqref{eq:IbragimovClassSpecialLieSymExts1EquivSubclassList}, on equation parameters. 
The equation associated with Case~1 of the list~\eqref{eq:IbragimovClassSpecialLieSymExts1SubclassList}
does not explicitly involve the independent variable~$x$, 
as opposed to its image given in Case~$2|_{p=-1}$ of the list~\eqref{eq:IbragimovClassSpecialLieSymExts1EquivSubclassList}
whose value of the arbitrary element~$\theta$ equals $\pm x^{-2}$. 
\end{remark}

\begin{remark}
Due to Theorem~\ref{thm:WeakNormalizationIbragimovClass}, to complete the group classification of the class~\eqref{eq:IbragimovClass} it is enough to investigate symmetry extensions induced by subalgebras of the equivalence algebra~$\mathfrak g^\sim$. The corresponding Lie symmetry generators satisfy the following simplified determining equations:
\begin{align}\label{eq:DeterminingEqsSimplifiedIbragimovClass}
\begin{split}
 &\tau_u=\tau_x=\tau_{tt}=\xi_u=\xi_t=\eta_{uu}=\eta_{xu}=\eta_{tx}=\eta_{tu}=\eta_{ttt}=0,\\
 &\xi f_x+((\eta_u-\xi_x)u_x+\eta_x)f_{u_x}=2(\xi_x-\tau_t)f, \\
 &\xi g_x+((\eta_u-\xi_x)u_x+\eta_x)g_{u_x}=(\eta_u-2\tau_t)g+(\xi_{xx}u_x-\eta_{xx})f +\eta_{tt}.
\end{split}
\end{align}
\end{remark}

\begin{remark}
Lemma~\ref{lem:IbragimovClassSpecialLieSymExts1} obviously implies that the entire class~\eqref{eq:IbragimovClass} is not weakly normalized. 
This can also be proved without the study of the subclass structure, 
by the direct computation of the union $\mathfrak g^\cup$ of the maximal Lie invariance algebras of equations from the class~\eqref{eq:IbragimovClass}.
The set~$\mathfrak g^\cup$ consists of vector fields of the form $\tau(t,x,u)\p_t + \xi(t,x,u)\p_x+ \eta(t,x,u)\p_u$ 
for which the whole system of determining equations and the nonvanishing condition $(f_{u_x},g_{u_xu_x})\ne(0,0)$
are consistent with respect to the functions $f=f(x,u_x)$ and $g=g(x,u_x)$. 
The consistency condition is the joint system of~\eqref{eq:DeterminingEquationsIbragimovClassSimplified} and 
\begin{equation}\label{eq:IbragimovClassSpecificSystemForUnionOfMIA}
\begin{split}
&\eta_{tx}(\eta_u-\xi_x)=\eta_x\eta_{tu}+\xi\eta_{txx},\\
&\eta_{tu}(\xi_{xx}+\eta_{xx})=\eta_{txx}(\eta_u-2\tau_t+\xi),\\
&\eta_{ttt}(\eta_u-2\tau_t)=\eta_{tt}(\eta_{tu}-2\tau_{tt}).
\end{split}
\end{equation}
It is clear that $\mathfrak g^\cup$ is not contained in the projection $\mathrm P\mathfrak g^\sim$ of the equivalence algebra~$\mathfrak g^\sim$, 
which is associated with the solution set of the system given in the first row of~\eqref{eq:DeterminingEqsSimplifiedIbragimovClass}.
\end{remark}

\section{Set of admissible transformations}\label{sec:SetOfAdmTrans}

After we have established the equivalence group of the class~\eqref{eq:IbragimovClass}, we can describe the set of admissible transformations of this class in terms of its normalized subclasses. 
Theorem~\ref{thm:WeakNormalizationIbragimovClass} and its proof give us hints on feasible ways for the classification of admissible transformations.

First we assume that $f_{u_x}=0$ and therefore $g_{u_xu_x}\ne0$. We differentiate Eq.~\eqref{eq:DeterminingEqsEquivalenceTransformations3} with respect to $\tilde u_{\tilde t}$ and $\tilde u_{\tilde x}$ and take into account Eq.~\eqref{eq:DeterminingEqsEquivalenceTransformations1Reminder}. From the obtained equation
\[
 g_{u_xu_x}\frac{T_xX_x}{U_u^{\,2}}=0
\]
and the inequality $g_{u_xu_x}\ne0$ we can conclude that $T_xX_x=0$. Then Eq.~\eqref{eq:DeterminingEqsEquivalenceTransformations1Reminder} also implies $T_tX_t=0$.

Suppose that $T_t=0$. Consequently, in view of the nondegeneracy condition of point transformations we have $T_x\ne0$ and $X_t\ne0$ and therefore $X_x=0$.
The expressions of~$u_t$ and~$u_x$ via~$\tilde u_{\tilde t}$ and~$\tilde u_{\tilde x}$ take the form 
$u_t=(X_t\tilde u_{\tilde x}-U_t)/U_u$ and $u_x=(T_x\tilde u_{\tilde t}-U_x)/U_u$, i.e., 
the expression of~$u_t$ (resp.\ $u_x$) does not involve~$\tilde u_{\tilde t}$ (resp.\ $\tilde u_{\tilde x}$). 
We differentiate Eq.~\eqref{eq:DeterminingEqsEquivalenceTransformations3} twice with respect to $\tilde u_{\tilde x}$ and once with respect to~$u$.
In view of the supposition $T_t=0$, this gives $(U_{uu}/U_u^{\,2})_u=0$.
Then we differentiate Eq.~\eqref{eq:DeterminingEqsEquivalenceTransformations3} twice with respect to $\tilde u_{\tilde t}$:
\begin{equation}\label{eq:DeterminingEqsEquivalenceTransformations3utut}
 g_{u_xu_x} = 2f\frac{U_{uu}}{U_u}.
\end{equation}
The subsequent differentiation of Eq.~\eqref{eq:DeterminingEqsEquivalenceTransformations3utut} with respect to~$u$ gives the equation $(U_{uu}/U_u)_u=0$, which together with $(U_{uu}/U_u^{\,2})_u=0$ implies that $U_{uu}=0$. Then Eq.~\eqref{eq:DeterminingEqsEquivalenceTransformations3utut} is reduced to $g_{u_xu_x}=0$ and therefore leads to a contradiction.

This is why we necessarily have $T_tX_x\ne0$ and consequently $X_t=T_x=0$. In view of Eq.~\eqref{eq:DeterminingEqsEquivalenceTransformations2}, we also obtain the equation $\tilde f = fX_x^2/T_t^2$ from which we can conclude by differentiation with respect to~$t$ that $T_{tt}=0$. It is also evident that $\tilde f_{\tilde u_{\tilde x}}=0$.

Owing to the restrictions derived so far, it is now possible to split Eq.~\eqref{eq:DeterminingEqsEquivalenceTransformations3} with respect to~$u_t$. The coefficient of $u_t^2$ gives $U_{uu}=0$ and that of $u_t$ leads to $U_{tu}=0$. The rest of Eq.~\eqref{eq:DeterminingEqsEquivalenceTransformations3} is
\[
  \tilde g T_t^2 -U_{tt}=f(\tilde u_{\tilde x}X_{xx}-U_{xx}-2U_{xu}u_x)+ gU_u.
\]
This obviously implies that $\tilde g_{\tilde u_{\tilde x}\tilde u_{\tilde x}}\ne0$ since $g_{u_xu_x}\ne0$. We will successively differentiate the above rest with respect to three combinations of variables, $(u,\tilde u_{\tilde x})$, $(t,\tilde u_{\tilde x})$ and $t$, which gives $U_{xu}=0$, $U_{tx}=0$ and $U_{ttt}=0$, respectively.

Summing up, for the components $T$, $X$ and $U$ of admissible transformations of any equation with $f_{u_x}=0$ and $g_{u_xu_x}\ne0$ within the class~\eqref{eq:IbragimovClass} we derive the same system of determining equations as in the case of equivalence transformations, cf.~\eqref{eq:DeterminingEqsEquivalenceTransformationsFinalSystem}. Moreover, the conditions $f_{u_x}=0$ and $g_{u_xu_x}\ne0$ are saved by the admissible transformations. In this way, we have established the following theorem:

\begin{theorem}\label{thm:IbragimovClassNormalizedSubclass}
The subclass of class~\eqref{eq:IbragimovClass} which is singled out by the constraints $f_{u_x}=0$ and $g_{u_xu_x}\ne0$ is saved by admissible point transformations within the class~\eqref{eq:IbragimovClass}. This subclass is normalized and its equivalence group coincides with the equivalence group $G^\sim$ of the entire class~\eqref{eq:IbragimovClass}.
\end{theorem}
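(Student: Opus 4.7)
The plan is to show two things at once: (a) the subclass defined by $f_{u_x}=0$ and $g_{u_xu_x}\ne0$ is preserved under admissible transformations within the class~\eqref{eq:IbragimovClass}, and (b) every such admissible transformation is the restriction of an equivalence transformation from $G^\sim$. Both assertions will be extracted from the general constraints~\eqref{eq:DeterminingEqsEquivalenceTransformations1Reminder}, \eqref{eq:DeterminingEqsEquivalenceTransformations2}, \eqref{eq:DeterminingEqsEquivalenceTransformations3} derived in Section~\ref{sec:PreliminaryStudyOfAdmTrans}, where we already know $T_u=X_u=0$. Normalization of the subclass and the coincidence of its equivalence group with $G^\sim$ then follow directly from Definition~\ref{def:NormalizationIbragimov}.

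First I would exploit the target inequality $\tilde g_{\tilde u_{\tilde x}\tilde u_{\tilde x}}\ne0$. Differentiating~\eqref{eq:DeterminingEqsEquivalenceTransformations3} with respect to $\tilde u_{\tilde t}$ and $\tilde u_{\tilde x}$ produces a term proportional to $\tilde g_{\tilde u_{\tilde x}\tilde u_{\tilde x}}\,T_xX_x/U_u^{\,2}$, forcing $T_xX_x=0$; then~\eqref{eq:DeterminingEqsEquivalenceTransformations1Reminder} yields $T_tX_t=0$. So the Jacobian leaves us with two branches: either $T_x=X_t=0$ (the "right" branch giving $T=T(t)$, $X=X(x)$) or $T_t=X_x=0$ with $T_x,X_t\ne0$ (a swap-like branch).

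The main obstacle is ruling out the swap branch, because it has the right Jacobian count and is only excluded by the nonlinearity assumption $g_{u_xu_x}\ne0$. My strategy there is to observe that when $T_t=X_x=0$ the relation $\tilde u_{\tilde t} = (X_t\tilde u_{\tilde x}-U_t)/U_u$ does not involve $\tilde u_{\tilde t}$ on the right and similarly for $u_x$, so Eq.~\eqref{eq:DeterminingEqsEquivalenceTransformations3} depends on $\tilde u_{\tilde t},\tilde u_{\tilde x}$ only through the shifted quantities. Differentiating~\eqref{eq:DeterminingEqsEquivalenceTransformations3} twice in $\tilde u_{\tilde x}$ and once in $u$ will give $(U_{uu}/U_u^{\,2})_u=0$, and differentiating twice in $\tilde u_{\tilde t}$ will give $g_{u_xu_x}=2fU_{uu}/U_u$. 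Combining these two relations forces $U_{uu}=0$, whence $g_{u_xu_x}=0$, contradicting the nonvanishing hypothesis.

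Hence $X_t=T_x=0$, i.e.\ $T=T(t)$ and $X=X(x)$. Then~\eqref{eq:DeterminingEqsEquivalenceTransformations2} collapses to $\tilde f T_t^{\,2}=fX_x^{\,2}$; differentiation in $t$ forces $T_{tt}=0$ and $\tilde f_{\tilde u_{\tilde x}}=0$, which already proves that the constraint $f_{u_x}=0$ is inherited by $\tilde f$ and so the subclass is invariant under admissible transformations. Splitting the remainder of~\eqref{eq:DeterminingEqsEquivalenceTransformations3} with respect to $u_t$ yields $U_{uu}=U_{tu}=0$, and then successive differentiations of the surviving equation in the pairs $(u,\tilde u_{\tilde x})$, $(t,\tilde u_{\tilde x})$ and in $t$ alone produce $U_{xu}=U_{tx}=U_{ttt}=0$. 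The full system obtained is exactly~\eqref{eq:DeterminingEqsEquivalenceTransformationsFinalSystem}, whose integration was already shown in Section~\ref{sec:EquivGroup} to give precisely the transformations of $G^\sim$ in Theorem~\ref{thm:EquivalenceGroupIbragrimovClass}. Thus every admissible transformation of the subclass extends to a member of $G^\sim$, which gives normalization and identifies the equivalence group of the subclass with~$G^\sim$.
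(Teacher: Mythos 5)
Your proposal is correct in substance and follows essentially the same route as the paper's own argument: force $T_xX_x=0$ and hence $T_tX_t=0$, kill the swap branch $T_t=X_x=0$ via the two relations $(U_{uu}/U_u^{\,2})_u=0$ and $g_{u_xu_x}=2fU_{uu}/U_u$, then reduce the surviving branch to the system~\eqref{eq:DeterminingEqsEquivalenceTransformationsFinalSystem} already integrated in Section~\ref{sec:EquivGroup}. One slip worth fixing: in your first step the mixed derivative of Eq.~\eqref{eq:DeterminingEqsEquivalenceTransformations3} with respect to $\tilde u_{\tilde t}$ and $\tilde u_{\tilde x}$ produces the term $g_{u_xu_x}T_xX_x/U_u^{\,2}$ involving the \emph{source} arbitrary element $g$ (since $\tilde g$ does not depend on $\tilde u_{\tilde t}$, while $u_x$ depends on both tilded first derivatives), so it is the hypothesis $g_{u_xu_x}\ne0$ on the source equation that yields $T_xX_x=0$; invoking the ``target inequality'' $\tilde g_{\tilde u_{\tilde x}\tilde u_{\tilde x}}\ne0$ at that stage would be circular, as that property of the image equation is only established at the end (and in fact follows already from $\tilde f_{\tilde u_{\tilde x}}=0$ together with the nonlinearity condition defining the class).
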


It now remains to investigate the case $f_{u_x}\ne0$. Eq.~\eqref{eq:DeterminingEqsEquivalenceTransformations1Reminder} immediately implies that $T_tX_t=T_xX_x=0$.

Supposing $T_x\ne0$, we obtain that $X_x=0$, $X_t\ne0$ and hence $T_t=0$. In view of these conditions Eq.~\eqref{eq:DeterminingEqsEquivalenceTransformations2} is reduced to $X_t^2=\tilde ffT_x^2$. Differentiating the last equation with respect to~$u_x$ leads to the equation $\tilde ff_{u_x}T_x^2=0$, which is equivalent to the equation $T_x=0$, contradicting the initial supposition.

Therefore, we have $T_x=0$, $T_t\ne0$, $X_t=0$ and $X_x\ne0$ and Eq.~\eqref{eq:DeterminingEqsEquivalenceTransformations2} reads
\begin{gather}\label{eq:DeterminingEqsEquivalenceTransformations5}
 \tilde fT_t^2 = fX_x^2.
\end{gather}
As the transformation rules for the first derivatives are simplified to
\[
 \tilde u_{\tilde t} = \frac{U_t+U_uu_t}{T_t},\quad \tilde u_{\tilde x} = \frac{U_x+U_uu_x}{X_x},
\]
we can conclude from Eq.~\eqref{eq:DeterminingEqsEquivalenceTransformations5} that $\tilde f_{\tilde u_{\tilde x}}=0$ if and only if $f_{u_x}=0$.

Differentiating Eq.~\eqref{eq:DeterminingEqsEquivalenceTransformations5} with respect to~$u$ gives
$
 (U_{xu}+U_{uu}u_x)\tilde f_{\tilde u_{\tilde x}} =0,
$
and therefore $U_{uu}=0$ and $U_{xu}=0$. Differentiating Eq.~\eqref{eq:DeterminingEqsEquivalenceTransformations5} with respect to~$t$ results in the equation
\begin{equation}\label{eq:DeterminingEqsEquivalenceTransformations5t}
 \left(\frac{U_{tu}}{X_x}\left(\frac{X_x\tilde u_{\tilde x}-U_x}{U_u}\right)+\frac{U_{xt}}{X_x}\right)\tilde f_{\tilde u_{\tilde x}}+2\tilde f\frac{T_{tt}}{T_t} = 0.
\end{equation}
Taking into account the simplifications obtained so far, we represent Eq.~\eqref{eq:DeterminingEqsEquivalenceTransformations3} in the reduced form
\[
 \tilde g T_t^2 + \frac{U_t+U_uu_t}{T_t}T_{tt} - U_{tt}-2U_{tu}u_t =  f\left(\frac{U_x+U_uu_x}{X_x} X_{xx}-U_{xx}\right)+ gU_u.
\]
This last equation can be split with respect to $u_t$, giving the equations
\begin{equation}\label{eq:DeterminingEqsEquivalenceTransformations6}
\arraycolsep=0ex
\begin{array}{l}
\dfrac{U_u}{T_t}T_{tt} =2U_{tu},\\[1ex]
\tilde g T_t^2 + \dfrac{U_t}{T_t}T_{tt} - U_{tt} =  f\left(\dfrac{U_x+U_uu_x}{X_x} X_{xx}-U_{xx}\right)+ gU_u.
\end{array}
\end{equation}
We now distinguish the two cases $T_{tt}=0$ and $T_{tt}\ne0$.

In the case of $T_{tt}=0$, the first of the above equations implies $U_{tu}=0$. The corresponding form of Eq.~\eqref{eq:DeterminingEqsEquivalenceTransformations5t} then leads to $U_{tx}=0$. Differentiating the second equation of~\eqref{eq:DeterminingEqsEquivalenceTransformations6} with respect to $t$ yields $U_{ttt}=0$. Collecting all the results for this case implies that the transformation belongs to the equivalence group $G^\sim$.

We now investigate the case of $T_{tt}\ne0$. Then, we solve the first equation of~\eqref{eq:DeterminingEqsEquivalenceTransformations6} with respect to $U_{tu}/U_u$ and plug the resulting expression into Eq.~\eqref{eq:DeterminingEqsEquivalenceTransformations5t}. This yields
\[
 \left(\frac{T_{tt}}{2T_t}\left(\tilde u_{\tilde x}-\frac{U_x}{X_x}\right)+\frac{U_{xt}}{X_x}\right)\tilde f_{\tilde u_{\tilde x}}+2\frac{T_{tt}}{T_t}\tilde f = 0,
\]
or,
\[
\left(\tilde u_{\tilde x}+\frac{U_{xt}}{X_x}\frac{2T_t}{T_{tt}}-\frac{U_x}{X_x}\right)\tilde f_{\tilde u_{\tilde x}}+4\tilde f = 0.
\]
The difference of the second and third terms in the bracket can be encapsulated as a function of~$x$ (or, equivalently, $\tilde x$), i.e.\ we can write
$
(\tilde u_{\tilde x}+\tilde \alpha(\tilde x))\tilde f_{\tilde u_{\tilde x}}+4\tilde f = 0.
$
This implies that
\begin{equation}\label{eq:DeterminingEqsEquivalenceTransformationsExpressionf}
    \tilde f = \frac{\tilde \beta(\tilde x)}{(\tilde u_{\tilde x}+\tilde \alpha(\tilde x))^4}.
\end{equation}
We now differentiate the second equation of~\eqref{eq:DeterminingEqsEquivalenceTransformations6} with respect to~$u$, which gives
\[
U_{ttu} = \frac{U_{ut}T_{tt}}{T_t} = \frac12\left(\frac{U_uT_{tt}}{T_t}\right)_t,
\]
where the second equality holds upon differentiating the first equation in~\eqref{eq:DeterminingEqsEquivalenceTransformations6} with respect to $t$. This implies that
\[
 \frac{U_{tu}T_{tt}}{T_t} - U_u\left(\frac{T_{tt}}{T_t}\right)_t = 0,
\]
which is equivalent to $\left(U_uT_t/T_{tt}\right)_t=0$. 
Integrating this equation gives an expression for $U_u$: $U_u = \varkappa T_{tt}/T_t$, where $\varkappa$ is a constant.
We substitute the expression for $U_u$ into the first equation of~\eqref{eq:DeterminingEqsEquivalenceTransformations6} to obtain
$2T_{ttt}T_t-3T_{tt}^2=0$.
The general solution of the last equation is
\[
T=\frac{a_1t+a_0}{a_3t+a_2},
\]
were $a_i$, $i=0,\dots,3$, are constants with $a_1a_2-a_0a_3\ne0$ which are determined up to a common nonvanishing multiplier. As $T_{tt}\ne0$, we moreover have $a_3\ne0$ and can assume $a_3=1$ due to the indeterminacy of the constant multiplier. Then we successively gauge $a_2$, $a_0$ and $a_1$ to 0, 1 and 0 by a shift of~$t$, a scaling of~$t$ and a shift of~$\tilde t$, respectively. All the above transformations belong to the group~$G^\sim$. In other words, $T=1/t\bmod G^\sim$. Plugging the expression obtained for~$T$ into the equation $U_u=\varkappa T_{tt}/T_t$ allows deriving that $U_u=\hat q/t$, where $\hat q$ is a nonzero constant.

Combining Eq.~\eqref{eq:DeterminingEqsEquivalenceTransformations5} with the expression for $\tilde f$ established in Eq.~\eqref{eq:DeterminingEqsEquivalenceTransformationsExpressionf} yields
\[
 f=\frac{T_t^2}{X_x^2}\frac{\tilde\alpha (X)X_x^4}{(U_uu_x+U_x+\tilde\beta (X)X_x)^4}
 =\frac{\alpha(x)}{(u_x+\beta(x))^4},
\]
where $\beta(x) := (U_x+\tilde\beta(X)X_x)/U_u$ and $\alpha(x):=T_t^2X_x\tilde\alpha(X)/U_u^4$. Furthermore, upon using transformations from the equivalence group~$G^\sim$, we can set $\tilde\beta=\beta=0$, which consequently implies that $U_x=0$. By means of equivalence transformations, we can also set $\beta,\tilde \beta\in\{-1,1\}$ and as the multiplier relating $\alpha$ and $\tilde\alpha$ is strictly positive, we have that $\tilde\alpha = \alpha$. As the transformation of~$X$ only depends on~$x$ it also follows from $T_t^2X_x/U_u^4=1$ that $X_x=\const$. Upon scaling this constant and translations of~$x$, which belong to~$G^\sim$, we can choose $X=x$. Therefore, $T_t^2/U_u^4=1$. As $T=1/t$ and thus $U_u=\hat q/t$, this means that $\hat q=1$, i.e., $U_u=1/t$ and hence $U=u/t+U^0(t)$ and $\tilde u_{\tilde x}=u_x/t$. Here $U^0=U^0(t)$ is a smooth function arising after integration with respect to~$u$ and depending only on~$x$ in view of the condition $U_x=0$.

The remaining part of Eq.~\eqref{eq:DeterminingEqsEquivalenceTransformations3} can be represented as
\begin{equation}\label{eq:DeterminingEqsEquivalenceTransformations3Rest}
\frac{\tilde g}{t^3} - 2U^0_t-tU^0_{tt} = g,
\end{equation}
where $U=u/t+U^0(t)$. The differentiation of Eq.~\eqref{eq:DeterminingEqsEquivalenceTransformations3Rest} with respect to~$t$ yields
$\tilde g_{\tilde u_{\tilde x}}\tilde u_{\tilde x} + 3\tilde g + t^4(tU^0_{tt}+2U^0_t)_t=0$.
The first two terms do not depend on~$t$ and the last summand depends only on~$t$. Thus, we can separate variables and set $t^4(tU^0_{tt}+2U^0_t)_t=-3\tilde\varkappa=\const$, where the factor of $-3$ was introduced for the sake of convenience. Integration of this equation yields $tU^0_{tt}+2U^0_t=\tilde\varkappa/t^3+\varkappa$, where $\varkappa=\const$. The general solution of this equation is $U^0=\hat\varkappa/(2t^2)-\varkappa t/2-\sigma_1/t+\sigma_0$, where $\sigma_1, \sigma_2=\const$. We also have $\tilde g_{\tilde u_{\tilde x}}\tilde u_{\tilde x} + 3\tilde g=\tilde \varkappa$, which upon integration leads to $\tilde g= \tilde \mu(\tilde x)/\tilde u_{\tilde x}^3+\tilde \varkappa$. Plugging these results into Eq.~\eqref{eq:DeterminingEqsEquivalenceTransformations3Rest} gives
\[
g = \frac{\tilde \mu(X)}{u_x^3}+\frac{\tilde\varkappa}{t^3}-(tU^0_{tt}+2U^0_t)=\frac{\tilde \mu(X)}{u_x^3}+\varkappa.
\]
Using equivalence transformations, we can put $\hat\varkappa=\varkappa=0$. This is why we have $\tilde f=\delta/\tilde u_{\tilde x}^4$, $f=\delta/u_x^4$, $\tilde g=\delta/\tilde u_{\tilde x}^3$ and $g=\delta/u_x^3$, where $\delta=\pm1$. That is, the equivalence transformations for this case reduce to symmetry transformations.

Owing to the above computations, we can formulate the following theorem:

\begin{theorem}\label{thm:IbragimovClassSemiNormalizedSubclass}
The subclass~$\mathcal K$ of the class~\eqref{eq:IbragimovClass}, 
that consists of equations $G^\sim$-equivalent to equations of the form~\eqref{eq:IbragimovClassSpecialLieSymExts1Subclass}, 
is semi-normalized with respect to $G^\sim$. Any admissible transformation in this subclass is generated by~$G^\sim$ or is represented as a composition of the transformations $(\theta_1,\theta_2,T_1)$, $(\theta_2,\theta_2,T_2)$ and $(\theta_2,\theta_3,T_3)$, where $\theta_1=(f,g)$, $\theta_2=(\pm u_x^{-4},\mu u_x^{-3})$, $\theta_3=(\tilde f,\tilde g)$ and $T_1$, $T_3$ are equivalence transformations and $T_2=1/t$ is a symmetry transformation of $\mathcal L_{\theta_2}$.
The complement~$\bar{\mathcal K}$ of~$\mathcal K$ in the class~\eqref{eq:IbragimovClass}
(as well as the complement of~$\mathcal K$ in the subclass of~\eqref{eq:IbragimovClass} singled out by the condition $f_{u_x}\ne0$) is normalized with respect to $G^\sim$. The usual equivalence group of the subclass~$\bar{\mathcal K}$ coincides with~$G^\sim$.
\end{theorem}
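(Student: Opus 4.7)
The proof essentially synthesizes the case analysis already carried out in the body of Section~\ref{sec:SetOfAdmTrans}. The plan is to split admissible transformations by the two dichotomies examined there: first whether $f_{u_x}=0$ on the source equation, then (in the nonlinear case $f_{u_x}\ne 0$) whether $T_{tt}=0$ for the time component of the transformation. The body shows that three of the four combinations force the transformation to be induced by $G^\sim$: (a) $f_{u_x}=0$ is handled by Theorem~\ref{thm:IbragimovClassNormalizedSubclass}; (b) $f_{u_x}\ne 0$ with $T_{tt}=0$ is handled by the reduction to the determining system~\eqref{eq:DeterminingEqsEquivalenceTransformationsFinalSystem}. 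Only the combination $f_{u_x}\ne 0$, $T_{tt}\ne 0$ produces admissible transformations outside $G^\sim$, and it forces both endpoints to have $f=\pm u_x^{-4}$, $g=\mu(x)u_x^{-3}$ modulo $G^\sim$, i.e.\ to lie in $\mathcal K$.

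For the normalization of $\bar{\mathcal K}$ I would take an admissible transformation $(\theta_1,\tilde\theta_1,\varphi)$ with both endpoints in $\bar{\mathcal K}$. If $f_{u_x}=0$ the conclusion follows from Theorem~\ref{thm:IbragimovClassNormalizedSubclass}; otherwise the $T_{tt}\ne 0$ subcase is excluded because it would place $\theta_1$ in $\mathcal K$, so $T_{tt}=0$ and $\varphi$ is the restriction of an element of $G^\sim$. The same argument (omitting the $f_{u_x}=0$ step) yields normalization of the intersection $\bar{\mathcal K}\cap\{f_{u_x}\ne 0\}$. That $G^\sim$ itself preserves $\bar{\mathcal K}$, and therefore that the usual equivalence group of $\bar{\mathcal K}$ equals $G^\sim$, follows from the explicit action of $G^\sim$ on $(f,g)$ stated in Theorem~\ref{thm:EquivalenceGroupIbragrimovClass}, which by construction respects the $G^\sim$-invariant defining inequality of Remark~\ref{rem:SubclassAndItsComplementOfIbragimovClass}.

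For the semi-normalization of $\mathcal K$, the crucial new ingredient is to verify that $T_2\colon\tilde t=1/t,\ \tilde x=x,\ \tilde u=u/t$ is a genuine point symmetry of every equation $\mathcal L_{\theta_2}$ with $\theta_2=(\pm u_x^{-4},\mu(x)u_x^{-3})$. A direct prolongation calculation gives $u_x=t\tilde u_{\tilde x}$, $u_{xx}=t\tilde u_{\tilde x\tilde x}$ and $u_{tt}=\tilde u_{\tilde t\tilde t}/t^{3}$, and substitution into $u_{tt}=\pm u_x^{-4}u_{xx}+\mu(x)u_x^{-3}$ reproduces the same equation in the tilded variables for every~$\mu$. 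Given any admissible $\varphi$ between $\theta_1,\theta_3\in\mathcal K$ with $T_{tt}\ne 0$, pick $T_1,T_3\in G^\sim$ sending $\theta_1,\theta_3$ to the canonical $\theta_2$; the body analysis identifies $\varphi$ with $T_3\circ T_2\circ T_1$ modulo $G^\sim$-absorbable gauges. Setting $\Phi:=T_3\circ T_1\in G^\sim$ and $\tilde\varphi:=T_1^{-1}\circ T_2\circ T_1\in G_{\theta_1}$ gives the factorization $\varphi=\Phi|_{(x,u)}\circ\tilde\varphi$ required by Definition~\ref{def:SemiNormalizationIbragimov}.

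The main obstacle is the bookkeeping in the $T_{tt}\ne 0$ branch: the body computation successively absorbs a M\"obius transformation of~$t$, a shift and scaling of~$x$, and additive/multiplicative gauges of~$U$ into elements of $G^\sim$ in order to reduce $\varphi$ to the canonical form $T=1/t$, $X=x$, $U=u/t$. Tracking which of these gauges are to be packaged into $T_1$ versus $T_3$ is routine but must be done explicitly to license the decomposition claimed in the theorem and to ensure that the composition $T_1^{-1}\circ T_2\circ T_1$ is really a symmetry of~$\theta_1$ rather than of some intermediate equation. Once this accounting is in place, normalization of $\bar{\mathcal K}$, semi-normalization of $\mathcal K$ and the identification of the equivalence groups all follow at once.
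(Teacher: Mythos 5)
Your proposal is correct and follows essentially the same route as the paper: the theorem there is stated as a summary of the case analysis of Section~\ref{sec:SetOfAdmTrans} ($f_{u_x}=0$ versus $f_{u_x}\ne0$, then $T_{tt}=0$ versus $T_{tt}\ne0$), and your organization of those cases, the explicit check that $\tilde t=1/t$, $\tilde x=x$, $\tilde u=u/t$ is a symmetry of every equation~\eqref{eq:IbragimovClassSpecialLieSymExts1Subclass}, and the factorization $\varphi=(T_3\circ T_1)|_{(x,u)}\circ(T_1^{-1}\circ T_2\circ T_1)$ matching Definition~\ref{def:SemiNormalizationIbragimov} all agree with what the paper does or leaves implicit. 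The gauge bookkeeping you flag is likewise left implicit in the paper and is indeed routine.
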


\begin{corollary}\label{col:IbragimovClassEquiv}
The entire class~\eqref{eq:IbragimovClass} is semi-normalized.
Hence the group classification of the class~\eqref{eq:IbragimovClass} up to $G^\sim$-equivalence coincides with the group classification of this class up to general point equivalence.
\end{corollary}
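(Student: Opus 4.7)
The plan is to deduce the corollary by combining Theorems~\ref{thm:IbragimovClassNormalizedSubclass} and~\ref{thm:IbragimovClassSemiNormalizedSubclass} with Proposition~\ref{pro:OnGroupClassificationsInSemiNormClass}. I regard the class~\eqref{eq:IbragimovClass} as the disjoint union $\mathcal K \sqcup \bar{\mathcal K}$, where, by Theorem~\ref{thm:IbragimovClassSemiNormalizedSubclass}, $\mathcal K$ is semi-normalized and $\bar{\mathcal K}$ is normalized with respect to $G^\sim$, and both subclasses are $G^\sim$-invariant by construction.

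Given an arbitrary admissible transformation $(\theta,\tilde\theta,\varphi)\in\mathrm T(\mathcal L|_{\mathcal S})$ of the full class, three cases arise. If $\theta,\tilde\theta\in\bar{\mathcal K}$, then normalization of $\bar{\mathcal K}$ supplies $\Phi\in G^\sim$ with $\tilde\theta=\Phi\theta$ and $\varphi=\Phi|_{(x,u)}$, i.e.\ the desired semi-normalization representation with $\tilde\varphi=\mathrm{id}\in G_\theta$. If $\theta,\tilde\theta\in\mathcal K$, then semi-normalization of $\mathcal K$ with respect to $G^\sim$ directly yields $\Phi\in G^\sim$ and $\tilde\varphi\in G_\theta$ with $\tilde\theta=\Phi\theta$ and $\varphi=\Phi|_{(x,u)}\circ\tilde\varphi$.

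The mixed case has to be excluded, and this is the main point requiring care. The subclass $\{f_{u_x}=0\}$ is contained in $\bar{\mathcal K}$ and, by Theorem~\ref{thm:IbragimovClassNormalizedSubclass}, is preserved by admissible transformations of~\eqref{eq:IbragimovClass}; hence it cannot be linked with $\mathcal K$, all of whose elements satisfy $f_{u_x}\ne0$. To rule out mixing between $\mathcal K$ and $\bar{\mathcal K}\cap\{f_{u_x}\ne0\}$, I invoke the case analysis carried out in the proof of Theorem~\ref{thm:IbragimovClassSemiNormalizedSubclass}: under $f_{u_x}\ne0$, any admissible transformation either satisfies $T_{tt}=0$ and thereby belongs to~$G^\sim$ (and so preserves the $G^\sim$-invariant subclass $\mathcal K$), or satisfies $T_{tt}\ne0$, in which case the analysis forces both source and target arbitrary elements to have the canonical form $f=\delta u_x^{-4}$, $g=\delta u_x^{-3}$ with $\delta=\pm1$, so that $\theta,\tilde\theta\in\mathcal K$. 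In either situation source and target lie in the same piece of the partition, excluding the mixed case and proving that the class~\eqref{eq:IbragimovClass} is semi-normalized.

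The remaining claim, that the $G^\sim$-classification coincides with the classification up to general point equivalence, is then an immediate consequence of Proposition~\ref{pro:OnGroupClassificationsInSemiNormClass} applied to this semi-normalized class. The only non-routine point in the whole argument is the exclusion of admissible transformations bridging $\mathcal K$ and $\bar{\mathcal K}$, which however has already been implicitly taken care of in the derivation of Theorem~\ref{thm:IbragimovClassSemiNormalizedSubclass}.
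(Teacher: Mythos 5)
Your proposal is correct and follows essentially the same route as the paper: the corollary is obtained by combining the partition into the normalized subclass $\bar{\mathcal K}$ and the semi-normalized subclass $\mathcal K$ (Theorems~\ref{thm:IbragimovClassNormalizedSubclass} and~\ref{thm:IbragimovClassSemiNormalizedSubclass}) with Proposition~\ref{pro:OnGroupClassificationsInSemiNormClass}, the exclusion of admissible transformations linking the two subclasses being exactly what the case analysis of Section~\ref{sec:SetOfAdmTrans} establishes. Your explicit treatment of the mixed case is a point the paper leaves implicit, and it is handled correctly.
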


\begin{remark}\label{rem:OnIbragimovClassNormalizationOfGaugedSpecialSubclass}
It can be proved using the above consideration that the class~\eqref{eq:IbragimovClassSpecialLieSymExts1Subclass} is normalized. 
The equivalence group~$G^\sim_1$ of this class consists of the transformations of the general form 
\[
\tilde t=\frac{a_1t+a_0}{a_3t+a_2},\quad 
\tilde x=b_1t+b_0,\quad 
\tilde u=\frac{\pm\sqrt{|b_1A|}\,u+b_3t+b_2}{a_3t+a_2},\quad 
\tilde\mu=\frac{\mu}{b_1},
\]
were $a_i$, $i=0,\dots,3$, are constants with $A=a_1a_2-a_0a_3\ne0$ which are determined up to a common nonvanishing multiplier 
and $b_i$, $i=0,\dots,3$, are arbitrary constants with $b_1\ne0$.
The group~$G^\sim_1$ can be represented as the product of its two subgroups. 
The first subgroup is the ideal associated with the kernel group of the class~\eqref{eq:IbragimovClassSpecialLieSymExts1Subclass} and 
formed by the transformations from~$G^\sim_1$ with $b_1=1$ and $b_0=0$.
The second subgroup corresponds to the subgroup of~$G^\sim$ whose elements save equations of the form~\eqref{eq:IbragimovClassSpecialLieSymExts1Subclass} 
and consists of the transformations from~$G^\sim_1$ with $a_3=1$ and $a_2=0$.
This is why the list presented in Lemma~\ref{lem:IbragimovClassSpecialLieSymExts1} is an exhaustive list of Lie symmetry extensions 
in the class~\eqref{eq:IbragimovClassSpecialLieSymExts1Subclass} up to both $G^\sim_1$-equivalence and general point equivalence. 
\end{remark}

\begin{remark}\label{rem:OnInequivOfLinAndNonlinCasesOfIbragimovClass}
It follows from the above consideration that the entire class of equations of the general form~\eqref{eq:IbragimovClass} is partitioned into three subclasses 
associated with the additional constraints $f_{u_x}\ne0$, $f_{u_x}=0$ and $g_{u_xu_x}\ne0$, and $f_{u_x}=g_{u_xu_x}=0$, respectively. 
Equations from different subclasses of this partition are not mapped to each other by point transformations. 
This is the main reason why it is natural to separate nonlinear equations of the form~\eqref{eq:IbragimovClass} from linear ones, which are well studied and form the last subclass. 
\end{remark}

\begin{remark}
In order to simplify calculations, we could use Theorem~4.4b of Ref.~\cite{king98Ay}, 
describing form-preserving transformations between $(1+1)$-dimensional second-order partial differential equations 
of the quite general form $u_{tt}=H(t,x,u,u_x,u_{xx})$, where $H_{u_{xx}}\ne0$. 
This theorem directly implies the simplest constraints $T_u=T_x=X_u=X_t=0$ 
for admissible transformations of the class~\eqref{eq:IbragimovClass}, 
in view of which the coefficients of any Lie symmetry operator~$Q=\tau\p_t+\xi\p_x+\eta\p_u$ of each equation 
from the class~\eqref{eq:IbragimovClass} satisfy the determining equations \mbox{$\tau_u=\tau_x=\xi_u=\xi_t=0$}. 
A~partial repetition of computations in the present paper was necessary in order to finding the appropriate partition 
of the class~\eqref{eq:IbragimovClass} into subclasses. 
\end{remark}

\section{Classification of inequivalent appropriate subalgebras}\label{sec:ClassificationSubalgebrasIbragimovClass}

In order to classify subalgebras of the equivalence algebra~$\mathfrak g^\sim$, 
we need to describe the adjoint action of the equivalence group~$G^\sim$, 
which consists of transformations of the form~\eqref{eq:EquivalenceGroupIbragimovClass}, 
on the generating vector fields~\eqref{eq:EquivalenceAlgebraGenWaveEqs} of~$\mathfrak g^\sim$. 
This adjoint action can be determined by solving the Cauchy problem
\[
 \dd{\ww}{\ve} = \ad\ \vv(\ww) := [\vv,\ww],\qquad \ww(0) = \ww_0
\]
for each pair $(\vv,\ww_0)$ of generating vector fields of~$\mathfrak g^\sim$,
which is equivalent to computing the convergent Lie series~\cite{olve86Ay}
\[
  \ww(\ve) = \mathrm{Ad}(e^{\ve\vv})\ww_0 := \sum_{n=0}^\infty\frac{\ve^n}{n!}(\ad\ \vv)^n(\ww_0).
\]
An alternative way is the direct computation of actions of transformations from~$G^\sim$ on elements of~$\mathfrak g^\sim$ 
via pushforward of vector fields by these transformations~\cite{card11Ay}. 
Stated in another way, the second method uses the usual transformation rule of vector fields under point transformations. 
As this method properly works for infinite-dimensional Lie algebra, we will pursue it below.

Employing elementary equivalence transformations (cf.\ the end of Section~\ref{sec:EquivGroup}), we can compute the nonidentical adjoint actions using the respective push-forwards. This yields
\begin{align*}
 &\mathscr F^2_*(c_4) \DDD^t = \DDD^t+2c_4\FF^2,                              &&\mathscr D^t_*(c_1) \FF^2 = c_1^{-2}\FF^2,\\
 &\mathscr G_*(\psi) \DDD^u = \DDD^u- \GG(\psi),                              &&\mathscr D^u_*(c_2) \GG(\psi) = c_2\GG(\psi),\\
 &\mathscr F^2_*(c_4) \DDD^u = \DDD^u-c_4\FF^2,                               &&\mathscr D^u_*(c_2) \FF^2 = c_2\FF^2, \\
 &\mathscr G_*(\psi) \DDD(\varphi)=\DDD(\varphi) + \GG(\varphi\psi_x),        &&\mathscr D_*(\theta) \GG(\psi) = \GG(\psi(\hat\theta)),\\
 &\mathscr D_*(\theta) \DDD(\varphi) = \DDD(\varphi(\hat\theta)/\hat\theta_x),&&
\end{align*}
where $\hat\theta=\hat\theta(x)$ is the inverse of the function $\theta$.
It should be stressed that there are more nonidentical adjoint actions of transformations from~$G^\sim$ on generating vector fields of~$\mathfrak g^\sim$ than listed above, namely those related with actions on the trivial prolongation~$\hat{\mathfrak g}^\cap$ of the kernel algebra~$\mathfrak g^\cap$ to the arbitrary elements, which is an ideal in~$\mathfrak g^\sim$, and those involving $\mathscr P^t_*(c_0)$ and $\mathscr F^1_*(c_3)$. These adjoint actions, however, do not yield simplifications in the course of classification of extensions of the kernel algebra. 

We will only classify appropriate subalgebras of~$\mathfrak g^\sim$. Any appropriate subalgebra~$\mathfrak s$ of~$\mathfrak g^\sim$ should contain $\hat{\mathfrak g}^\cap=\langle\PP^t,\FF^1,\GG(1)\rangle$. For the class~\eqref{eq:IbragimovClass} we have two specific representations of~$\mathfrak s$, which are given by~$\mathfrak s = \hat {\mathfrak g}^\cap+\langle Q^1,\dots,Q^k\rangle=\langle\PP^t,\FF^1\rangle+\langle \GG(1),Q^1,\dots,Q^k\rangle$, 
where ``$+$'' denotes the direct sum of vector spaces, 
$\hat {\mathfrak g}^\cap$ is an ideal of~$\mathfrak s$ (since it is an ideal of the entire~$\mathfrak g^\sim$) and 
$\langle \GG(1),Q^1,\dots,Q^k\rangle$ is a subalgebra of~$\mathfrak s$. 
$Q^1$, \dots, $Q^k$ are basis elements from the complement of $\hat{\mathfrak g}^\cap$ in~$\mathfrak s$ and their projections to the space of equation variables yield a proper Lie symmetry extension of~$\mathfrak g^\cap$ in the class~\eqref{eq:IbragimovClass}.

\begin{remark}\label{rem:OnStructureOfequivAlgebra}
The double representation of appropriate subalgebras is related with the representation of the whole equivalence algebra~$\mathfrak g^\sim$ in the form $\mathfrak g^\sim=\hat{\mathfrak g}^\cap+\bar{\mathfrak g}$, where 
$\hat{\mathfrak g}^\cap$ and $\bar{\mathfrak g}=\langle\DDD^u,\DDD^t,\DDD(\varphi),\GG(\psi),\FF^2\rangle$ 
are an ideal and a subalgebra of~$\mathfrak g^\sim$ but the sum is not direct even in the sense of vector spaces 
since $\hat{\mathfrak g}^\cap\cap\bar{\mathfrak g}=\langle\GG(1)\rangle$. 
Unfortunately, the equivalence algebra~$\mathfrak g^\sim$ does not possess a representation as a semi-direct sum of the ideal~$\hat{\mathfrak g}^\cap$ associated with the kernel algebra and a certain subalgebra, 
which additionally complicates the group classification of the class~\eqref{eq:IbragimovClass}.
\end{remark}

This is why it is necessary to classify only subalgebras of~$\mathfrak g^\sim$ which are contained in~$\bar{\mathfrak g}$ and contain $\langle\GG(1)\rangle$.
The classification should be carried out up to $G^\sim_0$-equivalence, where 
$G^\sim_0$ is a subgroup of~$G^\sim$ formed by the transformations~\eqref{eq:EquivalenceGroupIbragimovClass} 
with $c_0=c_3=0$. 
In fact, we will present the classification results in terms of extensions of~$\hat{\mathfrak g}^\cap$ 
excluding $\GG(1)$ from the corresponding bases. 

The determining equations for Lie symmetries of equations from the class~\eqref{eq:IbragimovClass} impose more restrictions on appropriate subalgebras.

\begin{lemma}\label{lem:OnAppropriateSubalgebras1}
$\mathfrak s\cap\langle\DDD^u,\GG(\psi),\FF^2\rangle=\mathfrak s\cap\langle\DDD^t,\FF^2\rangle=\{0\}$
for any appropriate subalgebra~$\mathfrak s$.
\end{lemma}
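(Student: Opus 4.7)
The plan is to exploit that any appropriate subalgebra~$\mathfrak s$ has projection $\mathrm P\mathfrak s$ equal to the maximal Lie invariance algebra~$\mathfrak g_\theta$ of some equation~$\mathcal L_\theta$ from the class~\eqref{eq:IbragimovClass}; hence every element of $\mathrm P\mathfrak s$ must satisfy the two classifying determining equations from system~\eqref{eq:DeterminingEquationsIbragimovClass} for the associated $(f,g)$, subject to the class-defining condition $(f_{u_x},g_{u_xu_x})\ne(0,0)$. For a general element~$Q$ of each of the two intersections, I would read off the coefficients $\tau$, $\xi$, $\eta$, substitute into the two classifying equations, and extract constraints on the free parameters of~$Q$.

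For the first equality, I would consider an arbitrary $Q=a\DDD^u+\GG(\psi)+c\FF^2\in\mathfrak s$ with constants $a,c$ and a smooth function $\psi=\psi(x)$. Its projection reads $\mathrm P Q=(au+\psi(x)+ct^2)\p_u$, so that $\tau=0$, $\xi=0$ and $\eta=au+\psi(x)+ct^2$. Substitution into the two classifying equations of~\eqref{eq:DeterminingEquationsIbragimovClass} gives
\begin{gather*}
(au_x+\psi'(x))\,f_{u_x}=0,\\
(au_x+\psi'(x))\,g_{u_x}=ag-\psi''(x)\,f+2c.
\end{gather*}
If $f_{u_x}\not\equiv 0$, the first equation and the independence of~$u_x$ force $a=0$ and $\psi'\equiv 0$, after which the second yields $c=0$. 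If $f_{u_x}\equiv 0$, the class-defining condition forces $g_{u_xu_x}\not\equiv 0$; differentiating the second equation with respect to~$u_x$ produces $(au_x+\psi'(x))g_{u_xu_x}=0$, so again $a=0$ and $\psi'\equiv 0$, and then $c=0$ from the residual identity. Thus $Q\in\langle\GG(1)\rangle\subseteq\hat{\mathfrak g}^\cap$; since by the convention announced just before the lemma the generator $\GG(1)$ is excluded from the bases of classified extensions of~$\hat{\mathfrak g}^\cap$, $Q$ contributes nothing beyond the kernel prolongation, which is the content of the first equality.

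For the second equality, I would take $Q=a\DDD^t+c\FF^2\in\mathfrak s$; its projection has $\tau=at$, $\xi=0$, $\eta=ct^2$, with $\tau_t=a$ and $\eta_{tt}=2c$. The first classifying equation collapses to $-2af=0$ and, since $f\ne 0$ throughout the class, forces $a=0$; the second then reduces to $2c=0$, so $c=0$ and $Q=0$.

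The only mild obstacle is the case split driven by the nonlinearity condition $(f_{u_x},g_{u_xu_x})\ne(0,0)$: when $f_{u_x}\equiv 0$ the first classifying equation carries no information, so one must differentiate the second with respect to~$u_x$ in order to exploit $g_{u_xu_x}\ne 0$. All remaining steps reduce to direct substitution into~\eqref{eq:DeterminingEquationsIbragimovClass} and splitting with respect to~$u_x$.
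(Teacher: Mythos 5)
Your proof is correct and follows essentially the same route as the paper's: substitute the projections of $b\DDD^u+\GG(\psi)+c\FF^2$ and $a\DDD^t+c\FF^2$ into the two classifying determining equations and split with respect to~$u_x$ (using $g_{u_xu_x}\ne0$ when $f_{u_x}=0$) to force all parameters to vanish. Your explicit handling of the residual $\langle\GG(1)\rangle$ component is in fact slightly more careful than the paper's formulation, which tacitly identifies elements of~$\hat{\mathfrak g}^\cap$ with zero in the stated intersection.
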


\begin{proof}
Suppose that an appropriate subalgebra $\mathfrak s$ of~$\mathfrak g^\sim$ contains an operator~$Q=b\DDD^u+\GG(\psi)+c\FF^2$, where at least one of the constants~$b$ and~$c$ or the derivative~$\psi_x$ of the function~$\psi=\psi(x)$ does not vanish. Then the operator~$\mathrm P Q$ is a Lie symmetry operator for an equation from the class~\eqref{eq:IbragimovClass}. Substituting the coefficients of operator~$Q$ into the determining equations~\eqref{eq:DeterminingEqsSimplifiedIbragimovClass} implies the following conditions for the arbitrary elements~$f$ and~$g$:
\[
(bu_x+\psi_x)f_{u_x}=0,\quad (bu_x+\psi_x)g_{u_x} = 2c-\psi_{xx}f+bg.
\]
For both the cases $b\ne0$ and $\psi_x\ne0$ it follows that $f_{u_x}=0$ and $g_{u_xu_x}=0$, which contradicts the definition of class~\eqref{eq:IbragimovClass}. The case $b=0$, $\psi_x=0$ and $c\ne0$ leads to a contradiction.
Therefore, any appropriate subalgebra does not contain an operator of the form considered.

Analogously, an operator~$\DDD^t+c\FF^2$, where~$c$ is an arbitrary constant, gives the condition $f=0$, which is also inconsistent with the definition of the class~\eqref{eq:IbragimovClass}.
\end{proof}

\begin{lemma}\label{lem:OnAppropriateSubalgebras2}
$\dim\big(\mathfrak s\cap\langle\DDD(\varphi),\GG(\psi),\FF^2\rangle\big)\leqslant2$ for any appropriate subalgebra~$\mathfrak s$.
\end{lemma}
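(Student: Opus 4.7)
The plan is to argue by contradiction: suppose the intersection contains three linearly independent operators $Q_k=a_k\FF^2+\DDD(\varphi_k)+\GG(\psi_k)$. Each projection $\mathrm P Q_k=\varphi_k\p_x+(a_kt^2+\psi_k)\p_u$ would be a Lie symmetry of some common equation $u_{tt}=fu_{xx}+g$ from class~\eqref{eq:IbragimovClass}, so substitution into the classifying part of~\eqref{eq:DeterminingEqsSimplifiedIbragimovClass} produces, for each $k$,
\begin{gather*}
(\mathrm I_k)\colon\ \varphi_kf_x+(\psi_{k,x}-\varphi_{k,x}u_x)f_{u_x}=2\varphi_{k,x}f,\\
(\mathrm{II}_k)\colon\ \varphi_kg_x+(\psi_{k,x}-\varphi_{k,x}u_x)g_{u_x}=(\varphi_{k,xx}u_x-\psi_{k,xx})f+2a_k.
\end{gather*}
The first step is to reuse the argument from Lemma~\ref{lem:OnAppropriateSubalgebras1}: any operator of the form $a\FF^2+\GG(\psi)$ in the 3-dimensional span satisfies $\psi_xf_{u_x}=0$ by $(\mathrm I)$, and the class condition $(f_{u_x},g_{u_xu_x})\ne(0,0)$ combined with the $u_x$-derivative of $(\mathrm{II})$ forces $\psi_x=0$; then $(\mathrm{II})$ yields $a=0$, so such an operator is a scalar multiple of $\GG(1)$. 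After choosing a suitable basis, I may then assume $\varphi_1,\varphi_2,\varphi_3$ are linearly independent as functions of~$x$.

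The next step is the key elimination. Removing $f_x$ between $(\mathrm I_i)$ and $(\mathrm I_j)$ produces $(A_{ij}+W_{ij}u_x)f_{u_x}+2W_{ij}f=0$ with $W_{ij}=\varphi_i\varphi_{j,x}-\varphi_{i,x}\varphi_j$ and $A_{ij}=\varphi_j\psi_{i,x}-\varphi_i\psi_{j,x}$; on a neighborhood where $W_{ij}\ne0$ this integrates to $f=K_{ij}(x)/(A_{ij}+W_{ij}u_x)^2$, and demanding that the three pair-representations coincide forces the unified form $f=\tilde K(x)/(u_x+\alpha(x))^2$ with a single function $\alpha=A_{ij}/W_{ij}$. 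Resubstituting into $(\mathrm I_k)$ pins down $\psi_k=-\varphi_k\alpha+c_k$ with constants~$c_k$, so in the shifted variable $y=u_x+\alpha(x)$ and with $G(x,y)=g(x,y-\alpha(x))$ each $(\mathrm{II}_k)$ becomes
\[\varphi_kG_x-\varphi_{k,x}yG_y=\tilde K\frac{\varphi_{k,xx}}{y}+\tilde K\frac{2\varphi_{k,x}\alpha_x+\varphi_k\alpha_{xx}}{y^2}+2a_k.\]

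The hardest part, and the decisive step, is the Wronskian contradiction extracted from the three $(\mathrm{II}_k)$. Splitting the above in Laurent powers of $y$, three linearly independent $\varphi_k$ force $G=h_{-2}(x)/y^2+h_{-1}(x)/y+h_0(x)$ with no positive powers (any $h_j$ with $j\ge 1$ would yield $h_{j,x}/h_j=j\varphi_{k,x}/\varphi_k$ simultaneously for $k=1,2,3$, forcing the $\varphi_k$ to be proportional); the $y^0$-balance $\varphi_kh_{0,x}=2a_k$ similarly forces $a_k=0$ and $h_0$ constant; and the $1/y$-balance integrates to $\varphi_kh_{-1}=\tilde K\varphi_{k,x}+C_k$ for constants~$C_k$. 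Differentiating in $x$ produces the overdetermined linear system $\varphi_{k,x}h_{-1}+\varphi_kh_{-1,x}=\tilde K\varphi_{k,xx}$ ($k=1,2,3$) for the two unknowns $(h_{-1},h_{-1,x})$; its consistency requires the vanishing of the $3\times3$ augmented determinant, which equals $-\tilde K\,W(\varphi_1,\varphi_2,\varphi_3)$ with $W$ the Wronskian. Since $\tilde K\ne0$ (as $f\ne0$) and the Wronskian of three functions that are linearly independent in a neighborhood is nonzero on an open subset (in the analytic or generic smooth setting adopted throughout the paper), we reach the required contradiction and conclude $\dim\le 2$.
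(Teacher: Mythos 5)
Your route is genuinely different from the paper's. The paper assumes only \emph{two} operators $Q^i=\DDD(\varphi^i)+\GG(\psi^i)+c_i\FF^2$ with linearly independent $\varphi^i$, uses equivalence transformations to normalize the arbitrary elements down to the canonical form $u_{tt}=\pm u_x^{-2}u_{xx}+\mu^1\ln|u_x|$, and then reads off from the determining equations of that specific equation ($\xi_{xx}=0$, $\eta_x=0$, etc.) that no third independent operator of this type can occur. You instead start from three independent $\varphi_k$ and drive the three copies of the classifying equations into an overdetermined linear system for $(h_{-1},h_{-1,x})$ whose consistency determinant is $\pm\tilde K\,W(\varphi_1,\varphi_2,\varphi_3)$; this is a clean, purely computational contradiction that dispenses with the normalization and the explicit construction of the invariant equation. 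Your reduction of $f$ to $\tilde K(x)(u_x+\alpha(x))^{-2}$, the identity $\psi_k=-\varphi_k\alpha+c_k$, the transformed equation for $G$, and the final $y^{-1}$-balance and determinant computation all check out; the caveat you flag about Wronskians of linearly independent smooth functions is at the same level of (local, generic) rigor that the paper itself adopts when it divides by $W$.

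There is, however, a genuine gap in your determination of the form of $G$. You postulate a Laurent expansion $G=\sum_j h_j(x)y^j$ and discard powers term by term, but $G$ is an unknown smooth function, not a priori a Laurent series. The legitimate local argument --- eliminate $G_x$ between two of the three equations to obtain $W_{ij}\,yG_y=(\text{Laurent polynomial in }y\text{ of degrees }-2,-1,0)$ and integrate in $y$ --- produces an additional term $\mu(x)\ln|y|$ coming from the $y^{-1}$ part of $G_y$. This logarithm is not an artifact: it is precisely the $\mu^1\ln|u_x|$ that survives the paper's own two-operator analysis, and in the first case of Corollary~\ref{cor:OnAppropriateSubalgebras1} it coexists with a nonzero $\FF^2$-coefficient; hence your $y^0$-balance conclusion ``$a_k=0$ and $h_0$ constant'' is false as stated. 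Your proof is rescued only because the $\ln|y|$ term feeds exclusively the $y^0$- and $\ln|y|$-orders of the split, so the $y^{-1}$-balance $\varphi_k h_{-1,x}+\varphi_{k,x}h_{-1}=\tilde K\varphi_{k,xx}$ --- the only relation your contradiction actually uses --- is untouched. You should replace the Laurent ansatz by the elimination-and-integration step (which yields $G=h_{-2}y^{-2}+h_{-1}y^{-1}+\mu(x)\ln|y|+h_0$ and justifies splitting with respect to the linearly independent functions $y^{-2}$, $y^{-1}$, $\ln|y|$, $1$) and drop the claim $a_k=0$, which is both wrong and unnecessary.

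A smaller point: passing from ``three linearly independent operators in the intersection'' to ``three linearly independent $\varphi_k$'' is not automatic, since $\GG(1)\in\hat{\mathfrak g}^\cap$ always lies in the intersection and has $\varphi=0$; your Lemma~\ref{lem:OnAppropriateSubalgebras1} step only shows the kernel of $Q\mapsto\varphi$ on the intersection is $\langle\GG(1)\rangle$, so a three-dimensional intersection guarantees just two independent $\varphi$'s. What you actually prove is that the $\varphi$-components span at most a two-dimensional space, which is the form of the statement used in the subsequent classification (the paper's own literal bound $\dim\leqslant2$ is off by the $\GG(1)$ direction for the same reason).
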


\begin{proof}

Suppose that~$\mathfrak s$ is an appropriate subalgebra of $G^\sim$
and $\dim\big(\mathfrak s\cap\langle\DDD(\varphi),\GG(\psi),\FF^2\rangle\big)\geqslant2$.
This means that the subalgebra~$\mathfrak s$ contains (at least) two operators~$Q^i=\DDD(\varphi^i)+\GG(\psi^i)+c_i\FF^2$, where the functions~$\varphi^i$, $i=1,2$, should be linearly independent in view of Lemma~\ref{lem:OnAppropriateSubalgebras1}. In other words, the projections $\mathrm P Q^i$ of $Q^i$ simultaneously are Lie symmetry operators of an equation from the class~\eqref{eq:IbragimovClass}. By~$W$ we denote the Wronskian of the functions $\varphi^1$ and $\varphi^2$, $W=\varphi^1\varphi^2_x-\varphi^2\varphi^1_x$. $W\ne0$ as the functions~$\varphi^1$ and $\varphi^2$ are linearly independent.

Plugging the coefficients of~$\mathrm P Q^i$ into the first classifying equation from the system~\eqref{eq:DeterminingEqsSimplifiedIbragimovClass} gives two equations with respect to~$f$ only,
\begin{equation}\label{eq:DetermingEquationsSimplifiedTwice1}
 (\varphi^i_x u_x-\psi^i_x)f_{u_x}-\varphi^if_x + 2\varphi^i_xf = 0.
\end{equation}
We multiply the equation corresponding to $i=2$ by $\varphi^1$ and subtract it from the equation for $i=1$ multiplied by $\varphi^2$. Dividing the resulting equation by~$W$, we obtain the ordinary differential equation
\[
(u_x+\beta)f_{u_x}+2f = 0,
\]
where $\beta=\beta(x):=(\varphi^2\psi^1_x-\varphi^1\psi^2_x)/W$ and the variable~$x$ plays the role of a parameter. It is possible to set $\beta=0$ by means of an equivalence transformation, $\mathscr G(-\beta)$. Indeed, this transformation preserves the form of the operators~$Q^i$, only changing the values of the functional parameters $\psi^i$. In particular, it does not affect the linear independency of the functions $\varphi^i$. The integration of the above equation for $\beta=0$ yields that $f=\alpha u_x^{-2}$, where $\alpha=\alpha(x)$ is a nonvanishing function of~$x$. In view of the derived form of~$f$, splitting of equations~\eqref{eq:DetermingEquationsSimplifiedTwice1} with respect to~$u_x$ leads to $\varphi^1\alpha_x=0$ and $\varphi^i_x=0$, i.e.\ $\psi^i_x=0$ and $\alpha_x=0$. As $\GG(1)\in\mathfrak s$, we can assume up to linear combining of elements of $\mathfrak s$ that $\psi^i=0$. The constant $\alpha$ can be scaled to $\alpha=\pm1$ by an equivalence transformation.

In a similar manner, consider the last equation from system~\eqref{eq:DeterminingEqsSimplifiedIbragimovClass}, taking into account the restrictions set on parameter-functions and the form of~$f$. For each~$Q^i$, this classifying equation gives an equation with respect to~$g$,
\begin{equation}\label{eq:DetermingEquationsSimplifiedTwice2}
 \varphi^i_xu_xg_{u_x} - \varphi^i g_x= -\varphi^i_{xx}\alpha u_x^{-1} - 2c_i.
\end{equation}
Again, we multiply the equation corresponding to $i=2$ with $\varphi^1$ and subtract it from the equation for $i=1$ multiplied by $\varphi^2$, divide the resulting equation by~$W$ and thereby obtain that
$
 g_{u_x} = \mu^2 u_x^{-2} + \mu^1 u_x^{-1},
$
where $\mu^2=\mu^2(x):=\alpha W_x/W$ and $\mu^1=\mu^1(x):=2(c_1\varphi^2-c_2\varphi^1)/W$. Integration with respect to~$u_x$ directly gives $g=\mu^2 u_x^{-1} + \mu^1\ln|u_x|+\mu^0$, where $\mu^0=\mu^0(x)$ is a smooth function of~$x$. The parameter-function~$\mu^2$ can be set equal to zero by the equivalence transformation~$\mathscr D(\zeta)$, where the function $\zeta=\zeta(x)$ is a solution of the equation $\alpha\zeta_{xx} +\mu^2\zeta_x=0$. Substituting the derived form of~$g$ into equations~\eqref{eq:DetermingEquationsSimplifiedTwice2} and splitting with respect to~$u_x$, we find that~$\mu^1_x=0$, $\varphi^i_{xx}=0$, $\varphi^i\mu^0_x=\varphi^i_x\mu^1+2c_i$. Therefore, $\mu^1$ is a constant and the functions~$\varphi^1$ and~$\varphi^2$ can be set to $1$ and $x$, respectively, upon linear combining of~$Q^i$. Then, we have $\mu^0_x=2c_1$, $x\mu^0_x=2c_2+\mu^1$, i.e.\ $c_1=0$, $c_2=-1/2$ and $\mu^0$ is a constant that can be set to zero by the equivalence transformation~$\mathscr F(-\mu^0/2)$.

Summing up, we have proved that any equation of class~\eqref{eq:IbragimovClass} admitting (at least) two operators~$\mathrm P Q^i$ is $G^\sim$-equivalent to an equation of the form
\[
 u_{tt} = \pm u_x^{-2}u_{xx} + \mu^1\ln|u_x|,
\]
where $\mu^1=\const$. However, the determining equations~\eqref{eq:DeterminingEqsSimplifiedIbragimovClass} in this case yield~$\eta_x=0$, $\eta_u=\tau_t$, $\xi_{xx}=0$, $\mu^1\eta_u=0$, $\eta_{tt}=\mu^1(\tau_t-\xi_x)$. This obviously implies that the number of such operators~$Q^i$ cannot exceed two.
\end{proof}

\begin{corollary}\label{cor:OnAppropriateSubalgebras1}
There are two $G^\sim$-inequivalent cases of Lie symmetry extensions in class~\eqref{eq:IbragimovClass} involving two linearly independent operators of the form~$\mathrm P Q^i$, where $Q^i=\DDD(\varphi^i)+\GG(\psi^i)+c_i\FF^2$,
\begin{gather*}\hspace*{-\arraycolsep}
\begin{array}{lll}
1.& u_{tt} = \pm u_x^{-2}u_{xx} + 2\ln|u_x|\colon & \mathfrak g^{\max} = \mathfrak g^\cap + \langle\mathrm P \DDD(1),\mathrm P \DDD(x)-\mathrm P \FF^2\rangle, \\
2.& u_{tt} = \pm u_x^{-2}u_{xx}\colon & \mathfrak g^{\max} = \mathfrak g^\cap + \langle\mathrm P \DDD(1),\mathrm P \DDD(x),\mathrm P \DDD^t+\mathrm P \DDD^u\rangle.
\end{array}
\end{gather*}
\end{corollary}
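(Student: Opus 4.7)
The proof of Lemma~\ref{lem:OnAppropriateSubalgebras2} has already established that any equation from class~\eqref{eq:IbragimovClass} admitting two linearly independent operators $\mathrm P Q^i$ of the indicated form is $G^\sim$-equivalent to
\[
u_{tt}=\pm u_x^{-2}u_{xx}+\mu^1\ln|u_x|,
\]
with $\mu^1$ constant, and that the two associated operators reduce to $Q^1=\DDD(1)$ and $Q^2=\DDD(x)-(\mu^1/2)\FF^2$. My plan is first to classify the admissible values of $\mu^1$ up to $G^\sim$-equivalence, and then, for each canonical representative, to solve the simplified determining equations~\eqref{eq:DeterminingEqsSimplifiedIbragimovClass} in order to verify the claimed maximal Lie invariance algebra and rule out further extensions.

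To reduce $\mu^1$, I would combine the scalings $\mathscr D^t(c_1)$ and $\mathscr D^u(c_2)$ with a reflection $\mathscr D(\varphi)$, $\varphi_x=\pm1$: this composition preserves the form $f=\pm u_x^{-2}$ exactly when $c_2^2=c_1^2$, and under that constraint rescales $\mu^1$ by a nonzero factor, introducing at worst a constant summand in $g$ that can be reabsorbed by $\mathscr F^2(c_4)$ back into the normalization $\mu^0=0$. Hence any nonzero $\mu^1$ can be normalized to $2$, which by the relation $c_2=-\mu^1/2$ from the Lemma's proof turns the second operator into $\DDD(x)-\FF^2$; the value $\mu^1=0$ is plainly invariant and $G^\sim$-inequivalent to the nonzero case, since one gives $g\equiv0$ and the other $g\not\equiv0$.

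Next, for each of the two canonical forms I would substitute $f$ and $g$ into~\eqref{eq:DeterminingEqsSimplifiedIbragimovClass} and integrate. In the case $\mu^1=2$, the first classifying equation yields $\xi_{xx}=0$ and $\xi_x=\tau_t$, while splitting the second classifying equation with respect to the independent expressions $u_x$ and $\ln|u_x|$ forces $\eta_u=0$, $\eta_x=0$ and $\eta_{tt}=0$; the general solution then consists of the kernel together with exactly the two listed extensions. In the case $\mu^1=0$, the absence of the $\ln|u_x|$-term removes the constraint $\mu^1\eta_u=0$ and allows $\eta_u=\tau_t$ to be an arbitrary constant, which produces the additional generator $2t\p_t+u\p_u=\mathrm P(\DDD^t+\DDD^u)$ on top of $\mathrm P\DDD(1)$ and $\mathrm P\DDD(x)$.

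The main technical obstacle is the careful term-by-term splitting of the second classifying equation in the case $\mu^1=2$ with respect to the monomials in $u_x$, $u_x^{-1}$ and $\ln|u_x|$, in order to exclude any extra Lie symmetry operator beyond those two produced inside the proof of Lemma~\ref{lem:OnAppropriateSubalgebras2}. Once this is done, the $G^\sim$-inequivalence of the two cases is immediate from the different dimensions of the Lie symmetry extensions, two versus three, which is a $G^\sim$-invariant.
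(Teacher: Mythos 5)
Your overall route is the same as the paper's: start from the canonical form $u_{tt}=\pm u_x^{-2}u_{xx}+\mu^1\ln|u_x|$ reached at the end of the proof of Lemma~\ref{lem:OnAppropriateSubalgebras2}, normalize $\mu^1$ to $2$ or $0$ by equivalence transformations, and read off the maximal Lie invariance algebra from the residual determining equations; the normalization of $\mu^1$ and the inequivalence argument via the dimension of the extension are fine. But the central computational step is carried out incorrectly, in a way that contradicts the statement being proved. Substituting $f=\pm u_x^{-2}$ into the classifying equation for~$f$ in~\eqref{eq:DeterminingEqsSimplifiedIbragimovClass} and splitting with respect to powers of $u_x$ gives $\eta_x=0$ and $\eta_u=\tau_t$ (not $\xi_{xx}=0$ and $\xi_x=\tau_t$, which you attribute to it); the classifying equation for~$g$ then splits into $\mu^1\eta_u=0$, $\xi_{xx}=0$, $\eta_{xx}=0$ and $\eta_{tt}=\mu^1(\tau_t-\xi_x)$, exactly the relations already recorded at the end of the proof of Lemma~\ref{lem:OnAppropriateSubalgebras2}.

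For $\mu^1=2$ these give $\eta_u=\tau_t=0$ and $\eta_{tt}=-2\xi_x$, \emph{not} $\eta_{tt}=0$. The generator $\mathrm P\DDD(x)-\mathrm P\FF^2=x\p_x-t^2\p_u$ has precisely $\eta_{tt}=-2=-2\xi_x$; under your constraint $\eta_{tt}=0$ the extension would collapse to $\langle\p_x\rangle$, so the equation would not even admit two linearly independent operators of the form $\mathrm P Q^i$, and Case~1 of the corollary would not be recovered. Likewise, in the case $\mu^1=0$ the extra generator produced by $\eta_u=\tau_t\ne0$ is $t\p_t+u\p_u=\mathrm P(\DDD^t+\DDD^u)$, not $2t\p_t+u\p_u$; the latter violates the balance $(\eta_u-\xi_x)u_xf_{u_x}=2(\xi_x-\tau_t)f$ for $f=\pm u_x^{-2}$, which forces $\eta_u=\tau_t$ when $\xi_x=0$. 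These are not cosmetic slips: carried through, your splitting yields the wrong invariance algebras in both cases, so the computation needs to be redone with the correct coefficient extraction before the corollary follows.
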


\begin{proof}
For $\mu^1\ne0$, we have that $\eta_u=\tau_t=0$, $\xi_{xx}=0$ and, after scaling of $\mu^1$ to two by an equivalence transformation, $\eta_{tt}=-2\xi_x$. This directly gives the first case. If $\mu^1=0$, we obviously recover the second case.
\end{proof}

Now that we have computed the essential adjoint actions and classified all appropriate subalgebras in Corollary~\ref{cor:OnAppropriateSubalgebras1} for which $\dim\big(\mathfrak s\cap\langle\DDD(\varphi),\GG(\psi),\FF^2\rangle\big)=2$, we should go on with the computation of inequivalent appropriate extensions of $\hat{\mathfrak g}^\cap$, 
which contain at most one linearly independent operator of the form $\DDD(\varphi)+\GG(\psi)+c\FF^2$, 
where $\varphi=\varphi(x)$ is a nonvanishing function. 
In view of Lemma~\ref{lem:OnAppropriateSubalgebras1} it is obvious that the dimension of such extensions cannot be greater than three. 
Here we select candidates for such extensions using only restrictions on appropriate subalgebras presented in Lemmas~\ref{lem:OnAppropriateSubalgebras1} and~\ref{lem:OnAppropriateSubalgebras2}. 
As there exist specific restrictions for two- and three-dimensional extensions, 
we will make an additional selection of appropriate extensions from the set of candidates directly in the course of the construction of invariant equations.

The result of the classification is formulated in the subsequent lemmas.

\begin{lemma}\label{lem:1DimInequivExtsForIbragimovClass}
A complete list of $G^\sim$-inequivalent appropriate one-dimensional extensions of~$\hat{\mathfrak g}^\cap$ in~$\mathfrak g^\sim$ is given by
\begin{align}\label{eq:OneDimensionalSubalgebrasIbragimovClass}
\begin{split}
 &\langle\DDD^u+\tfrac12\DDD^t+\DDD(\ve)+\FF^2\rangle, \quad \langle\DDD^u-p\DDD^t+\DDD(\ve)\rangle, \quad \langle\DDD^t-\DDD(1)\rangle, \\
 &\langle\DDD^t-\GG(x)\rangle, \quad \langle\DDD(1)+\ve\FF^2\rangle,
\end{split}
\end{align}
where $\ve\in\{0,1\}$ and $a$ is an arbitrary constant.
\end{lemma}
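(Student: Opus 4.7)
The plan is to represent any one-dimensional extension of $\hat{\mathfrak g}^\cap$ in $\mathfrak g^\sim$ by a single generator $Q$ lying in the complement subalgebra $\bar{\mathfrak g}=\langle\DDD^u,\DDD^t,\DDD(\varphi),\GG(\psi),\FF^2\rangle$ (cf.\ Remark~\ref{rem:OnStructureOfequivAlgebra}) and to classify such $Q$ up to rescaling and the adjoint action of $G^\sim_0$. The general form to simplify is $Q=a\DDD^u+b\DDD^t+c\FF^2+\DDD(\varphi)+\GG(\psi)$ with constants $a,b,c$ and smooth functions $\varphi,\psi$ of~$x$; since $\GG(1)\in\hat{\mathfrak g}^\cap$, the additive constant of~$\psi$ is immaterial.

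The reduction of~$Q$ would proceed in four successive steps using the push-forwards already tabulated. First, whenever $\varphi\not\equiv 0$, I use $\mathscr D(\theta)_*\DDD(\varphi)=\DDD(\varphi(\hat\theta)/\hat\theta_x)$ and solve the ODE $\hat\theta_x=\varphi(\hat\theta)$ locally to normalize $\varphi$ to the constant~$1$. Second, $\mathscr G(\chi)_*\DDD^u=\DDD^u-\GG(\chi)$ and $\mathscr G(\chi)_*\DDD(1)=\DDD(1)+\GG(\chi_x)$, while $\mathscr G(\chi)$ fixes $\DDD^t$ and $\FF^2$; so acting by $\mathscr G(\chi)$ replaces $\GG(\psi)$ by $\GG(\psi+\delta\chi_x-a\chi)$, where $\delta=1$ if $\varphi$ has been normalized to~$1$ and $\delta=0$ otherwise. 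When $\delta=1$ or $a\ne0$, the linear first-order ODE $\delta\chi_x-a\chi=-\psi$ is solvable and removes $\GG(\psi)$. Third, $\mathscr F^2_*(c_4)\DDD^t=\DDD^t+2c_4\FF^2$ combined with $\mathscr F^2_*(c_4)\DDD^u=\DDD^u-c_4\FF^2$ shifts~$c$ by $(2b-a)c_4$; so $c$ is killed whenever $2b\ne a$. Fourth, residual scalings of $\FF^2$ and $\DDD(1)$ are handled by $\mathscr D^t$ and $\mathscr D^u$ together with the discrete equivalences $u\mapsto-u,g\mapsto-g$ (sending $\FF^2\mapsto-\FF^2$) and $x\mapsto-x$ (sending $\DDD(1)\mapsto-\DDD(1)$), yielding $\ve\in\{0,1\}$.

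A case split on the Boolean indicators $\varphi\equiv 0$, $a=0$, $b=0$ together with the resonance condition $2b=a$ then produces the five families. The generic case $2b\ne a$ kills~$\FF^2$ and, after rescaling, gives Case~2 (both $a,b\ne0$), Case~3 ($a=0,b\ne 0,\varphi=1$) or Case~5 with $\ve=0$ ($a=b=0,\varphi=1$). The resonant case $2b=a\ne 0$ preserves a $\FF^2$-component that rescales to~$1$ and yields Case~1. The residual case $\varphi\equiv 0,a=0,b\ne0$ is the delicate one: $\GG(\psi)$ cannot be eliminated because neither $\mathscr G$, $\mathscr F^2$, $\mathscr D^u$ nor $\mathscr D^t$ affects it in the required way, so I instead use $\mathscr D(\theta)_*\GG(\psi)=\GG(\psi\circ\hat\theta)$ together with a $\mathscr D^u$-rescaling to normalize the nonconstant function~$\psi$ to $x$ locally (using $\psi_x\ne 0$), producing Case~4. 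The subcase $\psi_x\equiv 0$ collapses to $Q=\DDD^t$, which is ruled out by Lemma~\ref{lem:OnAppropriateSubalgebras1}, as is the analogous subcase $\varphi\equiv 0,b=0,a\ne0$ after $\FF^2$-reduction, which would give $Q=\DDD^u\in\mathfrak s\cap\langle\DDD^u,\GG(\psi),\FF^2\rangle$. Case~5 with $\ve=1$ arises from $a=b=0$ with $\varphi=1$ and nonzero $c$.

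The main obstacle is less the case split itself than verifying pairwise inequivalence of the five families. Here I would argue via invariants: the listed push-forwards preserve the coefficients of $\DDD^u$ and $\DDD^t$ in~$Q$ modulo the rest of $\bar{\mathfrak g}$, because $[\DDD^u,\DDD^t]=[\DDD^u,\DDD(\varphi)]=[\DDD^t,\DDD(\varphi)]=[\DDD^u,\FF^2]_{\!\bmod\FF^2}=0$ etc., so that the ratio $a/b$ (equal to~$2$ in Case~1, equal to~$-1/p$ in Case~2) is an invariant of the $G^\sim_0$-orbit of the one-dimensional subspace $\langle Q\rangle$. Likewise, the presence or absence of the $\DDD(\varphi)$-, $\GG(\psi)$- and $\FF^2$-components modulo $\hat{\mathfrak g}^\cap$ separates the remaining families, and the discrete parameter $\ve\in\{0,1\}$ in Cases~1, 2 and~5 and the continuous parameter $p$ in Case~2 are essential. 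Care must be taken that the discrete equivalences used to collapse signs of $\FF^2$ and $\DDD(1)$ do not inadvertently identify $p$ with $-p$ or trivialize $\ve$; a direct check on the fixed canonical forms shows that they do not.
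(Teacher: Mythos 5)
Your proposal is correct and follows essentially the same route as the paper's proof: reduce a general element $a\DDD^u+b\DDD^t+\DDD(\varphi)+\GG(\psi)+c\FF^2$ by the tabulated push-forwards $\mathscr D_*(\theta)$, $\mathscr G_*(\chi)$, $\mathscr F^2_*(c_4)$ and the scalings, with the case split governed by which coefficients vanish and by the resonance $2b=a$ (the paper's $a_2=1/2$ after normalizing $a_1=1$), plus Lemma~\ref{lem:OnAppropriateSubalgebras1} to discard $\langle\DDD^u\rangle$ and $\langle\DDD^t\rangle$. The only quibble is the parenthetical ``both $a,b\ne0$'' for Case~2, which should also admit $b=0$ with $\varphi\not\equiv0$ (giving $p=0$, $\ve=1$); your reduction machinery already covers this subcase, and your added invariant-based inequivalence check is a welcome supplement the paper leaves implicit.
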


\begin{proof}
The classification of the appropriate one-dimensional extensions can be carried out effectively by simplifying a general element of the linear span $\langle\DDD^u,\DDD^t,\DDD(\varphi),\GG(\psi),\FF^2\rangle$,
\[
 Q = a_1\DDD^u+a_2\DDD^t+\DDD(\varphi)+\GG(\psi)+a_4\FF^2,
\]
using push-forwards of transformations from~$G^\sim$. For this aim, it is necessary to distinguish multiple cases, subject to which of the constants $a_i$ or functions $\varphi$, $\psi$ are nonzero. We note in the beginning that owing to the push-forward $\mathscr D_*(\theta)$ we can always set $\varphi=a_3=\const$.

For $a_1\ne0$ we can scale the vector field $Q$ to achieve $a_1=1$. Using the push-forwards of a suitable transformation $\mathscr G(\chi)$, we can set $\psi=0$. The further possibilities for simplification depend crucially on the value of $a_2$. For $a_2=1/2$, the sum $\DDD^u+a_2\DDD^t$ is invariant under the push-forward~$\mathscr F^2_*(c_4)$ and therefore it is not possible to set $a_4=0$. The actions of the push-forwards of the transformations $\mathscr D(x)$ and $\mathscr D^u$ allow scaling of $a_3$ and $a_4$ to $\{0,1\}$.  If $a_4=1$, by denoting $a_3=\ve$ we obtain the first case from the list~\eqref{eq:OneDimensionalSubalgebrasIbragimovClass}.

For $a_2\ne1/2$ we can use the push-forward~$\mathscr F^2_*(c_4)$ to additionally set $a_4=0$, which gives, jointly with the case $a_2=1/2$ and $a_4=0$, the second extension listed, where $a_2$ is denoted by~$-p$.

If $a_1=0$ and $a_2\ne0$, we scale $a_2=1$ and can use the push-forward~$\mathscr F^2_*(c_4)$ to set $a_4=0$. For $a_3\ne0$, we can scale $a_3=-1$ by means of the action of $\mathscr D^u_*(c_2)$ and additionally put~$\psi=0$ upon using the push-forward of the transformation~$\mathscr G(\chi)$. If $a_3=0$, we have $\psi_x\ne0$ in view of Lemma~\ref{lem:OnAppropriateSubalgebras1} and hence we can use the action of $\mathscr D_*(\theta)$ to set $\psi=-x$. This gives the third and the fourth case of the list~\eqref{eq:OneDimensionalSubalgebrasIbragimovClass}, respectively.

In case of $a_1=a_2=0$ but $a_3\ne0$, we can set $a_3=1$ and use the push-forward~$\mathscr G_*(\psi)$ for a certain $\psi$ to arrive at $\psi=0$. The action of $\mathscr D^u_*(c_2)$ on the resulting vector field allows us to scale the coefficient $a_4$ so that we have $a_4\in\{0,1\}$, which yields the fifth element of the above list of one-dimensional inequivalent subalgebras.

In view of Lemma~\ref{lem:OnAppropriateSubalgebras1}, the case $a_1=a_2=a_3=0$ is not appropriate.
\end{proof}

\begin{lemma}\label{lem:2DimInequivExtsForIbragimovClass}
Up to $G^\sim$-equivalence, any appropriate two-dimensional extension of $\hat{\mathfrak g}^\cap$ in~$\mathfrak g^\sim$, 
which contains at most one linearly independent operator of the form $\DDD(\varphi)+\GG(\psi)+c\FF^2$, belongs to the following list: 
\begin{align}\label{eq:TwoDimensionalSubalgebrasIbragimovClass}
\begin{split}
    &\langle\DDD^u+\DDD(1),\,\DDD^t+\DDD(b)\rangle,\quad \langle\DDD^u+\DDD(1),\,\DDD^t+\GG(e^x)\rangle, \\
    &\langle a_1\DDD^u+a_2\DDD^t+a_3\DDD(x)+\ve_0\GG(x)+\ve_1\FF^2,\,\DDD(1)+\ve_2\FF^2\rangle,
\end{split}
\end{align}
where $b$, $a_1$, $a_2$, $a_3$, $\ve_0$, $\ve_1$ and $\ve_2$ are constants with 
$b\ne0$, $(a_1,a_2)\ne(0,0)$, $(a_2,a_3)\ne(0,0)$, $(a_1,a_3,\ve_0)\ne(0,0,0)$ and $(a_1-2a_2-a_3)\ve_2=0$.
Due to scaling of the first basis element and $G^\sim$-equivalence we can also assume that 
one of $a$'s equals 1, $(a_1-a_3)\ve_0=0$, $(a_1-2a_2)\ve_1=0$, $\ve_0,\ve_1\in\{0,1\}$ 
and $\ve_2\in\{-1,0,1\}$ or, if $\ve_0=0$, $\ve_2\in\{0,1\}$.
\end{lemma}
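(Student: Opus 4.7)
The plan is to classify up to $G^\sim$-equivalence two-dimensional extensions of $\hat{\mathfrak g}^\cap=\langle\PP^t,\FF^1,\GG(1)\rangle$ inside $\mathfrak g^\sim$, working modulo $\hat{\mathfrak g}^\cap$ with two basis representatives $Q_1,Q_2$ drawn from the complement $\langle\DDD^u,\DDD^t,\DDD(\varphi),\GG(\psi),\FF^2\rangle$. Writing $Q_i=a_{i,1}\DDD^u+a_{i,2}\DDD^t+\DDD(\varphi_i)+\GG(\psi_i)+c_i\FF^2$, the hypothesis that at most one basis element has the pure form $\DDD(\varphi)+\GG(\psi)+c\FF^2$ means the image of the map $Q_i\mapsto(a_{i,1},a_{i,2})$ has rank two or one; I call these Case~A and Case~B respectively.

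In Case~A I take $(a_{1,1},a_{1,2})=(1,0)$ and $(a_{2,1},a_{2,2})=(0,1)$. Lemma~\ref{lem:OnAppropriateSubalgebras1} forces $\varphi_1\not\equiv0$, so after $\mathscr D_*(\theta)$ I normalize $\varphi_1=1$, and using $\mathscr G_*$ and $\mathscr F^2_*$ I kill $\GG(\psi_1)$ and $c_1\FF^2$, reaching $Q_1=\DDD^u+\DDD(1)$. Computing $[Q_1,Q_2]$ from the commutation table of $\mathfrak g^\sim$ and imposing closure $[Q_1,Q_2]\in\langle Q_1,Q_2\rangle+\hat{\mathfrak g}^\cap$ forces $\varphi_{2,x}=0$, $c_2=0$ and $\psi_{2,x}-\psi_2=\text{const}$, so $Q_2=\DDD^t+\DDD(b)+\GG(Ce^x)$ modulo kernel for constants $b,C$. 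The residual transformation $\mathscr G_*(De^x)$ preserves $Q_1$ while mapping $C\mapsto C+bD$: when $b\ne0$ I set $C=0$ (giving the first list entry), while for $b=0$ Lemma~\ref{lem:OnAppropriateSubalgebras1} forbids $C=0$, and I use $\mathscr D^u_*$ to scale $C=1$ (giving the second entry).

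In Case~B, by an analogous reduction I bring the pure-form basis element to $Q_2=\DDD(1)+\ve_2\FF^2$ and write $Q_1=a_1\DDD^u+a_2\DDD^t+\DDD(\varphi_1)+\GG(\psi_1)+c_1\FF^2$ with $(a_1,a_2)\ne(0,0)$. Closure forces $\varphi_{1,x}$ and $\psi_{1,x}$ to be constants (so $\varphi_1=a_3x+\gamma$, $\psi_1=\ve_0 x+e$) and additionally yields the scalar identity $(a_1-2a_2-a_3)\ve_2=0$ from the $\FF^2$ coefficient. Replacing $Q_1$ by $Q_1-\gamma Q_2$ absorbs $\gamma$ into a relabelled constant $\ve_1$, while $\GG(e)$ enters $\hat{\mathfrak g}^\cap$. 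The inequalities $(a_2,a_3)\ne(0,0)$ and $(a_1,a_3,\ve_0)\ne(0,0,0)$ follow from Lemma~\ref{lem:OnAppropriateSubalgebras1} applied to $Q_1$ in the respective degenerations (otherwise $Q_1$ would lie in $\langle\DDD^u,\GG(\psi),\FF^2\rangle$ or $\langle\DDD^t,\FF^2\rangle$).

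For the final normalization, four independent scalings---basis scaling of $Q_1$, $\mathscr D_*(\lambda x)$ together with rescaling of $Q_2$, $\mathscr D^t_*(\tau)$ and $\mathscr D^u_*(\mu)$---let me set one $a_i=1$ and scale the triple $(\ve_0,\ve_1,\ve_2)$, while the discrete reflections $u\mapsto -u$ and $x\mapsto -x$ control the remaining signs. The constraint $(a_1-2a_2)\ve_1=0$ arises from $\mathscr F^2_*(c_4)$, which perturbs $Q_1$ by $(2a_2-a_1)c_4\FF^2$ but changes $Q_2$ only by a kernel term. The less obvious constraint $(a_1-a_3)\ve_0=0$ arises from $\mathscr G_*(\chi)$ with $\chi=dx+e$: this perturbs $Q_2$ by $\GG(d)\in\hat{\mathfrak g}^\cap$ yet adds $d(a_3-a_1)\GG(x)$ to $Q_1$, so if $a_1\ne a_3$ then $\ve_0$ can be killed. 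I expect the main obstacle to be the careful bookkeeping of which residual push-forwards preserve the canonical form of $Q_2$ modulo $\hat{\mathfrak g}^\cap$ after each simplification of $Q_1$---in particular, correctly recognizing the non-obvious freedom $\mathscr G_*(dx+e)$ used to derive $(a_1-a_3)\ve_0=0$, and tracking how the residual discrete symmetry group yields the final ranges $\ve_0,\ve_1\in\{0,1\}$ and $\ve_2\in\{-1,0,1\}$, further restricted to $\ve_2\in\{0,1\}$ when $\ve_0=0$.
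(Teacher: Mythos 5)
Your proposal is correct and follows essentially the same route as the paper's proof: the same case split according to the rank of the $(\DDD^u,\DDD^t)$-component map, the same normalization of the first basis element, the same use of closure under the Lie bracket (whose image must fall into $\hat{\mathfrak g}^\cap$ plus the span of the pure-form element, yielding $\varphi_x,\psi_x=\const$ and $(a_1-2a_2-a_3)\ve_2=0$), the same appeal to Lemma~\ref{lem:OnAppropriateSubalgebras1} for the nondegeneracy conditions, and the same residual push-forwards $\mathscr G_*(De^x)$, $\mathscr F^2_*$, $\mathscr G_*(dx)$ and scalings for the final normalizations. The only difference is presentational: you flag the sign/scaling bookkeeping for $(\ve_0,\ve_1,\ve_2)$ as an obstacle, whereas the paper disposes of it with an equally brief appeal to push-forwards of scalings and reflections.
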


\begin{proof}
The general strategy is to take two arbitrary linearly independent elements~$Q^1$ and~$Q^2$
from the linear span $\langle\DDD^u,\DDD^t,\DDD(\varphi),\GG(\psi),\FF^2\rangle$ such that 
$\mathfrak s=\hat{\mathfrak g}^\cap+\langle Q^1,Q^2\rangle$ is a five-dimensional subalgebra of~$\mathfrak g^\sim$ 
satisfying the restriction on elements of the form $\DDD(\varphi)+\GG(\psi)+c\FF^2$ and Lemma~\ref{lem:OnAppropriateSubalgebras1} and simplify~$Q^1$ and~$Q^2$ as much as possible by linear combining of elements of~$\mathfrak s$ and 
push-forwards of transformations from~$G^\sim$. 
The proof is split into two parts. 

First, we consider possible extensions not involving operators of the form $\DDD(\varphi)+\GG(\psi)+c\FF^2$. 
In view of this additional restriction and Lemma~\ref{lem:OnAppropriateSubalgebras1}, 
up to linear combining we can take the elements~$Q^1$ and~$Q^2$ in the initial form
\[
Q^1=\DDD^u+\DDD(\varphi^1)+\GG(\psi^1)+c_1\FF^2, \quad Q^2=\DDD^t+\DDD(\varphi^2)+\GG(\psi^2)+c_2\FF^2,
\]
where $\varphi^1\ne0$. 
We set $\varphi^1=1$, $\psi^1=0$ and $c_1=0$ using $\mathscr D_*(\theta)$, $\mathscr G_*(\chi)$ with suitable functions $\theta$ and $\chi$ and $\mathscr F^2_*(c_1)$, respectively, i.e.\ $Q^1=\DDD^u+\DDD(1)$. 
As the subalgebra~$\mathfrak s$ is closed with respect to the Lie bracket of vector fields, we have 
$[Q^1,Q^2]=\DDD(\varphi^2_x)+\GG(\psi^2_x-\psi^2)-c_2\FF^2\in\langle\GG(1)\rangle$ and hence 
$\varphi^2_x=0$, $c_2=0$ and $\psi^2_x-\psi^2=\const$. 
Integrating the equations for~$\varphi^2$ and~$\psi^2$ gives that $\varphi^2=b$ and $\psi^2=d_1e^x+d_0$ 
for some constants~$b$, $d_1$ and~$d_0$. 
The constant~$d_0$ can be always set equal to zero by linear combining with the operator~$\GG(1)$ belonging to~$\hat{\mathfrak g}^\cap$. The further simplification of~$Q^2$ depends on the value of~$b$. 
If $b\ne0$, the push-forward of $\mathscr G(-d_1b^{-1}e^x)$ does not change~$Q^1$ and leads to~$d_1=0$. 
If $b=0$, the parameter~$d_1$ is nonzero in view of Lemma~\ref{lem:OnAppropriateSubalgebras1} and, therefore, 
can be scaled to~1 by $\mathscr D_*(d_1^{-1})$. 
As a result, we obtain the first two elements of the list~\eqref{eq:TwoDimensionalSubalgebrasIbragimovClass}.

Now we investigate the case $\dim\big(\mathfrak s\cap\langle\DDD(\varphi),\GG(\psi),\FF^2\rangle\big)=1$. 
Then basis operators of the extension of~$\hat{\mathfrak g}^\cap$ can be chosen in the form 
\[
Q^1=a_1\DDD^u+a_2\DDD^t+\DDD(\varphi^1)+\GG(\psi^1)+c_1\FF^2,\quad 
Q^2=\DDD(\varphi^2)+\GG(\psi^2)+c_2\FF^2,
\]
where $(a_1,a_2)\ne(0,0)$ and $\varphi^2\ne0$.
We set $\varphi^2=1$ and $\psi^2=0$ using $\mathscr D_*(\theta)$ and $\mathscr G_*(\chi)$ 
with suitably chosen functions $\theta$ and $\chi$, respectively. 
As $\mathfrak s$ is a Lie algebra, we have that $[Q^2,Q^1]=\DDD(\varphi^1_x)+\GG(\psi^1_x)+(a_1-2a_2)c_2\FF^2=a_3Q_2+\GG(d)$ 
for some constants~$a_3$ and~$d$. 
Therefore, $\varphi^1_x=a_3$, $(a_1-2a_2-a_3)c_2=0$ and $\psi^1_x=c_0$. 
Up to combing $Q^1$ with $Q^2$ and $\GG(1)$ we obtain that $\varphi^1=a_3x$ and $\psi^1=c_0x$. 
Up to $G^\sim$-equivalence we can assume that $(a_1-a_3)c_0=0$ and  $(a_1-2a_2)c_1=0$. 
Indeed, if $a_1-2a_2\ne0$, we can set $c_1=0$ using $\mathscr F^2_*(\tilde c_1)$ with an appropriate constant~$\tilde c_1$. 
To set $c_0=0$ in the case $a_1-a_3\ne0$, we simultaneously act by $\mathscr G_*(\tilde c_0x)$ 
with an appropriate constant~$\tilde c_0$ and linearly combine the operator~$Q^2$ with~$\DDD(1)$. 
Using push-forwards of scalings of variables and alternating their signs, 
we can independently scale the constant parameters $c_0$, $c_1$ and~$c_2$ 
and change sings of $c_1$ and, simultaneously, $c_0$ and~$c_2$. 
Additionally we can multiply the whole vector field~$Q^1$ by a nonvanishing constant to scale one of nonvanishing $a$'s to one. 
The conditions $(a_2,a_3)\ne(0,0)$ and $(a_1,a_3,\ve_0)\ne(0,0,0)$ follow from Lemma~\ref{lem:OnAppropriateSubalgebras1}. 
This yields the third case of the list~\eqref{eq:TwoDimensionalSubalgebrasIbragimovClass} 
and thereby completes the proof of the theorem.
\end{proof}

\begin{lemma}\label{lem:3DimInequivExtsForIbragimovClass}
Up to $G^\sim$-equivalence, any appropriate three-dimensional extension of $\hat{\mathfrak g}^\cap$ in~$\mathfrak g^\sim$, 
which contains at most one linearly independent operator of the form $\DDD(\varphi)+\GG(\psi)+c\FF^2$, has one of the following forms: 
\begin{equation}\label{eq:3Dextensions}
\begin{split}
&\langle\DDD^u+p_1\DDD(x),\,\DDD^t+p_2\DDD(x),\,\DDD(1)+\ve\FF^2\rangle,\\
&\langle\DDD^u+\DDD(x)+d\GG(x),\,\DDD^t-\GG(x),\,\DDD(1)\rangle,
\end{split}
\end{equation}
where $p_1$, $p_2$ and $d$ are constants, $\ve\in\{0,1\}$, $(p_1,p_2)\ne(1,0)$ and $\ve(p_1-1)=\ve(p_2+2)=0$.
\end{lemma}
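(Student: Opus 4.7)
The plan is to mimic the method used in the proof of Lemma~\ref{lem:2DimInequivExtsForIbragimovClass}: pick a convenient basis $Q^1,Q^2,Q^3$ for the three-dimensional complement of $\hat{\mathfrak g}^\cap$ in the appropriate subalgebra $\mathfrak s$, normalize it as far as possible by $G^\sim$-push-forwards, and use the closure of $\mathfrak s$ under the Lie bracket together with the restrictions stated in Lemmas~\ref{lem:OnAppropriateSubalgebras1} and~\ref{lem:OnAppropriateSubalgebras2} to cut down the remaining free parameters. By hypothesis, $\mathfrak s$ contains exactly one (up to $\hat{\mathfrak g}^\cap$) operator $Q^3$ of the form $\DDD(\varphi)+\GG(\psi)+c\FF^2$, and the $2{\times}2$ matrix of $(\DDD^u,\DDD^t)$-coefficients of the other two generators must have rank two (else a linear combination would give a second operator of the excluded form). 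Hence we may start with
\[
Q^1=\DDD^u+\DDD(\varphi^1)+\GG(\psi^1)+c_1\FF^2,\quad
Q^2=\DDD^t+\DDD(\varphi^2)+\GG(\psi^2)+c_2\FF^2,\quad
Q^3=\DDD(\varphi^3)+\GG(\psi^3)+c_3\FF^2,
\]
where $\varphi^3\ne0$ by Lemma~\ref{lem:OnAppropriateSubalgebras1}. As in the proof of Lemma~\ref{lem:2DimInequivExtsForIbragimovClass}, I would first apply push-forwards $\mathscr D(\theta)$ and $\mathscr G(\chi)$ with suitable $\theta$ and $\chi$ to set $\varphi^3=1$ and $\psi^3=0$, then rescale by $\mathscr D^u(c_2)$ to reduce $c_3$ to $\ve\in\{0,1\}$, so that $Q^3=\DDD(1)+\ve\FF^2$.

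Next I would impose $[Q^3,Q^1],[Q^3,Q^2]\in\mathfrak s$. Using the commutation table, one computes
\[
[Q^3,Q^1]=\DDD(\varphi^1_x)+\GG(\psi^1_x)+\ve\FF^2,\qquad
[Q^3,Q^2]=\DDD(\varphi^2_x)+\GG(\psi^2_x)-2\ve\FF^2,
\]
and each must lie in $\langle Q^3,\GG(1)\rangle$. This forces $\varphi^i_x$ and $\psi^i_x$ to be constant and yields the key selection conditions $\ve(\lambda_1-1)=0$ and $\ve(\lambda_2+2)=0$, where $\lambda_i=\varphi^i_x$; these already match the restrictions $\ve(p_1-1)=\ve(p_2+2)=0$ in the lemma. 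The constants of integration of $\varphi^i$ and the constant parts of $\psi^i$ can be absorbed by linear combining $Q^1,Q^2$ with $Q^3$ and $\GG(1)\in\hat{\mathfrak g}^\cap$, which modifies $c_1,c_2$ but leaves the shape $\varphi^i=p_i x$, $\psi^i=q_i x$ for some constants. At this stage each generator has the form $\DDD^u+p_1\DDD(x)+q_1\GG(x)+c_1\FF^2$ and analogously for $Q^2$.

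It remains to exploit the last bracket $[Q^1,Q^2]\in\mathfrak s$. Direct computation shows that its projection onto $\langle\DDD(\varphi),\GG(\psi),\FF^2\rangle$ must again be of the form $\lambda Q^3+\mu\GG(1)$; matching $\DDD$-, $\GG$- and $\FF^2$-coefficients gives algebraic relations tying together $p_i,q_i,c_i,\ve$. The residual equivalence transformations — the remaining $\mathscr G(\chi)$ (which on $Q^1$ shifts $\psi^1$ by $\chi-x\chi_x$, a relation that can absorb any linear piece of $q_1x$ except the obstruction coming from $[\DDD^u,\GG(x)]=-\GG(x)$), $\mathscr F^2$ and the sign/scaling reflections — are then used to normalize $q_1,q_2,c_1,c_2$. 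The only irreducible parameters that survive fall into two disjoint families depending on whether $q_1=q_2=0$ (leading to the first form in~\eqref{eq:3Dextensions}, in which the $\GG$-pieces are pure gauge and $\ve$ is not fully determined) or $q_2\ne0$ (the $\GG(x)$-obstruction cannot be eliminated, which after normalization $q_2=-1$, $p_1=1$ and $\ve=0$ forces $p_2=0$ and leaves the discrete modulus $d$ of the second form). The inequality $(p_1,p_2)\ne(1,0)$ in the first family is exactly the condition that removes the overlap with the second family.

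The main obstacle I expect is twofold. First, the bookkeeping for the three simultaneous Lie-bracket closure conditions together with the action of the substantial residual subgroup of $G^\sim$ is genuinely lengthy and must be done carefully to avoid missing either an admissible normal form or a coincidence between apparent branches. Second, after extracting candidate subalgebras one must verify appropriateness by substituting the projections $\mathrm{P}Q^i$ into the classifying determining equations~\eqref{eq:DeterminingEqsSimplifiedIbragimovClass} and checking that the resulting system on $(f,g)$ is consistent, nondegenerate in the class~\eqref{eq:IbragimovClass}, and realizes $\mathrm{P}\mathfrak s$ as the \emph{maximal} Lie invariance algebra — this is precisely where spurious branches are discarded and where the constraints $\ve(p_1-1)=\ve(p_2+2)=0$ and $(p_1,p_2)\ne(1,0)$ finally get their full justification.
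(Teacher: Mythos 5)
Your proposal follows essentially the same route as the paper's proof: the same choice of basis $Q^1,Q^2,Q^3$, the same normalization $Q^3=\DDD(1)+\ve\FF^2$, the commutators $[Q^3,Q^i]$ yielding $\varphi^i=p_ix$, $\psi^i=d_ix$ and $\ve(p_1-1)=\ve(p_2+2)=0$, and finally the bracket $[Q^1,Q^2]$ (which in the paper gives explicitly $c_2=0$ and $p_2d_1=(p_1-1)d_2$) producing the dichotomy $(p_1,p_2)\ne(1,0)$ versus $(p_1,p_2)=(1,0)$, i.e.\ exactly your split into the branch where the $\GG(x)$-terms are removable and the branch where $d_2\ne0$ is forced. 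The only minor slips are that $d$ in the second family is a continuous rather than discrete parameter, and that the constraints $\ve(p_1-1)=\ve(p_2+2)=0$ and $(p_1,p_2)\ne(1,0)$ are already fixed at the algebraic stage, while the deferred appropriateness check via the determining equations (carried out in the paper when constructing the invariant equations) supplies the \emph{additional} constraints $p_1+p_2=1$ and $d=0$.
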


\begin{proof}
In view of Lemma~\ref{lem:OnAppropriateSubalgebras1},
any three-dimensional appropriate extension of~$\hat{\mathfrak g}^\cap$, which contains at most one linearly independent operator of the form $\DDD(\varphi)+\GG(\psi)+c\FF^2$, is spanned by vector fields 
$Q^1=\DDD^u+\DDD(\varphi^1)+\GG(\psi^1)+c_1\FF^2$, 
$Q^2=\DDD^t+\DDD(\varphi^2)+\GG(\psi^2)+c_2\FF^2$ and 
$Q^3=       \DDD(\varphi^3)+\GG(\psi^3)+c_3\FF^2$, 
where~$\varphi^i$ and~$\psi^i$ are smooth functions of~$x$, $c_i$ are constants, $\varphi^1,\varphi^3\ne0$ 
and $(\varphi^2,\psi^2)\ne(0,0)$.

Using $\mathscr F_*(c_1)$, $\mathscr D_*(\theta)$ and $\mathscr G_*(\chi)$ 
with suitably chosen functions $\theta$ and $\chi$ of~$x$ and, if $c_3\ne0$, $\mathscr D^u_*(c_3^{-1})$,  
we set  $c_1=0$, $\varphi^3=1$, $\psi^3=0$ and $c_3=\ve\in\{0,1\}$.
The commutation relations of~$Q^3$ with~$Q^1$ and~$Q^2$ are 
\begin{gather*}
[Q^3,Q^1]=\DDD(\varphi^1_x)+\GG(\psi^1_x)+c_3\FF^2=p_1Q^3+d_1\GG(1),\\
[Q^3,Q^2]=\DDD(\varphi^2_x)+\GG(\psi^2_x)-2c_3\FF^2=p_2Q^3+d_2\GG(1)
\end{gather*}
for some constants $p_i$ and~$d_i$, $i=1,2$. 
These commutation relations imply the conditions $\varphi^i_x=p_i$, $\psi^i_x=d_i$ and $(p_1-1)\ve=(p_2+2)\ve=0$. 
Therefore, up to combining $Q^i$ with~$Q^3$ and~$\GG(1)$ we obtain $\varphi^i=p_ix$ and $\psi^i=d_ix$. 
Then the commutation relation 
\[
[Q^2,Q^1]=(d_2+p_2d_1-p_1d_2)\GG(x)+c_2\FF^2=0
\]
yields $c_2=0$ and $p_2d_1=(p_1-1)d_2$. 
If $p_1\ne1$, we can set $d_1=0$ using $\mathscr G_*(-(p_1-1)^{-1}d_1x)$ 
and then the equality $p_2d_1=(p_1-1)d_2$ is reduced to $d_2=0$. 
Analogously, in the case $p_2\ne0$ we can set $d_2=0$ using $\mathscr G_*(-p_2^{-1}d_2x)$ 
and then the equality $p_2d_1=(p_1-1)d_2$ is equivalent to $d_1=0$. 
Summing up, we always have $d_1=d_2=0\bmod G^\sim$ if $(p_1,p_2)\ne(1,0)$. 
This gives the first extension in~\eqref{eq:3Dextensions}. 
Otherwise, $p_1=1$, $p_2=0$ and hence $\ve=0$ and $d_2\ne0$. 
Setting $d_2=-1$ by $\mathscr D^u_*(-d_2^{-1})$, we obtain the second extension in~\eqref{eq:3Dextensions}. 
This completes the proof of the theorem.
\end{proof}

\section{Result of group classification}
\label{sec:GroupClassificationIbragimovClass}

To describe equations whose Lie invariance algebras contain the projection~$\mathrm P\mathfrak s$ of a certain appropriate subalgebra~$\mathfrak s$ of~$\mathfrak g^\sim$ to the variable space, 
we can use two equivalent ways, which lead to the same system of partial differential equations in the arbitrary elements~$f$ and~$g$: 
For each basis element~$\mathcal Q$ of~$\mathfrak s$ we should 
either substitute the coefficients of~$\mathrm P\mathcal Q$ into the last two equations of system~\eqref{eq:DeterminingEqsSimplifiedIbragimovClass} 
or write the condition of invariance of the functions~$f$ and~$g$ with respect to~$\mathcal Q$. 
Then we should solve the joint system of the equations derived. 
Simultaneously we should check whether the projection~$\mathrm P\mathfrak s$ is really the maximal Lie invariance algebra 
for obtained values of the arbitrary elements~$f$ and~$g$.

All the candidates for one-dimensional appropriate extensions listed in Lemma~\ref{lem:1DimInequivExtsForIbragimovClass} are really appropriate. 
For each representative of the list we have an uncoupled system of two equations in~$f$ and~$g$, which is easy to be solved. 
As a result, we obtain the following list of equations from class~\eqref{eq:IbragimovClass} that admit 
one-dimensional Lie symmetry extensions of~$\mathfrak g^\cap$ related to $\mathfrak g^\sim$:
\begin{gather*}\hspace*{-\arraycolsep}
\begin{array}{lll}
1.1.& \DDD^u+\tfrac12\DDD^t+\DDD(\ve)+\FF^2\colon &u_{tt} = \tilde f(\omega)u_x^{-1}u_{xx}+\tilde g(\omega)+2\ln|u_x|   ,\\[.5ex]
1.2.& \DDD^u-p\DDD^t+\DDD(\ve)             \colon &u_{tt} = |u_x|^{2p}(\tilde f(\omega)u_{xx}+\tilde g(\omega)u_x)     ,\\[.5ex]
1.3.& \DDD^t-\DDD(1)                       \colon &u_{tt} = e^{2x}(\tilde f(u_x)u_{xx}+\tilde g(u_x))                  ,\\[.5ex]
1.4.& \DDD^t-\GG(x)                        \colon &u_{tt} = e^{2u_x}(\tilde f(x)u_{xx}+\tilde g(x))                    ,\\[.5ex]
1.5.& \DDD(1)+\ve\FF^2                     \colon &u_{tt} = \tilde f(u_x)u_{xx}+\tilde g(u_x)+2\ve x)                   ,
\end{array}\hspace*{-5ex}
\end{gather*}
where $\omega=x-\ve\ln|u_x|$, $\ve\in\{0,1\}$ and $p$ is an arbitrary constant. 
Here and in what follows, in each case we present only basis elements the corresponding subalgebra of~$\mathfrak g^\sim$ 
which belong to the complement of the basis $\{\PP^t,\GG(1),\FF^1\}$ of~$\hat{\mathfrak g}^\cap$.  

Calculations related to two-dimensional extensions are more complicated. 
We first present the result of the calculations and then give some explanations. 
\begin{gather*}\hspace*{-\arraycolsep}
\begin{array}{rl}
2.1. & \DDD^u-\DDD(p),\ \DDD^t-\DDD(1),\ p\ne0,                                              \colon \quad u_{tt}=\pm e^{-2x}|u_x|^{2p}(u_{xx}+\nu u_x)                        ,\\[.5ex] 
2.2. & \DDD^u+\DDD(x),\ \DDD^t-\GG(x)                                                        \colon \quad u_{tt}=\pm e^{2u_x}(x^2u_{xx}+\nu x)                               ,\\[.5ex] 
2.3. & \DDD^u+\DDD^t+\DDD(x),\ \DDD(1)                                                       \colon \quad u_{tt}=\tilde f(u_x)u_{xx}                                          ,\\[.5ex] 
2.4. & \DDD^u+\DDD^t+\DDD(x)+\GG(x),\ \DDD(1)                                                \colon \quad u_{tt}=\pm u_{xx}+e^{-u_x}                                          ,\\[.5ex] 
2.5. & 2\DDD^u+\DDD^t+2\DDD(x)+\GG(x)+\FF^2,\ \DDD(1)                                        \colon \quad u_{tt}=\pm e^{2u_x}u_{xx}+2u_x                                      ,\\[.5ex] 
2.6. & \DDD^u+\DDD(x)+\GG(x),\ \DDD(1)+\ve_2\FF^2,\ \ve_2\in\{-1,0,1\}                       \colon \quad u_{tt}=\pm e^{2u_x}u_{xx}+e^{u_x}+2\ve_2x                           ,\\[.5ex] 
2.7. & (2-q)\DDD^u+(1-q)\DDD^t+(2-q)\DDD(x)+\GG(x),\ \DDD(1),\ q\ne0,1                       \colon \\&   u_{tt}=\pm e^{2u_x}u_{xx}+e^{qu_x}                                  ,\\[.5ex] 
2.8. & (2+2p-q)\DDD^u+(1+p-q)\DDD^t+(1+2p-q)\DDD(x),\ \DDD(1),\ q\ne0                        \colon \\&   u_{tt}=\pm |u_x|^{2p}u_{xx}+|u_x|^q                                 ,\\[.5ex] 
2.9. & (3+2p)\DDD^u+\DDD^t+(1+2p)\DDD(x),\ \DDD(1)+\FF^2                                     \colon \\&   u_{tt}=\pm |u_x|^{2p}u_{xx}+\ve_3|u_x|^{p+1/2}+2x,\ \ve_3\in\{0,1\} ,\\[.5ex] 
2.10.& 2(1+p)\DDD^u+(1+p)\DDD^t+(1+2p)\DDD(x)+\FF^2,\ \DDD(1)                                \colon \quad u_{tt}=\pm |u_x|^{2p}u_{xx}+2\ln|u_x|                               ,\\[.5ex] 
2.11.& 2\DDD^u+\DDD^t+2\FF^2,\ \DDD(1)+\FF^2                                                 \colon \quad u_{tt}=\pm u_x^{-1}u_{xx}+2\ln|u_x|+2x                              .         
\noprint{
\\
2.8.& (2+2p-q)\DDD^u+(1+p-q)\DDD^t+(1+2p-q)\DDD(x),\ \DDD(1)+\ve_2\FF^2,\\                                                                   
    & (1+2p-2q)\ve_2=0,\ \ve_2,\ve_3\in\{0,1\},\ (p,\ve_3q)\ne(0,0),\                                                                              
      (\ve_2,\ve_3)\ne(0,0),\ q\ne0                                             \colon\\ 
    &                                                                                        u_{tt}=\pm |u_x|^{2p}u_{xx}+\ve_3|u_x|^q+\ve_2x    ,\\ 
2.9.& 2(1+p)\DDD^u+(1+p)\DDD^t+(1+2p)\DDD(x)+\ve_1\FF^2,\ \DDD(1)+\ve_2\FF^2,\\     
    & (1+2p)\ve_2=0,\ \ve_1,\ve_2\in\{0,1\},\ (p,\ve_1)\ne(0,0),   
      (\ve_1,\ve_2)\ne(0,0)                                                     \colon\\ 
    &                                                                                        u_{tt}=\pm |u_x|^{2p}u_{xx}+2\ve_1\ln|u_x|+\ve_2x  .\\
}
\end{array}\hspace*{-5ex}                                                                     
\end{gather*}
Nontrivial constraints for constant parameters which are imposed by the maximality condition for the corresponding extensions 
are discussed in detail after Theorem~\ref{thm:IbragimovClassGroupClassification}.

Cases~2.1 and 2.2 correspond to the first and second spans from Lemma~\ref{lem:2DimInequivExtsForIbragimovClass}, respectively. 
For the associated invariant equations to have a simpler form, 
these span are replaced by the equivalent spans $\langle\DDD^u-\DDD(p),\,\DDD^t-\DDD(1)\rangle$, where $p=-b^{-1}$, 
and $\langle\DDD^u+\DDD(x),\,\DDD^t-\GG(x)\rangle$, respectively. 
Note that we always can set a constant multiplier of the arbitrary element~$f$ to~$\pm1$, e.g., by scaling of~$t$.

The third span from Lemma~\ref{lem:2DimInequivExtsForIbragimovClass} in fact represents a multiparametric series of candidates for appropriate extensions, 
which is partitioned in the course of the construction of invariant equations into Cases~2.3--2.11. 
Not all values of series parameters give appropriate extensions. 
Additional constraints for parameters follow from the consistence conditions of the associated system in the arbitrary elements,     
\begin{gather*}
\hspace*{-\arraycolsep}\begin{array}{ll}
f_x=0      ,\quad & ((a_1-a_3)u_x+\ve_0)f_{u_x}=2(a_3-a_2)f,\\[.5ex]
g_x=2\ve_2 ,\quad & ((a_1-a_3)u_x+\ve_0)g_{u_x}=(a_1-2a_2)g-2\ve_2a_3x+2\ve_1,
\end{array}                                                                  
\end{gather*}
with the inequality $f\ne0$ and the requirement that the dimension of extensions should not exceed two. 

The above partition is carried out in the following way. 

If $a_1=a_3=a_2$, the common value of~$a$'s is nonzero and we can set it to be equal~1 by scaling the first basis elements of the span. 
We also have that $\ve_1=0\bmod G^\sim$ and $\ve_2=0$. Depending on either $\ve_0=0$ or $\ve_0=1$ (which is replaced by the equivalent value $\ve_0=-1$) we obtain Cases~2.3 and 2.4, respectively. 

If $a_1=a_3\ne a_2$, scaling the first basis elements of the span allows us to set $a_3-a_2=1$. 
The parameter~$\ve_0$ should be nonzero since otherwise $f=0$. Therefore, $\ve_0=1\bmod G^\sim$.
The conditions $a_2=1$, $a_2=0$ and $a_2\ne0,1$ lead to Cases~2.5, 2.6 and 2.7, respectively. 
In the last case we denote $1-a_2$ by~$q$. This value should also be nonzero since otherwise the extension dimension is greater than two. 

Let $a_1\ne a_3$. Then $\ve_0=0\bmod G^\sim$ 
and by scaling the first basis elements of the span we can also set $a_1-a_3=1$. 
Introducing the notation $p=a_3-a_2$ and $q=a_1-2a_2$, we obtain that $a_1=2+2p-q$, $a_2=1+p-q$ and $a_3=1+2p-q$. 
The further partition depends on values of~$\ve_2$, $q$ and~$\ve_1$. 
For $\ve_2=0$ the dimension of extension is not greater than two only if 
either $q\ne0$ and then $\ve_1=0\bmod G^\sim$ (Case~2.8) or $q=0$ and $\ve_1\ne0$ and then $\ve_1=1\bmod G^\sim$ (Case~2.10). 
The condition $\ve_2=1$ implies that $q=p+1/2$. 
If additionally either $q=\ve_1=0$ or $q\ne0$ (and then $\ve_1=0\bmod G^\sim$), we have Case~2.9. 
Case~2.11 corresponds to the additional constraints $q=0$ and $\ve_1\ne0$ (i.e.\ $\ve_1=1\bmod G^\sim$).

Consider the candidates for three-dimensional appropriate extensions listed in Lemma~\ref{lem:3DimInequivExtsForIbragimovClass}. 
The compatibility of the associated systems in the arbitrary elements, supplemented with the inequality $f\ne0$, implies 
$p_1+p_2=1$ and $d=0$ for the first and the second span of Lemma~\ref{lem:3DimInequivExtsForIbragimovClass}, respectively. 
The general solutions of these systems up to $G^\sim$-equivalence are $(f,g)=(\pm |u_x|^{2p},0)$ and $(f,g)=(\pm e^{2u_x},0)$. 
This gives the following cases of Lie symmetry extensions:
\begin{gather*}\hspace*{-\arraycolsep}
\begin{array}{lll}
3.1. & (1+p)\DDD^u+p\DDD(x),\ (1+p)\DDD^t+\DDD(x),\ \DDD(1),\ p\ne-2,-1,0\colon & u_{tt}=\pm |u_x|^{2p}u_{xx},\\[.5ex]
3.2. & \DDD^u+\DDD(x)+\GG(x),\ \DDD^t-\GG(x),\ \DDD(1)\colon & u_{tt}=\pm e^{2u_x}u_{xx}.
\end{array}
\end{gather*}

Special cases of Lie symmetry extensions in class~\eqref{eq:IbragimovClass} are presented before this section. 
More precisely, all inequivalent equations whose maximal Lie invariance algebras are not contained in the projection of the equivalence algebra~$\mathfrak g^\sim$ 
to the variable space are listed in Lemma~\ref{lem:IbragimovClassSpecialLieSymExts1}. 
Equations from class~\eqref{eq:IbragimovClass} which are invariant with respect to two linearly independent operators of the form~$\mathrm P Q^i$, 
where $Q^i=\DDD(\varphi^i)+\GG(\psi^i)+c_i\FF^2$, are described in Corollary~\ref{cor:OnAppropriateSubalgebras1}.
For convenience, we collect the derived cases in a single table 
and formulate the final result of group classification in the class~\eqref{eq:IbragimovClass} as a theorem.
Recall that within the class~\eqref{eq:IbragimovClass} $G^\sim$-equivalence coincides with the general point equivalence, cf.\ 
Corollary~\ref{col:IbragimovClassEquiv}. 

\begin{theorem}\label{thm:IbragimovClassGroupClassification}
All  $G^\sim$-inequivalent (resp.\ point-inequivalent) cases of Lie symmetry extensions of the kernel algebra~$\mathfrak g^\cap$ in the class~\eqref{eq:IbragimovClass} 
are exhausted by cases presented in Table~\ref{tab:IbragimovClassExtensions}.
\end{theorem}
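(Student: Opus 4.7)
The plan is to assemble the classification from the preparatory results rather than redo any substantial calculation. Since the class~\eqref{eq:IbragimovClass} is semi-normalized (Corollary~\ref{col:IbragimovClassEquiv}), Proposition~\ref{pro:OnGroupClassificationsInSemiNormClass} tells us that classifying up to $G^\sim$-equivalence is the same as classifying up to general point equivalence; so the two statements of the theorem collapse to one. Next I split the class along the partition established in Theorem~\ref{thm:IbragimovClassSemiNormalizedSubclass}: the subclass~$\mathcal K$ (equations $G^\sim$-equivalent to those of the form~\eqref{eq:IbragimovClassSpecialLieSymExts1Subclass}) and its complement~$\bar{\mathcal K}$. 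The classification within~$\mathcal K$ is precisely Lemma~\ref{lem:IbragimovClassSpecialLieSymExts1}, so only~$\bar{\mathcal K}$ requires further work.

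On~$\bar{\mathcal K}$ the class is weakly normalized (Theorem~\ref{thm:WeakNormalizationIbragimovClass}), which by the corollary after Definition~\ref{def:AppropriateSubalgebras} means that complete preliminary group classification coincides with complete group classification. Hence every Lie symmetry extension in~$\bar{\mathcal K}$ is the projection of some appropriate subalgebra of~$\mathfrak g^\sim$ in the sense of Definition~\ref{def:AppropriateSubalgebras}, and I only need to enumerate such subalgebras up to $G^\sim$-equivalence and convert them into $(f,g)$ via the classifying determining equations~\eqref{eq:DeterminingEqsSimplifiedIbragimovClass}. The two-operator subcase with both operators in $\langle\DDD(\varphi),\GG(\psi),\FF^2\rangle$ is handled separately by Corollary~\ref{cor:OnAppropriateSubalgebras1}; all remaining appropriate subalgebras are exhausted by Lemmas~\ref{lem:1DimInequivExtsForIbragimovClass}, \ref{lem:2DimInequivExtsForIbragimovClass} and~\ref{lem:3DimInequivExtsForIbragimovClass}.

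For each representative subalgebra on those lists I substitute the coefficients of its basis elements into the two classifying equations of~\eqref{eq:DeterminingEqsSimplifiedIbragimovClass} and integrate the resulting uncoupled linear system in~$f$ and~$g$. In the one-dimensional and three-dimensional cases this is a bookkeeping exercise and produces Cases~1.1--1.5 and 3.1--3.2 directly. The two-dimensional case splits naturally: the first two spans of Lemma~\ref{lem:2DimInequivExtsForIbragimovClass} give the isolated Cases~2.1 and~2.2, while the third, multiparametric, span produces a compatibility condition on its parameters $(a_1,a_2,a_3,\ve_0,\ve_1,\ve_2)$ and, after residual $G^\sim$-scalings, partitions into Cases~2.3--2.11 according to whether $a_1=a_3=a_2$, $a_1=a_3\ne a_2$, or $a_1\ne a_3$, and further on the vanishing of $\ve_0$, $\ve_2$ and the exponent~$q$.

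The main obstacle is precisely the enforcement of maximality: having obtained $(f,g)$ for a candidate appropriate subalgebra~$\mathfrak s$, I must verify that $\mathrm P\mathfrak s$ is actually the \emph{maximal} Lie invariance algebra of the resulting equation, since otherwise that equation has to be redistributed into a case of higher-dimensional extension. In practice this means eliminating the degenerate parameter values at which the symmetry algebra jumps (e.g.\ in Case~1.2 the exponent $p$ must avoid the values that would fall into Cases~2.1 or~3.1, in Case~2.7 one must exclude $q\in\{0,1\}$, in Case~2.8 one must exclude $q=0$, and in Case~3.1 one must exclude $p\in\{-2,-1,0\}$). These exclusions are read off by writing the general symmetry operator admitted by the obtained $(f,g)$ against system~\eqref{eq:DeterminingEquationsIbragimovClass}, splitting with respect to $u_x$ and checking when the solution space strictly exceeds $\mathrm P\mathfrak s$. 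Once these finitely many exceptional values are absorbed into the adjacent higher-extension cases, the table is complete and, by Corollary~\ref{col:IbragimovClassEquiv}, inequivalence holds simultaneously under $G^\sim$ and under arbitrary point transformations.
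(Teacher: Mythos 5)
Your proposal follows essentially the same route as the paper: reduce point equivalence to $G^\sim$-equivalence via semi-normalization, split off the subclass~$\mathcal K$ (handled by Lemma~\ref{lem:IbragimovClassSpecialLieSymExts1}) from its weakly normalized complement, enumerate appropriate subalgebras via Corollary~\ref{cor:OnAppropriateSubalgebras1} and Lemmas~\ref{lem:1DimInequivExtsForIbragimovClass}--\ref{lem:3DimInequivExtsForIbragimovClass}, integrate the classifying equations for each representative, and prune non-maximal parameter values into the adjacent higher-dimensional cases. This is exactly how Section~\ref{sec:GroupClassificationIbragimovClass} assembles Table~\ref{tab:IbragimovClassExtensions}, and your list of exceptional parameter exclusions matches the constraints recorded there.
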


\begin{table}[htb]\footnotesize\renewcommand{\arraystretch}{1.39}\belowcaptionskip=.7ex\abovecaptionskip=.0ex
{\centering
\caption{Lie symmetry extensions of the kernel algebra~$\mathfrak g^\cap=\langle\p_t,\p_u,t\p_u\rangle$ for the class~\eqref{eq:IbragimovClass}\label{tab:IbragimovClassExtensions}}
\begin{tabular}{|r|l|l|l|}
\hline
N\hfil &\hfil $f$ &\hfil $g$ &\hfil Basis of extension \\
\hline
\multicolumn{4}{|c|}{One-dimensional extensions}\\
\hline
1  & $\tilde f(x-\ve\ln|u_x|)u_x^{-1}$          & $\tilde g(x-\ve\ln|u_x|)+2\ln|u_x|\!\!$           & $t\p_t+2\ve\p_x+2(u+t^2)\p_u$ \\
2  & $\tilde f(x-\ve\ln|u_x|)|u_x|^{2p}$        & $\tilde g(x-\ve\ln|u_x|)|u_x|^{2p}u_x$            & $-pt\p_t + \ve\p_x+u\p_u$ \\
3  & $\tilde f(u_x)e^{2x}$                      & $\tilde g(u_x)e^{2x}$                             & $t\p_t-\p_x$ \\
4  & $\tilde f(x)e^{2u_x}$                      & $\tilde g(x)e^{2u_x}$                             & $t\p_t - x\p_u$ \\
5  & $\tilde f(u_x) $                           & $\tilde g(u_x)+2\ve x$                            & $\p_x + \ve t^2\p_u$ \\
\hline                                                                                           
\multicolumn{4}{|c|}{Two-dimensional extensions}\\                                               
\hline                                                                                           
6  & $\delta u_x^{-4}$                          & $\tilde g(x)u_x^{-3}$                             & $t^2\p_t+tu\p_u$, $2t\p_t+u\p_u$ \\
7  & $\delta e^{2x}|u_x|^{2p}$, $p\ne0,-2$      & $\nu e^{2x}|u_x|^{2p}u_x$, $\nu(p+1)\ne\delta$    & $p\p_x-u\p_u$, $t\p_t-\p_x$ \\
8  & $\delta x^2e^{2u_x}$                       & $\nu xe^{2u_x}$, $\nu\ne\delta$                   & $x\p_x+u\p_u$, $t\p_t-x\p_u$ \\
9  & $\tilde f(u_x)$                            & $0$                                               & $\p_x$, $t\p_t+x\p_x+u\p_u$ \\
10 & $\delta$                                   & $e^{-u_x}$                                        & $\p_x$, $t\p_t+x\p_x+(u+x)\p_u$ \\
11 & $\delta e^{2u_x}$                          & $2u_x$                                            & $\p_x$, $t\p_t+2x\p_x+(2u+x+t^2)\p_u$ \\
12 & $\delta e^{2u_x}$                          & $e^{u_x}{+}2\ve_2x$, $\ve_2{\in}\{-1,1\}$         & $x\p_x+(u+x)\p_u$, $\p_x+\ve_2t^2\p_u$ \\
13 & $\delta e^{2u_x}$                          & $e^{qu_x}$, $q\ne0$                               & $\p_x$, $(1-q)t\p_t+(2-q)x\p_x+((2-q)u+x)\p_u$ \\
14 & $\delta |u_x|^{2p}$                        & $|u_x|^q$,\quad *)                                & $\p_x$, $(1{+}p{-}q)t\p_t+(1{+}2p{-}q)x\p_x+(2{+}2p{-}q)u\p_u$ \\
15 & $\delta |u_x|^{2p}$                        & $\ve|u_x|^{p+1/2}{+}2x$                           & $\p_x+t^2\p_u$, $t\p_t+(1{+}2p)x\p_x+(3{+}2p)u\p_u$ \\
16 & $\delta |u_x|^{2p}$                        & $2\ln|u_x|$                                       & $\p_x$, $(1{+}p)t\p_t+(1{+}2p)x\p_x+(2(1{+}p)u+t^2)\p_u$ \\
17 & $\delta u_x^{-1}$                          & $2\ln|u_x|+2x$                                    & $\p_x+t^2\p_u$, $t\p_t+2(u+t^2)\p_u$ \\
\hline                                                                                           
\multicolumn{4}{|c|}{Three-dimensional extensions}\\                                             
\hline                                                                                           
18 & $\delta u_x^{-4}$                          & $u_x^{-3}$                                        & $t^2\p_t+tu\p_u$, $2t\p_t+u\p_u$, $\p_x$ \\
19 & $\delta u_x^{-4}$                          & $\nu x^{-1}u_x^{-3}$, $\nu\ne0$                   & $t^2\p_t+tu\p_u$, $2t\p_t+u\p_u$, $2x\p_x+u\p_u$ \\
20 & $\delta |u_x|^{2p}$, $p\ne-2,0$            & $0$                                               & $\p_x$,  $t\p_t+x\p_x+u\p_u$, $pt\p_t-u\p_u$ \\
21 & $\delta e^{2u_x}$                          & $0$                                               & $\p_x$, $t\p_t+x\p_x+u\p_u$, $t\p_t-x\p_u$ \\
\hline                                                                                           
\multicolumn{4}{|c|}{Four-dimensional extensions}\\                                              
\hline                                                                                           
22 & $\delta u_x^{-4}$                          & $0$                                               & $t^2\p_t+tu\p_u$, $2t\p_t+u\p_u$, $\p_x$, $2x\p_x+u\p_u$ \\
\hline
\end{tabular}
\\[2ex]}
Here $\delta=\pm1\bmod G^\sim$ and $\ve\in\{0,1\}\bmod G^\sim$. In Case~15 $\ve=0\bmod G^\sim$ if $p=-1/2$.\\
*) $q\ne0$, $(p,q)\ne(-1,-1),(-2,-3)$ in Case~14. 
\end{table}

In each case of Table~\ref{tab:IbragimovClassExtensions}
we present only basis elements of the corresponding Lie invariance algebra 
that belong to the complement of the basis $\{\p_t,\p_u,t\p_u\}$ of~$\mathfrak g^\cap$.  
The spans of~$\mathfrak g^\cap$ and the vector fields given in cases 1--6 and 9 of Table~\ref{tab:IbragimovClassExtensions}
are the maximal Lie invariance algebra of the corresponding equations 
for the general values of the associated parameter-functions~$\tilde f$ and~$\tilde g$, 
but for certain values of these parameter-functions additional extensions are possible.  

In the course of collecting cases of Lie symmetry extensions into Table~\ref{tab:IbragimovClassExtensions}, 
they are properly arranged. 
In particular, Cases~1 and~2 of Corollary~\ref{cor:OnAppropriateSubalgebras1} are merged with Cases~2.10 and~3.1 
into Cases~16 and~20 of Table~\ref{tab:IbragimovClassExtensions}, respectively. 
As the value $p=-1$ is singular for the basis of Case~2.10, 
the bases of Case~2.10 and Case~2 of Corollary~\ref{cor:OnAppropriateSubalgebras1} are changed 
in order to be agreed. 
Case~2.6 with $\ve_2=0$ is not included in Case~12 of Table~\ref{tab:IbragimovClassExtensions} 
since it is united with Case~2.7 into Case~13 of this table. 

Within the algebraic approach used for group classification of the class~\eqref{eq:IbragimovClass} 
the construction of Lie invariance algebras precedes the construction of associated invariant equations. 
This is why the simplification of the form of bases of Lie symmetry extensions, in a certain sense, 
dominates in Table~\ref{tab:IbragimovClassExtensions}.
The form of invariant equations can be slightly simplified 
if simultaneous minor complication of bases of the corresponding Lie invariance algebras are permitted. 
In particular, multipliers equal to two can be removed from arbitrary elements 
by equivalence transformations or re-denoting the parameter~$p$. 

Note that the unique inequivalent case of Lie symmetry extension for which the corresponding Lie invariance algebra 
is of maximal dimension possible for equations the class~\eqref{eq:IbragimovClass} and equal to seven, Case~22, 
is not associated with a subalgebra of the equivalence algebra~$\mathfrak g^\sim$.

Now we discuss nontrivial constraints for constant parameters which are imposed by the maximality condition for the corresponding extensions. 

The equation $u_{tt}=e^{2x}|u_x|^{2p}(\delta u_{xx}+\nu u_x)$ corresponding to Case~7 for general values of parameters is linear if $p=0$. 
If $p\ne-1$, it is reduced by the transformation $\tilde t=|p+1|^{-p-1}t$, $\tilde x=e^{-x/(p+1)}$, $\tilde u=u$ to the equation
$\tilde u_{\tilde t\tilde t}=|\tilde u_{\tilde x}|^{2p}(\delta\tilde u_{\tilde x\tilde x}+\tilde\nu\tilde x^{-1}\tilde u_{\tilde x})$ with $\tilde\nu=\delta-\nu(p+1)$, 
which coincides with the equation of Case~19 (resp.\ 20, resp.\ 22) if $p\ne-2$ and $\tilde\nu\ne0$ (resp.\ $p=-2$ and $\tilde\nu\ne0$, resp.\ $p=-2$ and $\tilde\nu=0$).

The equation $u_{tt}=e^{2u_x}(\delta x^2 u_{xx}+\nu xu_x)$ corresponding to Case~8 is similar with respect to the transformation
$\tilde t=t$, $\tilde x=x$, $\tilde u=u+x\ln|x|-x$ to the equation
$\tilde u_{\tilde t\tilde t}=e^{2\tilde u_{\tilde x}}(\delta\tilde u_{\tilde x\tilde x}+(\nu-\delta)\tilde x^{-1})$
which coincides with the equation of Case~21 if $\nu=\delta$. 

Consider the subclass of the class~\eqref{eq:IbragimovClass} associated with the additional constraints $f_x=g_x=0$, 
i.e., the class of equations of the general form 
\begin{equation}\label{eq:GandariasTorrisiValentiClass}
u_{tt}=f(u_x)u_{xx}+g(u_x),
\end{equation}
where $(f_{u_x},g_{u_xu_x})\ne(0,0)$. \looseness=-1
Lie symmetries of these equations were comprehensively described in~\cite{gand04Ay} using no equivalence relation. 
The selection of Cases~$5|_{\ve=0}$, 9, 10, 11, 13, 14, 16, 18, 20, 21 and 22 from Table~\ref{tab:IbragimovClassExtensions}, 
which are related to equations from the subclass~\eqref{eq:GandariasTorrisiValentiClass}, represents 
the exhaustive list of Lie symmetry extensions in this subclass up to general point equivalence, 
where the algebra $\mathfrak g^\cap_1=\langle\p_t,\p_u,t\p_u,\p_x\rangle$ given in Case~$5|_{\ve=0}$ is the corresponding kernel algebra. 
Theorems~\ref{thm:IbragimovClassNormalizedSubclass} and~\ref{thm:IbragimovClassSemiNormalizedSubclass} imply
that the equivalence group~$G^\sim_2$ of the subclass~\eqref{eq:GandariasTorrisiValentiClass} is a subgroup of the equivalence group~$G^\sim$, 
namely, it consists of transformations of the form~\eqref{eq:EquivalenceGroupIbragimovClass} with $\varphi_{xx}=\psi_{xx}=0$.
As all necessary shifts, scalings and sign changes of the derivatives $u_{tt}$, $u_{xx}$ and $u_x$ are induced by transformations from the equivalence group, 
the majority of constants parameterizing elements of the classification list from~\cite{gand04Ay} 
can be set to appropriate values (0, 1, $\pm1$ or others). 
In other words, these constants are inessential from the point of view of symmetry analysis. 
As a result, the classification list from~\cite{gand04Ay} is reduced to the above selection of cases from Table~\ref{tab:IbragimovClassExtensions}, 
excluding a single case given in~\cite{gand04Ay} as Case~X. 
After simplifications by shifts, scalings and sign changes of derivatives induced by transformations from~$G^\sim_2$, 
the value of arbitrary elements for this case takes the form $f=\delta u_x^{-2}$ and $g=\delta u_x^{-1}$, where $\delta=\pm 1$.
A~value of the coefficient of~$u_x^{-1}$ in the expression for~$g$ is not essential. 
We set it equal to~$\delta$ for convenience. 
The related equation $u_{tt}=\delta u_x^{-2}u_{xx}+\delta u_x^{-1}$ 
has the six-dimensional maximal Lie invariance algebra $\langle\p_t,\p_u,t\p_u,\p_x,e^x\p_x,t\p_t+u\p_u\rangle$ and 
is reduced by the transformation $\mathcal T$: $\tilde t=t$, $\tilde x=e^x$, $\tilde u=u$ 
to the equation $\tilde u_{\tilde t\tilde t}=\delta\tilde u_{\tilde x}^{-2}\tilde u_{\tilde x\tilde x}$, 
which corresponds to Case~$20|_{p=-1}$ from Table~\ref{tab:IbragimovClassExtensions}. 
This is why we put $(p,q)\ne(-1,-1)$ as a parameter constraint for Case~14 from Table~\ref{tab:IbragimovClassExtensions}. 
The presence of one more inequivalent case in the course of the classification 
of the subclass~\eqref{eq:GandariasTorrisiValentiClass} up to $G^\sim_2$-equivalence instead of general point equivalence
is explained by the fact that the transformation $\mathcal T$ does not belong to~$G^\sim_2$. 
To complete the discussion of singular values of parameters for Case~14 from Table~\ref{tab:IbragimovClassExtensions}, 
we only note that the values $p=-2$ and $q=-3$ directly give Case~18 
and for the value $q=0$ the corresponding equation is reduced by the transformation $\tilde u=u-t^2/2$ to the equation associated with Case~20. 
The same transformation reduces Case~13 with~$q$ set to zero formally to Case~21.

Any equation from the class~\eqref{eq:IbragimovClass} is a potential equation for the equation of the form 
\begin{equation}\label{eq:NonlinTelegraphEq}
v_{tt}=(f(x,v)v_x+g(x,v))_x
\end{equation}
with the same value of the arbitrary elements~$f$ and~$g$, where the argument~$u_x$ is replaced by~$v$.
Indeed, Eq.~\eqref{eq:NonlinTelegraphEq} possesses two inequivalent characteristics 
of conservation laws, $\lambda^1=1$ and $\lambda^2=t$. 
The potential systems constructed with the simplest conserved vectors associated with these characteristics is 
\begin{gather}\label{eq:NonlinTelegraphEqPotSystem1}
w^1_x=v_t,\quad w^1_t=f(x,v)v_x+g(x,v),
\\\label{eq:NonlinTelegraphEqPotSystem2}
w^2_x=tv_t-v,\quad w^2_t=tf(x,v)v_x+tg(x,v).
\end{gather}
We denote $tw^1-w^2$ by~$u$. 
In terms of the dependent variables~$v$, $w^1$ and~$u$,
the joint potential system~\eqref{eq:NonlinTelegraphEqPotSystem1}, \eqref{eq:NonlinTelegraphEqPotSystem2} 
takes the form $u_x=v$, $u_t=w^1$, $w^1_t=f(x,v)v_x+g(x,v)$ 
which is a potential system for system~\eqref{eq:NonlinTelegraphEqPotSystem1}, i.e., 
it is formally a second-level potential system of Eq.~\eqref{eq:NonlinTelegraphEq}. 
Hence $u$ is a second-level potential for this equation. 
Excluding $v$ and $w^1$ from the last system, we obtain Eq.~\eqref{eq:IbragimovClass}. 
In order to derive Eq.~\eqref{eq:NonlinTelegraphEq} from Eq.~\eqref{eq:IbragimovClass}, 
we should totally differentiate Eq.~\eqref{eq:IbragimovClass} with respect to~$x$ and replace~$u_x$ by~$v$.
As the coefficients of any Lie symmetry operator~$Q=\tau\p_t+\xi\p_x+\eta\p_u$ of Eq.~\eqref{eq:IbragimovClass}
satisfy the determining equations $\tau_u=\xi_u=\eta_{uu}=\eta_{xu}=0$, 
the coefficient of~$\p_v$ in the prolongation of this operator to~$v$ according to the equality $v=u_x$ 
is equal to $\eta_x+(\eta_u-\xi_x)u_x$  and hence does not depend on~$u$.
Therefore, Lie symmetries of Eq.~\eqref{eq:IbragimovClass} induce 
no purely potential symmetries of Eq.~\eqref{eq:NonlinTelegraphEq}. 

We checked cases from Table~\ref{tab:IbragimovClassExtensions} using the package \texttt{DESOLV}~\cite{carm00Ay,vu07Ay} 
for symbolic calculations of Lie symmetries, whenever it was possible.

\section{Conclusion}\label{sec:ConclusionIbragimovClass}

Results of this paper and those existing in the literature on symmetry analysis of differential equations 
allow us to comparatively analyze different approaches to group classification of differential equations 
(partial preliminary group classification, 
complete preliminary group classification and
complete group classification) 
within the framework of the algebraic method. 
Given a class~$\mathcal L|_{\mathcal S}$ of (systems of) differential equations with the equivalence group~$G^\sim$ and the equivalence algebra~$\mathfrak g^\sim$,
the application of each of the above approaches involves, in some way, classification of certain subalgebras of~$\mathfrak g^\sim$. 
The essential point is what subalgebras of~$\mathfrak g^\sim$ should be classified and what equivalence relation should be used in the course of the classification.  

In the course of \emph{partial preliminary group classification}, a proper subalgebra~$\mathfrak s$ of~$\mathfrak g^\sim$ is fixed 
and then only subalgebras of~$\mathfrak s$ are classified. 
This approach may be relevant only if the subalgebra~$\mathfrak s$ is noticeable from the physical or another point of view. 
Hence the choice of such subalgebra should be strongly justified which, unfortunately, is often ignored in the existing literature on the subject. 
Differences in the consideration of the subalgebra~$\mathfrak s$ instead of the whole algebra~$\mathfrak g^\sim$ are especially significant in the case
when $\mathfrak g^\sim$ is an infinite-dimensional algebra whereas $\mathfrak s$ is a finite-dimensional subalgebra.
A seeming advantage of replacing~$\mathfrak g^\sim$ by~$\mathfrak s$ is 
that in general finite-dimensional algebras are much simpler objects than infinite-dimensional ones. 
At the same time, partial preliminary group classification has a few essential weaknesses most of which are related to the following fact:  
As the fixed subalgebra~$\mathfrak s$ of~$\mathfrak g^\sim$ is usually not invariant under the adjoint action of the equivalence group~$G^\sim$, 
this group does not generate a well-defined equivalence relation on subalgebras of~$\mathfrak s$. 
This is a reason why subalgebras of~$\mathfrak s$ are classified up to the weaker internal equivalence on~$\mathfrak s$, 
which is induced by the adjoint action of the continuous transformation group associated with~$\mathfrak s$, instead of $G^\sim$-equivalence. 

The exhaustive classification of subalgebras up to the internal equivalence is a cumbersome algebraic problem, 
possessing no algorithmic solution even for finite-dimensional algebras.
In order to simplify it, only one-dimensional subalgebras are usually classified 
which crucially increases incompleteness of results obtained in the framework of partial preliminary group classification. 
Although the number of classification cases remains quite large, 
many of them are inessential from the $G^\sim$-equivalence point of view, not to mention the general point equivalence. 
The presence of equivalent cases unnecessarily complicates both the solution of the group classification problem 
and further applications of classification results, e.g., for the construction of exact solutions of systems from the class~$\mathcal L|_{\mathcal S}$. 

\emph{Complete preliminary group classification} of the class~$\mathcal L|_{\mathcal S}$ is based on the classification of subalgebras 
of the entire equivalence algebra~$\mathfrak g^\sim$ up to $G^\sim$-equivalence. 
As both the objects, $\mathfrak g^\sim$ and~$G^\sim$, are directly related to the class~$\mathcal L|_{\mathcal S}$ and well consistent to each other, 
this approach looks as quite natural. 
For weakly normalized classes of differential equations, it gives an exhaustive classification list. 
Moreover, complete preliminary group classification always is a necessary step for complete group classification within the framework of the algebraic method.  
It is obvious that complete preliminary group classification gives a list which is closer to exhausting all possible Lie symmetry extensions 
than any list obtained via partial preliminary group classification. 
At the same time, due to the usage of $G^\sim$-equivalence which is stronger than the internal equivalence on a subalgebra of~$\mathfrak g^\sim$, 
the former list can contain even a less number of cases than the latter one. 
For example, 33 cases of one-dimensional extensions of the kernel algebra were constructed for the class~\eqref{eq:IbragimovClass} in~\cite{ibra91Ay}
in the course of partial preliminary group classification involving a ten-dimensional subalgebra of the equivalence algebra of this class. 
All these cases are $G^\sim$-equivalent to particular subcases of Cases~1--5 from Table~\ref{tab:IbragimovClassExtensions} of the present paper. 

The approach of complete preliminary group classification can be optimized via selecting of appropriate subalgebras of~$\mathfrak g^\sim$, 
i.e.\ subalgebras whose projections to the space of system variables are maximal Lie invariance algebras of systems from the class~$\mathcal L|_{\mathcal S}$. 
The simplest common property of appropriate subalgebras is that they contain the kernel algebra. 
Other criteria for selecting of appropriate subalgebras including bounds for dimensions of extensions or additional extensions 
are derived via examination of the determining equations for Lie symmetries of systems from the class~$\mathcal L|_{\mathcal S}$. 
In a certain sense, this means combining the algebraic method of group classification 
with the direct method based on the study of compatibility and the integration of the determining equations up to $G^\sim$-equivalence.
The usage of the optimized technique often allows one to reduce the classification problem to classification of certain low-dimensional subalgebras 
of the equivalence algebra, even if the equivalence algebra is infinite-dimensional and there exist infinite-dimensional extensions of the kernel. 
Related calculations are not too cumbersome. 
Thus, minimal computations which are necessary for complete preliminary group classification of class~\eqref{eq:IbragimovClass} are exhausted 
by the first parts of Sections~\ref{sec:EquivalenceAlgebraIbragimovClass} and~\ref{sec:DetEqsForLieSymsIbragrimovClass} 
and entire Sections~\ref{sec:PreliminaryStudyOfAdmTrans}, \ref{sec:EquivGroup}, 
\ref{sec:ClassificationSubalgebrasIbragimovClass} and~\ref{sec:GroupClassificationIbragimovClass}.
These computations result in the absolute majority of inequivalent cases of Lie symmetry extensions for the class~\eqref{eq:IbragimovClass}, 
which are presented in Table~\ref{tab:IbragimovClassExtensions} (the exceptions are only Cases~6, 18, 19 and 22). 

There exist two ways to apply the algebraic method to \emph{complete group classification}. 
The first way is to reduce complete group classification to preliminary one. 
The reduction can be realized, e.g., via proving that the class~$\mathcal L|_{\mathcal S}$ is weakly normalized 
or partitioning this class into weakly normalized subclasses and other subclasses which can be easily classified by the direct method. 
Although the partition into subclasses usually involves cumbersome and sophisticated computations, it is an effective tool of group analysis
since it accurately adapts the classification procedure to the structure of the class~$\mathcal L|_{\mathcal S}$.  
This is the way that has been used in the present paper. 
The class~\eqref{eq:IbragimovClass} is partitioned into two subclasses possessing the same equivalence group as the whole class~\eqref{eq:IbragimovClass}. 
One of the subclasses is normalized, the other is semi-normalized 
and mapped by equivalence transformations onto its subclass~\eqref{eq:IbragimovClassSpecialLieSymExts1Subclass} of simple structure. 
Group classification of the subclass~\eqref{eq:IbragimovClassSpecialLieSymExts1Subclass} has been obtained in the course of the partition 
which results in only four special cases of Lie symmetry extension (Cases~6, 18, 19 and 22 from Table~\ref{tab:IbragimovClassExtensions}), 
which are not related to subalgebras of~$\mathfrak g^\sim$.
The second way is to directly classify $G^\sim$-inequivalent appropriate algebras contained 
in the span $\mathfrak g^{\langle\rangle}=\langle\mathfrak g_\theta|\theta\in\mathcal S\rangle$ of maximal Lie invariance algebras, $\mathfrak g_\theta$, 
of all systems from the class~$\mathcal L|_{\mathcal S}$. 
This way properly works only if the class~$\mathcal L|_{\mathcal S}$ possesses certain properties, 
e.g., if the maximal Lie invariance algebra~$\mathfrak g_\theta$ is of low dimension for any $\theta\in\mathcal S$ \cite{khab09Ay} 
or if the class~$\mathcal L|_{\mathcal S}$ is at least weakly normalized or partitioned into weakly normalized subclasses 
but this property is not explicitly checked \cite{basa01Ay,lahn06Ay,lahn07Ay,zhda99Ay}.
An explanation for the above observation is that $G^\sim$-equivalence is not appropriate in the course of  
classification of subalgebras contained in~$\mathfrak g^{\langle\rangle}$ 
if $\mathfrak g^{\langle\rangle}$ is strongly inconsistent with the equivalence algebra~$\mathfrak g^\sim$ 
(e.g., much wider than the projection~$\mathrm P\mathfrak g^\sim$ of~$\mathfrak g^\sim$ to the space of system variables). 

Due to the above partition of the class~\eqref{eq:IbragimovClass} we have obtained essentially stronger results 
than the solution of the usual group classification problem by Lie--Ovsiannikov for this class. 
The partition exhaustively describes the set of admissible transformations in the class~\eqref{eq:IbragimovClass}. 
Moreover, the fact that the whole class~\eqref{eq:IbragimovClass} is semi-normalized guarantees that 
there are no additional point equivalence transformations 
between cases of Lie symmetry extensions presented in Table~\ref{tab:IbragimovClassExtensions}, 
i.e., the same table gives the complete group classification of the class~\eqref{eq:IbragimovClass} 
with respect to general point equivalence. 

The extension and clarification of the group classification toolbox is by no means a pure mathematical problem. 
Methods from symmetry analysis including group classification have the potential to provide, in particular, 
a unifying framework to construct invariant local closure or parameterization schemes 
for averaged nonlinear differential equations~\cite{ober97Ay,popo10Cy,raza07By}. 
As finding appropriate closure ansatzes for averaged differential equations is 
at the basis of any numerical model of (geophysical) fluid dynamical systems, 
it is immediately clear that group classification can play 
a crucial role in the construction of different computational codes for such systems.  
The classes of differential equations arising in the course of the parameterization problem 
are usually much wider and have more complicated structure 
than the classes studied in conventional group classification. 
It generally cannot be expected to completely solve the group classification problems 
for such classes using existing methods. 
Hence the development of new tools for group classification of differential equations 
simultaneously with the improvement of well-known approaches still remains an attractive and challenging problem.  
Especially for the above complex classification problems, 
the whole framework of the algebraic method as described and extended in the present paper 
including the proposed algebraic method of finding equivalence groups seems to be most appealing.

\section*{Acknowledgements}

This research was supported by the Austrian Science Fund (FWF), projects P20632 (EDSCB and ROP) and P21335 (AB).

{\footnotesize

}


\begin{thebibliography}{10}
\providecommand{\url}[1]{\texttt{#1}}
\providecommand{\urlprefix}{URL }
\expandafter\ifx\csname urlstyle\endcsname\relax
  \providecommand{\doi}[1]{doi:\discretionary{}{}{}#1}\else
  \providecommand{\doi}{doi:\discretionary{}{}{}\begingroup
  \urlstyle{rm}\Url}\fi
\providecommand{\eprint}[2][]{\url{#2}}

\bibitem{akha91Ay}
Akhatov I.S., Gazizov R.K. and Ibragimov N.K., {Nonlocal symmetries. Heuristic
  approach}, \emph{J. Math. Sci.} \textbf{55} (1991), 1401--1450.

\bibitem{ande96Ay}
Anderson R.L., Baikov V.A., Gazizov R.K., Hereman W., Ibragimov N.H., Mahomed
  F.M., Meleshko S.V., Nucci M.C., Olver P.J., Sheftel' M.B., Turbiner A.V. and
  Vorob'ev E.M., \emph{{CRC handbook of Lie group analysis of differential
  equations. Vol. 3. New trends in theoretical developments and computational
  methods}}, CRC Press, Boca Raton, 1996.

\bibitem{basa01Ay}
Basarab-Horwath P., Lahno V. and Zhdanov R., {The structure of Lie algebras and
  the classification problem for partial differential equations}, \emph{Acta
  Appl. Math.} \textbf{69} (2001), 43--94.

\bibitem{bihl11By}
Bihlo A. and Popovych R.O., Lie symmetry analysis and exact solutions of the
  quasi-geostrophic two-layer problem, \emph{J. Math. Phys.} \textbf{52}
  (2011), 033103 (24 pages).

\bibitem{bihl11Cy}
Bihlo A. and Popovych R.O., Point symmetry group of the barotropic vorticity
  equation, in \emph{Proceedings of 5th Workshop {"}Group Analysis of
  Differential Equations \& Integrable Systems{"} (June 6-10, 2010, Protaras,
  Cyprus)}, 2011 pp. 15--27.

\bibitem{blum89Ay}
Bluman G. and Kumei S., \emph{{Symmetries and differential equations}},
  Springer, New York, 1989.

\bibitem{blum10Ay}
Bluman G.W., Cheviakov A.F. and Anco S.C., \emph{Application of symmetry
  methods to partial differential equations}, Springer, New York, 2010.

\bibitem{boro04Ay}
Borovskikh A.V., Group classification of the eikonal equations for a
  three-dimensional nonhomogeneous medium, \emph{Mat. Sb.} \textbf{195} (2004),
  23--64, in Russian; translation in {\it Sb. Math.}, 195, (2004), no. 3--4,
  479--520.

\bibitem{boro06Ay}
Borovskikh A.V., The two-dimensional eikonal equation, \emph{Siberian Math. J.}
  \textbf{47} (2006), 813--834.

\bibitem{carm00Ay}
Carminati J. and Vu K., Symbolic computation and differential equations: {L}ie
  symmetries, \emph{J. Symb. Comput.} \textbf{29} (2000), 95--116.

\bibitem{card11Ay}
Dos Santos Cardoso-Bihlo E.M., Bihlo A. and Popovych R.O., {Enhanced
  preliminary group classification of a class of generalized diffusion
  equations}, \emph{Commun. Nonlinear Sci. Numer. Simulat.} \textbf{16} (2011),
  3622--3638.

\bibitem{gand04Ay}
Gandarias M.L., Torrisi M. and Valenti A., Symmetry classification and optimal
  systems of a non-linear wave equation, \emph{Internat. J. Non-Linear Mech.}
  \textbf{39} (2004), 389--398.

\bibitem{hari93Ay}
Harin A.O., On a countabe-dimensional subalgebra of the equivalence algebra for
  equations $v_{tt}= f (x, v_x) v_{xx}+ g (x, v_x)$, \emph{J. Math. Phys.}
  \textbf{34} (1993), 3676--3682.

\bibitem{head93Ay}
Head A.K., {LIE}, a {PC} program for {L}ie analysis of differential equations,
  \emph{Comput. Phys. Comm.} \textbf{77} (1993), 241--248, (See also
  http://www.cmst.csiro.au/LIE/LIE.htm).

\bibitem{huan07Ay}
Huang D. and Ivanova N.M., Group analysis and exact solutions of a class of
  variable coefficient nonlinear telegraph equations, \emph{J. Math. Phys.}
  \textbf{48} (2007), 073507, 23 pp.

\bibitem{hydo00By}
Hydon P.E., How to construct the discrete symmetries of partial differential
  equations, \emph{Eur. J. Appl. Math.} \textbf{11} (2000), 515--527.

\bibitem{ibra00Ay}
Ibragimov N.H. and Khabirov S.V., Contact transformation group classification
  of nonlinear wave equations, \emph{Nonlin. Dyn.} \textbf{22} (2000), 61--71.

\bibitem{ibra91Ay}
Ibragimov N.H., Torrisi M. and Valenti A., {Preliminary group classification of
  equations $v_{tt}=f(x,v_x)v_{xx}+g(x,v_x)$}, \emph{J. Math. Phys.}
  \textbf{32} (1991), 2988--2995.

\bibitem{ibra04Ay}
Ibragimov N.H., Torrisi M. and Valenti A., Differential invariants of nonlinear
  equations $v_{tt}= f (x, v_x) v_{xx}+ g (x, v_x)$, \emph{Commun. Nonlinear
  Sci. Numer. Simul.} \textbf{9} (2004), 69--80.

\bibitem{ivan10Ay}
Ivanova N.M., Popovych R.O. and Sophocleous C., {Group analysis of variable
  coefficient diffusion-convection equations. I. Enhanced group
  classification}, \emph{Lobachevskii J. Math.} \textbf{31} (2010), 100--122.

\bibitem{jeff82Ay}
Jeffrey A., Acceleration wave propagation in hyperelastic rods of variable
  cross-section, \emph{Wave Motion} \textbf{4} (1982), 173--180.

\bibitem{khab09Ay}
Khabirov S.V., A property of the determining equations for an algebra in the
  group classification problem for wave equations, \emph{Sibirsk. Mat. Zh.}
  \textbf{50} (2009), 647--668, in Russian; translation in {\it Sib. Math. J.},
  50:515--532, 2009.

\bibitem{king91Ay}
Kingston J.G. and Sophocleous C., {On point transformations of a generalised
  Burgers equation}, \emph{Phys. Lett. A} \textbf{155} (1991), 15--19.

\bibitem{king98Ay}
Kingston J.G. and Sophocleous C., {On form-preserving point transformations of
  partial differential equations}, \emph{J. Phys. A} \textbf{31} (1998),
  1597--1619.

\bibitem{king01Ay}
Kingston J.G. and Sophocleous C., {Symmetries and form-preserving
  transformations of one-dimensional wave equations with dissipation},
  \emph{Int. J. Non-Lin. Mech.} \textbf{36} (2001), 987--997.

\bibitem{lagn02Ay}
Lagno V.I. and Samoilenko A.M., {Group Classification of nonlinear evolution
  equations: I. Invariance under semisimple local transformation groups},
  \emph{Differ. Equ.} \textbf{38} (2002), 384--391.

\bibitem{lahn06Ay}
Lahno V., Zhdanov R. and Magda O., {Group classification and exact solutions of
  nonlinear wave equations}, \emph{Acta Appl. Math.} \textbf{91} (2006),
  253--313.

\bibitem{lahn07Ay}
Lahno V.I. and Spichak S.V., {Group classification of quasilinear elliptic-type
  equations. I. Invariance with respect to Lie algebras with nontrivial Levi
  decomposition}, \emph{Ukrainian Math. J.} \textbf{59} (2007), 1719--1736.

\bibitem{lie81Ay}
Lie S., {\"Uber die Integration durch bestimmte Integrale von einer Klasse
  linearer partieller Differentialgleichungen}, \emph{Arch. for Math.}
  \textbf{6} (1881), 328--368, (Translation by N.H. Ibragimov: S. Lie, On
  Integration of a Class of Linear Partial Differential Equations by Means of
  Definite Integrals, {\it CRC Handbook of Lie Group Analysis of Differential
  Equations}, 2:473--508, 1994).

\bibitem{lie91Ay}
Lie S., \emph{{Vorlesungen \"uber Differentialgleichungen mit bekannten
  infinitesimalen Transformationen}}, B.G. Teubner, Leipzig, 1891.

\bibitem{lisl92Ay}
Lisle I.G., \emph{{Equivalence transformations for classes of differential
  equations}}, Ph.D. thesis, University of British Columbia, 1992.

\bibitem{mele94Ay}
Meleshko S.V., Group classification of equations of two-dimensional gas
  motions, \emph{Prikl. Mat. Mekh.} \textbf{58} (1994), 56--62, in Russian;
  translation in {\it J.~Appl. Math. Mech.}, 58:629--635.

\bibitem{ober97Ay}
Oberlack M., \emph{Invariant modeling in large-eddy simulation of turbulence},
  \emph{in: Annual research briefs}, Stanford University, 1997.

\bibitem{olve86Ay}
Olver P.J., \emph{{Application of Lie groups to differential equations}},
  Springer, New York, 2000.

\bibitem{oron86Ay}
Oron A. and Rosenau P., Some symmetries of the nonlinear heat and wave
  equations, \emph{Phys. Lett. A} \textbf{118} (1986), 172--176.

\bibitem{ovsi82Ay}
Ovsiannikov L.V., \emph{Group analysis of differential equations}, Acad. Press,
  New York, 1982.

\bibitem{ovsj62Ay}
Ovsjannikov L.V., \emph{Gruppovye svoistva differentsialnykh uravnenii.},
  Izdat. Sibirsk. Otdel. Akad. Nauk SSSR, Novosibirsk, 1962.

\bibitem{ovsi75Ay}
Ovsjannikov L.V. and Ibragimov N.H., {Group analysis of the differential
  equations of mechanics}, in \emph{General mechanics}, vol.~2, Moscow, pp.
  5--52, 1975. In Russian.

\bibitem{popo06Ay}
Popovych R.O., {Classification of admissible transformations of differential
  equations}, in \emph{Collection of Works of Institute of Mathematics},
  vol.~3, Kyiv, pp. 239--254, 2006.

\bibitem{popo10Cy}
Popovych R.O. and Bihlo A., {Symmetry preserving parameterization schemes},
  arXiv: 1010.3010v2, 36 pp., 2010.

\bibitem{popo03Ay}
Popovych R.O., Boyko V.M., Nesterenko M.O. and Lutfullin M.W., {Realizations of
  real low-dimensional Lie algebras}, \emph{J. Phys. A} \textbf{36} (2003),
  7337--7360, see arXiv:math-ph/0301029v7 for an extended and revised version.

\bibitem{popo04Ay}
Popovych R.O. and Ivanova N.M., New results on group classification of
  nonlinear diffusion--convection equations, \emph{J. Phys. A} \textbf{37}
  (2004), 7547--7565.

\bibitem{popo04By}
Popovych R.O., Ivanova N.M. and Eshraghi H., {Group classification of
  (1+1)-dimensional Schr\"odinger equations with potentials and power
  nonlinearities}, \emph{J. Math. Phys.} \textbf{45} (2004), 3049--3057.

\bibitem{popo10Ay}
Popovych R.O., Kunzinger M. and Eshraghi H., {Admissible transformations and
  normalized classes of nonlinear Schr\"odinger equations}, \emph{Acta Appl.
  Math.} \textbf{109} (2010), 315--359.

\bibitem{popo08Ay}
Popovych R.O., Kunzinger M. and Ivanova N.M., Conservation laws and potential
  symmetries of linear parabolic equations, \emph{Acta Appl. Math.}
  \textbf{100} (2008), 113--185.

\bibitem{popo10By}
Popovych R.O. and Vaneeva O.O., {More common errors in finding exact solutions
  of nonlinear differential equations: Part I}, \emph{Commun. Nonlinear Sci.
  Numer. Simul.} \textbf{15} (2010), 3887--3899, arXiv:0911.1848v2.

\bibitem{raza07By}
Razafindralandy D., Hamdouni A. and Oberlack M., {Analysis and development of
  subgrid turbulence models preserving the symmetry properties of the
  Navier--Stokes equations}, \emph{Eur. J. Mech. B/Fluids} \textbf{26} (2007),
  531--550.

\bibitem{song09Ay}
Song L. and Zhang H., {Preliminary group classification for the nonlinear wave
  equation $u_{tt} = f(x,u)u_{xx}+g(x,u)$}, \emph{Nonlinear Anal.} \textbf{70}
  (2009), 3512--3521.

\bibitem{vane07Ay}
Vaneeva O.O., Johnpillai A.G., Popovych R.O. and Sophocleous C., Enhanced group
  analysis and conservation laws of variable coefficient reaction-diffusion
  equations with power nonlinearities, \emph{J. Math. Anal. Appl.} \textbf{330}
  (2007), 1363--1386.

\bibitem{vane09Ay}
Vaneeva O.O., Popovych R.O. and Sophocleous C., {Enhanced group analysis and
  exact solutions of variable coefficient semilinear diffusion equations with a
  power source}, \emph{Acta Appl. Math.} \textbf{106} (2009), 1--46.

\bibitem{vu07Ay}
Vu K.T., Butcher J. and Carminati J., {Similarity solutions of partial
  differential equations using DESOLV}, \emph{Comput. Phys. Comm.} \textbf{176}
  (2007), 682--693.

\bibitem{wint92Ay}
Winternitz J.P. and Gazeau J.P., {Allowed transformations and symmetry classes
  of variable coefficient Korteweg-de Vries equations}, \emph{Phys. Lett. A}
  \textbf{167} (1992), 246--250.

\bibitem{witt04Ay}
Wittkopf A., \emph{Algorithms and implementations for differential
  elimination}, Ph.D. thesis, Simon Fraser University Burnaby, BC, Canada,
  2004.

\bibitem{zhda99Ay}
Zhdanov R.Z. and Lahno V.I., {Group classification of heat conductivity
  equations with a nonlinear source}, \emph{J. Phys. A} \textbf{32} (1999),
  7405--7418.

\end{thebibliography}
\end{document}